\documentclass[11pt]{article}

\usepackage{ifthen}  
\newif\ifExtAbs
 \ExtAbsfalse
\ifExtAbs
\else
\fi

\ifExtAbs
    \usepackage[margin=0.75in]{geometry}
    \usepackage[inline,shortlabels]{enumitem}
\else
    \usepackage[margin=1in]{geometry}
    \usepackage{enumitem}
\fi

\usepackage[utf8]{inputenc}
\usepackage[T1]{fontenc}
\usepackage{microtype}

\usepackage[english]{babel}

\usepackage{charter}
\usepackage{bbm}

\usepackage{tikz}
\usetikzlibrary{arrows.meta,calc,positioning,shadows}
\usepackage{pgfplots}
\pgfplotsset{compat = 1.18}

\usepackage{authblk}  

\usepackage{csquotes,xpatch}

\usepackage[nottoc]{tocbibind}
\usepackage{amsfonts}
\usepackage{amsmath}
\usepackage{amssymb}
\usepackage{amsthm}
\allowdisplaybreaks[2]
\usepackage{mathtools}
\usepackage{thmtools}
\usepackage{booktabs}
\usepackage{macro}

\usepackage{hyperref}
\hypersetup{colorlinks=true,
  breaklinks=true,
  linkcolor=[rgb]{0, 0.2, 0.5},
  citecolor=[rgb]{0, 0.35, 0},
  urlcolor=[rgb]{0, 0.2, 0.5},
pdfpagemode=UseOutlines}
\usepackage[nameinlink]{cleveref}
\crefmultiformat{equation}{eqs.~#2(#1)#3}{ and~#2(#1)#3}{, #2(#1)#3}{ and~#2(#1)#3}
\Crefmultiformat{equation}{Eqs.~#2(#1)#3}{ and~#2(#1)#3}{, #2(#1)#3}{ and~#2(#1)#3}
\crefmultiformat{theorem}{theorems.~#2(#1)#3}{ and~#2(#1)#3}{, #2(#1)#3}{ and~#2(#1)#3}
\Crefmultiformat{theorem}{Theorems.~#2(#1)#3}{ and~#2(#1)#3}{, #2(#1)#3}{ and~#2(#1)#3}
\crefmultiformat{proposition}{propositions.~#2(#1)#3}{ and~#2(#1)#3}{, #2(#1)#3}{ and~#2(#1)#3}
\Crefmultiformat{proposition}{Propositions.~#2(#1)#3}{ and~#2(#1)#3}{, #2(#1)#3}{ and~#2(#1)#3}
\crefmultiformat{corollary}{corollaries.~#2(#1)#3}{ and~#2(#1)#3}{, #2(#1)#3}{ and~#2(#1)#3}
\Crefmultiformat{corollary}{Corollaries.~#2(#1)#3}{ and~#2(#1)#3}{, #2(#1)#3}{ and~#2(#1)#3}
\crefmultiformat{lemma}{lemmas.~#2(#1)#3}{ and~#2(#1)#3}{, #2(#1)#3}{ and~#2(#1)#3}
\Crefmultiformat{lemma}{Lemmas.~#2(#1)#3}{ and~#2(#1)#3}{, #2(#1)#3}{ and~#2(#1)#3}
\crefmultiformat{conjecture}{conjectures.~#2(#1)#3}{ and~#2(#1)#3}{, #2(#1)#3}{ and~#2(#1)#3}
\Crefmultiformat{conjecture}{Conjectures.~#2(#1)#3}{ and~#2(#1)#3}{, #2(#1)#3}{ and~#2(#1)#3}
\crefmultiformat{result}{results.~#2(#1)#3}{ and~#2(#1)#3}{, #2(#1)#3}{ and~#2(#1)#3}
\Crefmultiformat{result}{Results.~#2(#1)#3}{ and~#2(#1)#3}{, #2(#1)#3}{ and~#2(#1)#3}
\crefmultiformat{fact}{facts.~#2(#1)#3}{ and~#2(#1)#3}{, #2(#1)#3}{ and~#2(#1)#3}
\Crefmultiformat{fact}{Facts.~#2(#1)#3}{ and~#2(#1)#3}{, #2(#1)#3}{ and~#2(#1)#3}
\crefmultiformat{question}{questions.~#2(#1)#3}{ and~#2(#1)#3}{, #2(#1)#3}{ and~#2(#1)#3}
\Crefmultiformat{question}{Questions.~#2(#1)#3}{ and~#2(#1)#3}{, #2(#1)#3}{ and~#2(#1)#3}
\crefmultiformat{definition}{definitions.~#2(#1)#3}{ and~#2(#1)#3}{, #2(#1)#3}{ and~#2(#1)#3}
\Crefmultiformat{definition}{Definitions.~#2(#1)#3}{ and~#2(#1)#3}{, #2(#1)#3}{ and~#2(#1)#3}
\crefmultiformat{example}{examples.~#2(#1)#3}{ and~#2(#1)#3}{, #2(#1)#3}{ and~#2(#1)#3}
\Crefmultiformat{example}{Examples.~#2(#1)#3}{ and~#2(#1)#3}{, #2(#1)#3}{ and~#2(#1)#3}
\crefmultiformat{remark}{remarks.~#2(#1)#3}{ and~#2(#1)#3}{, #2(#1)#3}{ and~#2(#1)#3}
\Crefmultiformat{remark}{Remarks.~#2(#1)#3}{ and~#2(#1)#3}{, #2(#1)#3}{ and~#2(#1)#3}

\theoremstyle{plain}
\newtheorem{theorem}{Theorem}
\newtheorem*{theorem*}{Theorem}
\newtheorem{proposition}[theorem]{Proposition}
\newtheorem*{proposition*}{Proposition}

\newtheorem*{corollary*}{Corollary}
\newtheorem{lemma}[theorem]{Lemma}
\newtheorem*{lemma*}{Lemma}
\newtheorem{conjecture}{Conjecture}
\newtheorem*{conjecture*}{Conjecture}

\newtheorem*{result*}{Result}
\newtheorem{fact}{Fact}
\newtheorem*{fact*}{Fact}

\newtheorem*{question*}{Question}
\theoremstyle{definition}
\newtheorem{definition}{Definition}
\newtheorem*{definition*}{Definition}
\newtheorem{example}{Example}
\newtheorem*{example*}{Example}
\theoremstyle{remark}
\newtheorem{remark}{Remark}
\newtheorem*{remark*}{Remark}

\newcommand{\val}{\mathrm{value}}
\allowdisplaybreaks

\crefname{namedthm}{Theorem}{Theorems}
\Crefname{namedthm}{Theorem}{Theorems}

\title{Keyless secrecy against bounded adversaries}

\ifExtAbs
    \author{}
\else
    \author{Anne Broadbent\thanks{anne.broadbent@uottawa.ca}~}
    \author{Upendra Kapshikar\thanks{ukapshik@uottawa.ca}}
    \affil{Department of Mathematics and Statistics, University of Ottawa, Canada}
    \author{Denis Rochette\thanks{denis.rochette@inria.fr}}
    \affil{CPHT, LIX, CNRS, Inria, École polytechnique, Institut Polytechnique de Paris, Palaiseau, France}
\fi

\date{}

\begin{document}

\maketitle

\ifExtAbs
\else
    \begin{abstract}
        We introduce a keyless coding/cryptographic primitive that asks for two guarantees at once: the receiver is never fooled into accepting a message other than the one sent, and the adversary learns nothing about the message unless the receiver aborts.
        Neither the sender nor the receiver holds a secret key, and no computational hardness is assumed.
        The only restriction is on the adversary's online computation: the receiver aborts if nothing arrives by a fixed deadline, so the adversary must forward something before it, and her map in that window is computed by a circuit of size (or depth) at most $p$; before and after, she is unbounded.
        For every polynomial~$p$ we construct such an efficient quantum scheme, encoding $k$-bit messages into $n = O(k)$ qubits with circuits of size $\poly(n,p)$.
        No classical scheme achieves this type of \emph{everlasting security}, at any choice of parameters.
        
        Our techniques also resolve open questions about universal tamper detection against families restricted in cardinality rather than in circuit size.
        We settle a question of Broadbent, Kapshikar and Rochette on relaxed tamper detection.
        As a corollary, we obtain the first efficient non-malleable code secure against global quantum tampering, with no split-state restriction, whereas the current constructions in the literature require the codeword to be split into non-communicating shares.
    \end{abstract}
    \newpage 
    \tableofcontents
    \newpage 
\fi

\ifExtAbs
    \ifExtAbs
    \else
    \section{Introduction}
\fi

Consider the following scenario: Alice wants to send a message to Bob. 
They have access to a channel. 
The channel is public and insecure. 
An adversary, Eve, can monitor what is being sent through the channel.
She can read it and even tamper with the messages sent by Alice.
Alice and Bob have access to randomness, but the randomness is not private. 
Eve too can read this randomness. 
A typical experiment in this model can be set up as follows:
Fix some message set $\M$.
\begin{enumerate}
\setcounter{enumi}{-1}
    \item Alice and Bob get their public random string $\pp$.
    \item \emph{Encoding:} Alice chooses $m \in \M$ to be sent to Bob. 
    She encodes it using an encoder:
    $c = \enc(\pp,m)$ and sends it to Bob.
    \item \emph{Attack Phase 1:} Eve intercepts $c$ and creates a two-part output $(\widehat{c},e) = f_1(\pp, c)$. 
    Eve keeps $e$ for herself and sends $\widehat{c}$ to Bob.
    \item \emph{Decoding:} Bob decodes $\widehat{c}$ to get $\widehat{m} = \dec(\pp,\widehat{c})$ where $\widehat{m} \in \M \bigcup \lbrace \perp \rbrace$, with $\perp$ indicating the event that Bob aborts, detecting Eve's presence.
    \item \emph{Attack Phase 2:} Eve post-processes $e$ to get her candidate guess for $m$, given by $m_e= f_2(\pp,e)$.
\end{enumerate}

We will allow the codewords, the modified codewords,  and Eve's retained information to be quantum, at which point they will be assigned their own registers ($C, \widehat{C}$, and $E$, respectively). {The public random string $\pp$, and message $m$, and the decoder output $\widehat{m}$ remain classical.}

\ifExtAbs
    For a parameter $p$, let $\mathcal{F}_1(p)$ denote the class of quantum circuits of size at most $p$ over a fixed universal gate set. We restrict Eve's Phase~1 map $f_1$ to $\mathcal{F}_1(p)$, while her Phase~2 map $f_2$ is unrestricted. We require the scheme to satisfy the following three properties, where $\varepsilon$ denotes a negligible error bound in the security guarantees:
    \begin{description}
        \item[\normalfont(C) Completeness.]
        In the absence of tampering, Bob always recovers the message sent by Alice.
    
        \item[\normalfont(D) Tamper detection.]
        For every Phase~1 attack $f_1\in\mathcal{F}_1(p)$, Bob either recovers the message sent by Alice or aborts, except with probability at most $\varepsilon$.
    
        \item[\normalfont(ES) Everlasting secrecy.]
        For every Phase~1 attack $f_1\in\mathcal{F}_1(p)$, every unrestricted Phase~2 attack $f_2$, and every pair of messages $m_0,m_1\in\M$, Eve's views corresponding to Bob not aborting are $\varepsilon$-indistinguishable.
    \end{description}
\else
    From such a scheme, Alice and Bob typically want two guarantees:
    \begin{enumerate}
        \item Detection (D): Bob is not fooled into accepting an incorrect message. 
        \[ \Pr \left( \widehat{m} \neq m \text{ and }\widehat{m} \neq \perp \right) \leq \varepsilon, \qquad \text{ for small small enough parameter $\varepsilon$}. \]
    \item Everlasting Secrecy (ES): the probability that Eve learns $m$ and Bob does not abort is small\footnote{In this work, we use a distinguisher-based security notion, which is stronger than this. Informally, distinguisher-based security shows that for any pair of messages $(m_1,m_2)$, the adversary's view throughout the experiment is indistinguishable.}.
    \end{enumerate}
\fi

Detection and secrecy, in one form or another, are what most of cryptography is built to provide, and what separates one primitive from another is the strength of the guarantees it provides and the resources it assumes.
Symmetric-key encryption provides unconditional secrecy, and a message authentication code provides detection, but both rely on a secret key that Alice and Bob already share, which Eve does not.
Public-key encryption removes the necessity of pre-shared private keys at the cost of some computational assumption, typically the hardness of a certain computational problem, such as factoring, discrete logarithm or learning with errors.
The security of such a scheme lasts only as long as the assumption does. 
Moreover, a ciphertext recorded today may be opened by a better algorithm discovered in the future. 
Quantum key distribution~\cite{BB84} comes closest to assuming nothing: it delivers unconditional guarantees without a long shared key, but it requires an authenticated classical channel and several rounds of interaction.
On the other end of the picture, the bounded-storage model~\cite{Mau92,CM97} buys everlasting secrecy from a bound on Eve's memory, while tamper-detection and non-malleable codes~\cite{DPW18,JW15} give keyless detection guarantees while restricting Eve to a fixed class of tampering families.

Each of these primitives fixes a resource: a shared secret, a hardness assumption, an authenticated channel, a memory bound, a restricted tampering class, and provides detection, secrecy, or both, in exchange.
Consider the keyless scenario we described above. 
The scenario above fixes none of them.
In this work, we consider what can be achieved there.
Concretely,
\begin{center}
\emph{In the scenario above, where Alice and Bob share no secret and make no computational hardness \\ assumptions, what is the weakest restriction on Eve under which detection and\\ secrecy can be achieved at once?}
\end{center}

\ifExtAbs
\else
    Some restrictions on Eve are necessary, as we now explain.
    Suppose $f_1$ is unrestricted.
    The scheme is public and so is $\pp$.
    So Eve can run the decoder herself.
    She computes $m = \dec(\pp,c)$, encodes it afresh as $\widehat{c} = \enc(\pp,m)$, and forwards
    $\widehat{c}$ to Bob.
    She keeps $e = m$.
    Bob receives a perfectly legitimate encoding of the message Alice sent, so he accepts and outputs
    $\widehat{m} = m$.
    Bob has no indication that anything occurred, and secrecy fails with probability $1$. 
    
    This obstruction is not special to this attack or to the scheme.
    Bob and Eve stand in the same relation to the codeword.
    Both hold a codeword, both know $\pp$, and both know the scheme.
    Anything Bob can do to recover $m$, Eve can do as well.
    A scheme can therefore only work if something separates them.
    Symmetric-key encryption separates them with a secret, and public-key encryption with a trapdoor.
    In the keyless setting, neither is available, so the separation must come from elsewhere.
    We separate them by computational power.
    We assume that Eve's phase-1 map~$f_1$ is computed by a circuit of size at most~$p$.
    This is the only assumption we make.
    Note that it is not a hardness assumption: nothing is conjectured to be difficult, and no problem is assumed to be intractable.
    Neither is this an assumption of any cryptographic structure, such as the existence of one-way functions or pseudo-random unitaries.
    It is a bound on a resource rather than a hardness assumption, and in that respect its closest relative is the bounded-storage model~\cite{Mau92,CM97} and its quantum counterpart~\cite{DFSS05}, where the constrained resource is memory (or space).
    A storage bound is a claim about Eve's apparatus, and Alice and Bob can neither verify nor necessarily enforce it.
    It also weakens on its own: as we develop better quantum memory, the bounded-storage assumption may no longer hold.
    A timing bound is enforced by the protocol itself.
    Bob aborts if nothing has arrived by time~$t$, so Eve has no option but to act inside that window, which is phase 1 of our model.
    Such a \emph{timed deadline} motivates a bound on the depth or size of the circuit: in time $t$, Eve can apply only so many layers of gates. Since a depth-$p$ circuit on $n$ qubits has size $O(pn)$, and our results hold for every polynomial bound, stating them in terms of size covers the depth-bounded adversary as well.

    As mentioned before, we leave phase 2 unrestricted, as the timing motivation suggests. The clock runs only while Eve holds the codeword; once she has forwarded $\widehat{c}$, she may keep $e$ for as long as she wishes and apply an arbitrary $f_2$ to it, with no bound on size, depth, or memory.
    The assumption binds her only during phase 1, while the conclusion that $e$ reveals nothing about $m$ holds forever after.
    Such secrecy is often referred to as the \emph{everlasting secrecy}.
    
    With the model in place, we can now state the question concretely:
    
    \noindent 
    Fix a universal gate set, and for a parameter $p$ let $\mathcal{F}_1(p)$ be the class of circuits\footnote{We impose no uniformity: Eve's precomputation is free, so she may choose her circuit as an arbitrary function of $\pp$.} over that gate set with size at most $p$.\footnote{Since a depth-$p$ circuit on $n$ wires has $O(pn)$ gates, bounds stated for size cover depth as well.}

    \begin{center}
    \emph{For every polynomial $p$, does there exist a scheme of size $q=\poly(p)$ achieving detection and secrecy against all of $\mathcal{F}_1(p)$?}
    \end{center}
    
    Note that the order of quantifiers is forced.
    No single scheme can be secure against $\mathcal{F}_1(p)$ for every $p$, since the re-encoding attack costs one evaluation of $\dec$ followed by one of $\enc$, and a fixed scheme fixes that cost.
    Thus, necessarily $q > p$: the honest parties must out-compute the adversary, which is what the time deadline of phase 1 supplies.
    
    One might expect the gap to be enough on its own, since Eve cannot afford to run $\dec$ at all. 
    It is worth noting that out-computing Eve (that is assuming $q>p$) in Phase 1 is necessary, but it is not sufficient. We support this with two examples.
    
    Suppose the scheme was classical, that is, the encoder and decoder are classical randomized maps.
    No classical scheme can achieve what we need for reasons similar to the impossibility of unclonable encryption with a classical scheme~\cite{BL20}.
    Eve copies $c$ into $e$ and forwards the original untouched.
    Bob receives exactly what Alice sent, so he accepts and outputs $\widehat{m} = m$; detection is not violated.
    After the deadline, Eve computes $\dec(\pp,e) = m$ at her leisure, since phase 2 is unbounded.
    Copying $n$ bits costs $n$ gates, so this attack lies in $\mathcal{F}_1(p)$ for every $p \geq n$.
    Thus, no classical scheme achieves secrecy, at any $p$ and any $q$, however large the gap between them.
    
    One might wonder whether copying is the entire obstruction, and whether making the codeword quantum, on top of $q > p$, would be enough.
    Hence, in the second example we argue that it is not.
    Consider conjugate coding, the encoding behind BB84 and Wiesner's quantum money~\cite{Wie83}, which sends $\bigotimes_i H^{\theta_i}\ket{x_i}$ for a classical string $x$ and a basis vector $\theta$.
    Its codewords are unclonable, but unclonability is a property of states whose preparation basis is unknown to the adversary, and here $\theta$ is part of $\pp$.
    Eve applies a CNOT only for qubits with $\theta_i=0$ into a fresh ancilla.
    With very high probability over $\pp$, there are at least $\sqrt{n}$ qubits encoded in computational basis, that is, $\theta_i=0$. 
    On those wires $\ket{x_i}$ is a basis state, so $\ket{x_i}\ket{0} \mapsto \ket{x_i}\ket{x_i}$: the forwarded qubit is untouched, Bob accepts, and Eve keeps a copy of $x_i$ for Phase 2.
    Nothing was cloned outside a known basis and nothing was measured.
    The attack costs one gate per wire against a decoder of $q=\Theta(n)$ gates, so it sits well inside $q > p$; restricted to $\sqrt{n}$ wires it costs $p=\sqrt{n}$ gates and yields $\sqrt{n}$ bits, hence the scheme is not indistinguishable secure.
\fi

\ifExtAbs
\else
    We now describe the primitive we want to construct. (see \Cref{sec:our_primitive} for formal definitions).
    We say that $(\enc,\dec)$ is a \emph{$(p,q)$-tamper detection scheme} if $\enc$ and $\dec$ have size $q$ as quantum circuits and the following three clauses hold for a negligible $\varepsilon$.
    Let $\pp$ be the public random string.
    \begin{description}
    \item[\normalfont(C) Completeness.] For every $\pp$ and every $m \in \M$, an untampered codeword decodes correctly: $\dec(\pp,\enc(\pp,m)) = m$ with probability $1$.
    \item[\normalfont(D) Tamper detection.] For every $f_1 \in \mathcal{F}_1(p)$ and every $m$, $\Pr[\widehat{m} \notin \{m,\perp\}] \leq \varepsilon$.
    \item[\normalfont(ES) Everlasting secrecy.] For every $f_1 \in \mathcal{F}_1(p)$, every (unbounded) $f_2$, and every pair of messages $m_0,m_1 \in \M$, weighted on Bob not aborting, Eve's views in the experiments with messages $m_0$ and $m_1$ are $\varepsilon$-indistinguishable.
    \end{description}
\fi

\subsection{Our results}\label{sec:results}

Throughout, $k$ denotes the message length, $n = \gamma k$ the number of physical qubits, $d = 2^n$, and $\gamma$ is called the expansion factor of the scheme.
Let $\M = \{0,1\}^k$ and split the $n$ qubits into a bipartition as $A \bigsqcup B$ with $\dim A = 2^k$ and $\dim B = 2^{n-k}$. 

\ifExtAbs
    The encoding is the one introduced by~\cite{BKR26}. For a unitary $U \in \mathcal{U}(d)$, described by the public parameter $\pp$, define
    \[
      \Enc_U(m)
      = U\left(\ket{m}_A \otimes \ket{0^{n-k}}_B\right),
      \qquad m\in\M.
    \]
    To decode, $\Dec_U$ applies $U^{*}$, measures $B$ in the computational basis and aborts unless the outcome is $0^{n-k}$; otherwise, it measures $A$ and returns the result. Thus, $\Dec_U(\Enc_U(m))=m$ with probability~$1$ for every $U$ and every $m$. Our results differ in the distribution from which $U$ is drawn.
\else
    The encoding is the one introduced by Broadbent, Kapshikar and Rochette~\cite{BKR26}. 
    For a unitary $U \in \mathcal{U}(d)$, described via the public parameter $\pp$, set
    \[
      \Enc_U(m) = U\left(\ket{m}_A \otimes \ket{0^{n-k}}_B\right), \qquad \text{for } m \in \M,
    \]
    and let $\Dec_U$ be the measurement $\{\Pi_t\}_{t \in \M} \cup \{\Pi_\perp\}$ given by
    \[
      \Pi_t = U\big(\proj{t}_A \otimes \proj{0^{n-k}}_B\big)U^{*}, \qquad
      \Pi_\perp = \id - \sum_{t \in \M} \Pi_t .
    \]
    Equivalently, $\Dec_U$ applies $U^{*}$, measures $B$ in the computational basis, aborts if the outcome is nonzero, and otherwise measures $A$ and returns the result.
    Completeness of the scheme is immediate: $\Dec_U(\Enc_U(m)) = m$ with probability $1$ for every $U$ and every $m$.
    What differs across our results is the distribution from which $U$ is drawn.
\fi
For \Cref{thm:B} and \Cref{thm:C}, $U$ is Haar random, and the statements are existential: they assert that all but a negligible fraction of unitaries work.
For \Cref{thm:A}, $U$ is drawn from an $\varepsilon$-approximate unitary $t$-design for $t = \poly(n,p)$, which is samplable by a circuit of size $\poly(n,p)$~\cite{MPSY24} and is what makes $\Setup$, $\Enc$ and $\Dec$ efficient.
Completeness holds for every unitary in the support, so no clause is lost in the substitution.

\paragraph{Main result.}
Our central theorem efficiently constructs a scheme against every polynomial-sized adversary.
Here $q = q_{\enc} + q_{\dec}$ denotes the combined size of the encoder and decoder.

\ifExtAbs
    \begin{namedthm}[$(p,q)$-tamper detection]\label{thm:A}
\else
    \begin{namedthm}[$(p,q)$-tamper detection; \Cref{thm:pq-main}]\label{thm:A}
\fi
Fix a constant $\gamma > 2$.
For every polynomial $p$ there is a polynomial $q = \poly(n,p)$ such that the encoding above, with $U$ drawn from an $\eps$-approximate $2\ell$-design of order
\[
  \ell = \big\lceil k + \log_2|\mathcal{F}_1(p)| + n \big\rceil,
  \qquad \log_2(1/\eps) = \poly(n,p),
\]
is a $(p,q)$-tamper detection scheme.
\end{namedthm}

\ifExtAbs
    To our knowledge, this is the first keyless coding scheme whose guarantee combines detection with secrecy against a tampering adversary.
\else
    To our knowledge, this is the first keyless coding scheme whose guarantee combines detection with secrecy against a tampering adversary.
    As mentioned before, keyed constructions such as authenticated encryption achieve both, but from a shared secret, and by the copying attack above, no classical scheme achieves both at any parameter setting.
    It is complete, both \textup{(D)} and \textup{(ES)} hold with error $2^{-\Omega(n)}$, and $\Setup$, $\enc$, $\dec$ are implementable by circuits of size $q$, necessarily with $q > p$.
\fi

\paragraph{Consequences for cardinality-bounded families.}
\Cref{thm:B,thm:C} concern a different object, so we set out the definitions before stating them.
Here the adversary is a single channel $\Phi$ drawn from a family $\Fadv$ fixed in advance, acting on the whole codeword.
Unlike $\mathcal{F}_1(p)$, the family is not constrained computationally: it may contain channels of arbitrary circuit size, and is restricted only in cardinality.
Nor does the adversary retain a register, so there is no phase 2 and no secrecy requirement; what is asked is only that Bob not be deceived.

\ifExtAbs
    \begin{definition}[tamper detection \cite{BK23, BB23, BKR26}]
        Let $\Fadv$ be a family of channels on $\CC^d$.
        A code is $\eps$-\emph{tamper detecting} against $\Fadv$ if, for every message $m$ and every nontrivial $\Phi\in\Fadv$, the decoder aborts with probability at least $1-\eps$ after $\Phi$ acts on the codeword.
        It is $\eps$-\emph{relaxed tamper detecting} if, except with probability $\eps$, the decoder either outputs the original message $m$ or aborts.
        When the code depends on a random unitary $U$, we say that the property holds \emph{with overwhelming probability} if it holds for a $1-\negl(k)$ fraction of $U$.
    \end{definition}
\else
    \begin{definition}[tamper detection \cite{BK23, BB23, BKR26}]\label{def:td-family}
    Let $\Fadv$ be a family of channels on $\CC^{d}$ and let $\eps > 0$.
    A code $(\enc,\dec)$ is $\eps$-\emph{tamper detecting} against $\Fadv$ if for every $m \in \M$ and every $\Phi \in \Fadv$ acting nontrivially on the code space,
    \[
      \Pr\big[\dec(\Phi(\enc(m))) = \perp\big] \ \ge\ 1 - \eps,
    \]
    and $\eps$-\emph{relaxed tamper detecting} against $\Fadv$ if
    \[
      \Pr\big[\dec(\Phi(\enc(m))) \in \{m,\perp\}\big] \ \ge\ 1 - \eps .
    \]
    When the code depends on a unitary $U$, we say the property holds \emph{with overwhelming probability} if it holds for a $1 - \negl(k)$ fraction of $U$.
    \end{definition}
\fi

\ifExtAbs
    The relaxed notion allows the decoder to return either the original message or $\perp$, and is the natural variant for which a cardinality bound alone can suffice. Two parameters appearing in earlier work are the \emph{Kraus rank} of the tampering channel and its \emph{entanglement fidelity} $\Fe(\Phi)$, which measures closeness to the identity. Some separation from the identity is necessary, since the identity channel cannot be detected, whereas a Kraus-rank bound is an additional restriction that we remove.

    In our previous work~\cite{BKR26}, we studied the same encoding against families $\Fadv$ bounded in cardinality rather than circuit size. Our main result there assumed both $|\Fadv|\le 2^{d^\alpha}$ and a Kraus-rank bound. We conjectured that cardinality alone suffices for relaxed detection throughout $\alpha<1$, and asked whether the rank assumption could be removed for full detection. The present work resolves both questions.
\else
    The relaxed notion permits the decoder to return the correct message even when the codeword was tampered with, and asks only that it never return a different one.
    For classical tamper detection code, a constant tampering function sends every codeword to a fixed string, which decodes to some fixed message, so relaxed detection fails outright for all but one message.
    Quantumly it is the natural weakening, and it is the one for which the cardinality bound alone turns out to suffice.
    
    Two further quantities appear in the hypotheses of earlier works.
    The \emph{Kraus rank} of $\Phi$ is the least number of operators in a Kraus representation $\Phi(\rho) = \sum_i K_i \rho K_i^{*}$, and measures how much ancilla the channel uses.
    The \emph{entanglement fidelity} $\Fe(\Phi) \in [0,1]$ measures how close $\Phi$ is to the identity, with $\Fe(\Phi) = 1$ exactly when $\Phi = \id$.
    Some separation from the identity is unavoidable in any detection statement, since a channel that does nothing cannot be detected; a bound on Kraus rank, by contrast, is a restriction on the adversary that we remove in this work.

    Broadbent, Kapshikar and Rochette~\cite{BKR26} studied the same encoding against a differently restricted adversary: a single channel drawn from a family $\Fadv$ that is bounded in cardinality rather than in circuit size.
    Most of our analysis transfers to their setting, because it bounds the behaviour of an arbitrary channel on the encoding and never uses the fact that Eve is a small circuit.
    Their main theorem constructs tamper-detection codes under two hypotheses: a cardinality bound $|\Fadv| \le 2^{d^{\alpha}}$, and a Kraus-rank bound $\rank(\Phi) \le d^{1-\delta}$, coupled by $2\alpha < \delta$.
    They conjecture that the cardinality bound alone suffices for \emph{relaxed} detection, in which the decoder is required only never to return a message other than the one sent, throughout the range $\alpha < 1$; and they ask whether the rank hypothesis can be removed from the full statement.
    Our methods settle both questions.
\fi

\ifExtAbs
    \begin{namedthm}[universal relaxed tamper detection]\label{thm:B}
\else
    \begin{namedthm}[universal relaxed tamper detection; \Cref{thm:utd}]\label{thm:B}
\fi
Fix $\alpha < 1$ and $\gamma > \tfrac{2}{1-\alpha}$, and let $\Fadv$ be any family of channels on $\CC^{d}$ with $|\Fadv| \le 2^{d^{\alpha}}$.
For all sufficiently large $n$, a Haar-random $U$ yields a $2^{-k}$-relaxed-tamper-detecting code against $\Fadv$ with probability $1 - \negl(k)$.
\end{namedthm}

This proves the conjecture of~\cite{BKR26} in its full range.
It was previously known only for structured families, such as replacement and measure-and-prepare channels, where the specific form of the channel can be exploited directly.

\ifExtAbs
    \begin{namedthm}[rank-free tamper detection]\label{thm:C}
\else
    \begin{namedthm}[rank-free tamper detection; \Cref{thm:fullTDfinal}]\label{thm:C}
\fi
Fix $\alpha < \tfrac12$ and $\delta > 0$, let $|\Fadv| \le 2^{d^{\alpha}}$ with $\Fe(\Phi) \le d^{-\delta}$ for every $\Phi \in \Fadv$, and let $\gamma > \max\{\tfrac{2}{1-\alpha}, \tfrac1\delta, \tfrac{1}{1-2\alpha}\}$.
For all sufficiently large $n$, a Haar-random $U$ yields a $\Theta(2^{-k})$-tamper-detecting code against $\Fadv$ with probability $1 - \negl(k)$.
\end{namedthm}

Against the main theorem of~\cite{BKR26} the change is that the Kraus-rank hypothesis is deleted, and with it the coupling $2\alpha < \delta$ between the cardinality and rank parameters.
The entanglement-fidelity hypothesis remains, and unlike the rank hypothesis it is not a bottleneck of proof-technique: the identity channel has $X_{ss} = 1$ and is undetectable by any code, so some separation from the identity is necessary.
We do not improve the cardinality reached for full detection; both~\cite{BKR26} and \Cref{thm:C} reach till $\alpha < \tfrac12$.

\paragraph{A non-malleable code against global quantum tampering.}
\ifExtAbs
    Non-malleability allows the decoder to output an unrelated message, provided that this output is independent of the encoded message. Tamper detection is stronger: except with negligible probability, the decoder may only return the original message or $\perp$. Hence, \Cref{thm:A} immediately yields a non-malleable code.

    Existing quantum constructions are primarily in the split-state setting~\cite{ABW09,AM17,ABJ22,BGJR23,BB23}, while classical constructions against global tampering rely on a common reference string or computational assumptions~\cite{FMVW16,BDKLM19,DKP21}. To our knowledge, \Cref{thm:A} gives the first efficient non-malleable code against a global quantum tampering class with no split-state restriction, no shared key, and no computational hardness assumption.
\else
    A code is $\eps$-non-malleable against a family $\mathcal{F}$ if for every $f_1 \in \mathcal{F}$ there is a distribution $D_{f_1}$ over $\M \cup \{\same\}$, depending on $f_1$ and $\pp$ but not on the encoded message, such that for every $m$ the distribution of $\widehat{m}$ is within $\eps$ of $D_{f_1}$, with $\same$ read as $m$.
    Detection is the stronger requirement: it forces $D_{f_1}$ to be supported on $\{\same,\perp\}$, so the decoder never returns an unrelated message rather than merely one uncorrelated with the input.
    In other words, detection is the stronger of the two guarantees a coding scheme can offer under tampering: where non-malleability permits the decoder to return an unrelated message provided it is independent of the input, (D) forbids any output other than $m$ or $\perp$.
    \Cref{thm:A} therefore yields a non-malleable code as a corollary, and it does so in a regime the existing quantum constructions do not reach.
    The quantum non-malleability line~\cite{ABW09,AM17,ABJ22,BGJR23,BB23} works in the split-state model, where the codeword is divided into registers tampered independently, and the classical constructions against global tampering~\cite{FMVW16,BDKLM19,DKP21} rely on a common reference string or a computational assumption.
    To our knowledge the following is the first efficient non-malleable code secure against a global quantum tampering class, with no split-state restriction, no shared key, and no computational hardness assumption.
\fi

\begin{namedthm}[non-malleability]\label{cor:nm}
The scheme of \Cref{thm:A} is a non-malleable code for classical messages against every $f_1 \in \mathcal{F}_1(p)$, with error $2^{-\Omega(n)}$ and constant rate.
\end{namedthm}
\ifExtAbs
\else
    \begin{proof}
        Since tamper detection is stronger than non-malleability, the proof follows directly. See \cite{BK23} for more details.
    \end{proof}
\fi

Both clauses (D) and (ES) are useful in establishing non-malleability of the scheme.
An adversary may abort selectively on one message without ever causing a wrong decoding, so (D) alone leaves the abort probability free to depend on $m$, and no simulator blind to $m$ could then reproduce the real distribution.
What rules this out is (ES): if the accept probabilities differed noticeably between $m_0$ and $m_1$, Eve would distinguish them by watching whether Bob aborts.

\ifExtAbs
\else
    \begin{table}[ht]
    \centering
    \small
    \begin{tabular}{@{}llll@{}}
    \toprule
    Result & Adversary & Extra hypothesis & Guarantee \\
    \midrule
    \Cref{thm:A} &
    \begin{tabular}[t]{@{}l@{}}
    $\mathcal{F}_1$: circuit size $\le p$ \\
    $\mathcal{F}_2$: unbounded
    \end{tabular}
    & none & (C)+(D)+(ES), efficient \\[1.5em]
    \Cref{thm:B} & $|\Fadv| \le 2^{d^{\alpha}}$, $\alpha < 1$ & none & relaxed detection \\[0.5em]
    \Cref{thm:C} & $|\Fadv| \le 2^{d^{\alpha}}$, $\alpha < \tfrac12$ & $\Fe \le d^{-\delta}$ & full detection \\[0.5em]
    \Cref{cor:nm} & circuit size $\le p$ & none & non-malleability, efficient \\[0.5em]
    \addlinespace
    \cite[Thm.~4]{BKR26} & $|\Fadv| \le 2^{d^{\alpha}}$, $2\alpha < \delta$ & $\rank \le d^{1-\delta}$, $\Fe \le 2d^{-\delta/2}$ & full detection \\[0.5em]
    \bottomrule
    \end{tabular}
    \caption{Summary of results. Theorems~B and~C are existence statements about Haar-random unitaries; only the circuit-size results are efficient.}
    \label{tab:summary}
    \end{table}
    \paragraph{Revocable encryption.}
Consider the following situation: Alice sends a ciphertext and then changes her mind: she asks for it back, and wants a guarantee that the recipient kept nothing.
No classical scheme can give this, for the same reason that none achieves \textup{(ES)} or \textup{unclonable encryption}: the recipient copies the ciphertext, returns the original untouched, and decrypts his copy at leisure.
Quantumly the guarantee is possible using a keyed scheme, known as revocable timed-release encryption~\cite{Unr15}.

Three things are asked of such a scheme.
A recipient who keeps the ciphertext long enough can read it, so what Alice sends is genuinely useful; we call this \emph{release}, and it fixes a depth $q$ for the decoder.
A recipient who returns the ciphertext honestly passes Alice's check (\emph{revocation completeness}).
And if what he returns passes the check, then whatever he kept is independent of the message, even if he becomes computationally unbounded afterwards (\emph{revocable hiding}).

Our model supplies all three by using proof technique very similar to those described in earlier theorems.
A \emph{holder} is exactly a two-phase adversary: the phase-1 map is everything he does while the ciphertext is in his hands, and the unbounded phase-2 map is what he does after returning it, so revocable hiding is everlasting in the sense of \textup{(ES)}.
Alice's check is the decoder itself, $\Ver:=\Dec_U$, accepting when the outcome is a message and rejecting when it is $\perp$; she needs no secret and no record of which message she sent.
Revocable hiding is then the analogue of \textup{(ES)} under the map that forgets \emph{which} message the decoder returned and records only \emph{whether} it aborted, so it follows by data processing.
We state the result in terms of depth rather than size, since what Alice reasons about is elapsed time and depth is what a timing deadline bounds, with the holder's width $w$ as a separate parameter.
This is where the absence of a Kraus-rank factor is used: a holder with $w$ ancillas induces a channel of Kraus rank up to $2^{w}$, and a bound carrying $\rank(\Phi)^{\ell}$ would be vacuous for all but logarithmic $w$.

\begin{namedthm}[revocable encryption; \Cref{thm:rev}]\label{cor:rev}
Fix a polynomial $p$ and a width bound $w = \poly(n)$.
The scheme of \Cref{thm:A}, with $\Ver := \Dec_U$ accepting if and only if the outcome lies in $\M$, is a $(p,q)$-revocable keyless encryption scheme against holders of depth at most $p$ and width at most $w$, with error $2^{-\Omega(n)}$.
\end{namedthm}

\fi

\ifExtAbs
\else
    \subsection{Related work}\label{sec:related}
    
    \paragraph{Tamper detection and manipulation detection.}
    As mentioned before, in the relaxed version tamper detection asks that the decoder to either recover the message or abort.
    Classically it is unachievable against unrestricted tampering, and Jafargholi and Wichs~\cite{JW15} identified what a family must avoid for it to become possible: the constant functions, which fool the decoder into accepting a fixed codeword, and the identity, which no code can detect.
    The same two obstructions appeared in early quantum works~\cite{BK23}.
    Algebraic manipulation detection codes~\cite{CDF+08,CPX15} considers a related problem with a shared secret and an algebraic tampering class, and their quantum descendant is the Pauli manipulation detection codes of Bergamaschi~\cite{BK23,Ber24}.
    The quantum line of literature begins with unitary tampering~\cite{BK23}, continues through the split-state setting~\cite{Ber24,BB23}, and reaches general channels (CPTP maps) with Broadbent, Kapshikar and Rochette~\cite{BKR26}.
    \Cref{tab:compare} places our results among these.
    What changes with \Cref{thm:B,thm:C} is that no structural hypothesis on the tampering survives: not a basis, not a split, not a bound on Kraus rank, only cardinality and, for full detection, distance from the identity, as one expects.
    
    \begin{table}[ht]
    \centering
    \small
    \begin{tabular}{@{}llccc l@{}}
    \toprule
    & & \multicolumn{3}{c}{Restriction} & \\
    \cmidrule(lr){3-5}
    & Tampering class & size ($d=2^n$) & far from $\id$ & far from const. & Detection \\
    \midrule
    \cite{JW15} & classical functions & $2^{d^{\alpha}}$, $\alpha<1$ & required & required & strong \\
    \cite{CDF+08} & classical functions & $2^{n}$ & required & for free & strong  \\
    \cite{Ber24} & Pauli & $4^{n}$ & for free & for free & strong  \\
    \cite{BK23} & unitary & $2^{d^{\alpha}}$, $\alpha<\tfrac14$ & required & for free & strong \\
    \cite{BB23} & split-state channels$^{a}$ & N.A. & required & entangl. bound$^{b}$ & relaxed \\
    \cite{BKR26} & general channels & $2^{d^{\alpha}}$, $\alpha<\tfrac12$ & required$^{c}$ & required & strong \\
    \addlinespace
    \Cref{thm:B} & general channels & $2^{d^{\alpha}}$, $\alpha<1$ & not required & not required & relaxed \\
    \Cref{thm:C} & general channels & $2^{d^{\alpha}}$, $\alpha<\tfrac12$ & $\Fe \le d^{-\delta}$, $\delta>0$ & not required & strong \\
    \bottomrule
    \end{tabular}
    
    \smallskip
    {\footnotesize
    $^{a}$ for number of shares $t \ge 3$.\quad \\
    $^{b}$ constant channels are excluded by bounding pre-shared entanglement between shares.\quad \\
    $^{c}$ can be dropped at the cost of weakening to relaxed detection.
    }
    
    \caption{
    Tamper detection against various adversary models.}
    \label{tab:compare}
    \end{table}
    
    \paragraph{Non-malleable codes.}
    Non-malleability~\cite{DPW18} weakens detection by allowing $D_{f_1}$ to be supported on all of $\M \cup \{\same\}$ rather than on $\{\same,\perp\}$: the decoder may return an unrelated message, provided it carries no information about the one encoded.
    Capacity and rate for the classical notion are well understood~\cite{CG16}.
    The quantum constructions are almost entirely split-state~\cite{ABJ22,BGJR23,BB23}, where the codeword is divided into shares tampered locally and the separation between them substitutes for a restriction on the tampering itself; Bergamaschi and Boddu~\cite{BB23} obtain detection in this model as a consequence of non-malleability, and appear in \Cref{tab:compare} on that account.
    Keyed notions of quantum non-malleability~\cite{ABW09,AM17} are global but assume a shared secret.
    Against global tampering, the closest classical results are~\cite{CG16}, which shows existence probabilistically, and~\cite{FMVW16}, which gives an explicit construction in the common reference string model with the CRS longer than the tampering runtime.
    Our setting is of the same kind: $\pp$ is public, untamperable, and grows with $p$; and like~\cite{FMVW16} we assume no computational hardness for any problem.
    Later work removes the CRS~\cite{BDKLM19,DKP21} at the price of strong cryptographic assumptions.
    \Cref{cor:nm} is, to our knowledge, the first such explicit, efficient code against global \emph{quantum} tampering.
    \paragraph{Neighbouring primitives.}
    Quantum authentication~\cite{BCG+02,AM17} gives the same alternative, recovery or abort, against a computationally unbounded adversary, and is in that sense stronger than what we obtain; the price is a shared secret key, which is what our setting withholds.
    The comparison with quantum key distribution~\cite{BB84} is closer, since both extract an unconditional guarantee from the impossibility of undetected measurement, but the resources differ on almost every axis: QKD is interactive, needs an authenticated classical channel, and draws its security from Eve's ignorance of the preparation basis, whereas here a single message is sent, $\pp$ is public, and Eve's only restriction is phase 1 boundedness.
    
    \paragraph{Security from bounded resources.}
    Trading a bound on the adversary's resources for an unconditional guarantee is a well-established idea: the bounded-storage model~\cite{Mau92,CM97} and its quantum counterpart~\cite{DFSS05} bound memory, and Unruh~\cite{Unr13} formulates everlasting security, where an assumption in force during a protocol yields an information-theoretic conclusion afterwards.
    Clause (ES) is of this type, with online circuit size as the bounded resource.

    \paragraph{Revocable encryption.} \cite{Unr15} asks that the sender be able to demand the ciphertext back, and that a returned state passing verification leave the recipient with nothing about the message.
That construction obtains this in the random-oracle model, from a timed-release scheme as a building block.
\Cref{cor:rev} obtains the same guarantee from the resource bound alone, with the decoder itself as the verifier, and the recipient unbounded once he has returned the ciphertext.
\fi

\ifExtAbs
\else
    \subsection{Techincal Overview}\label{subsec:overview}
    We now outline the proofs of the three theorems.
    Each makes claims of two kinds about the encoded state $\ket{\psi_m} = U(\ket{m}_A \otimes \ket{0^{n-k}}_B)$.
    Define $X_{st} = \bra{\psi_t}\Phi(\proj{\psi_s})\ket{\psi_t}$ is the probability that the decoder returns $\widehat{m}=t$ when $m=s$ was encoded.
    \begin{itemize}
        \item The first is that $\sum_{t \ne s} X_{st}$ is small for every $s$ and every admissible $\Phi$, and, in the strict case, that $\sum_{t} X_{st}$ is small.
        \item The second is that for every pair $m_0, m_1 \in \M$, the state of the register retained by the adversary on the branch where the decoder does not abort is within $\eps$ in trace distance of the state for the other message, and remains so under arbitrary post-processing.
    \end{itemize}
    Both reduce to bounding moments of two scalar quantities associated with the tampering map, and this section states the required bounds, explains why the standard evaluation of such moments does not reach them, and derives the theorems. 
    \Cref{thm:A} and \Cref{thm:C} bounds on both of these scaler quantities whereas \Cref{thm:B} uses only bound on the first scaler quantity.
    
    \smallskip \noindent 
    In terms of $X_{st}$, the decoding POVM of \Cref{sec:results} is $\Pi_t = \proj{\psi_t}$ with $\Pi_\perp = \id - \sum_t \Pi_t$, so $\sum_t X_{st}$ is the probability that the decoder does not abort.
    The off-diagonal sum $\sum_{t \ne s} X_{st}$ and the diagonal term $X_{ss}$ are therefore the only quantities to be bounded, and we need to deal with them separately.
    
    \smallskip \noindent
    For $s \ne t$ the vectors $\ket{\psi_s}$ and $\ket{\psi_t}$ are orthogonal and the randomness of $U$ acts on them jointly. 
    Whereas, for $s = t$ there is a single vector and no such structure to exploit.
    Both bounds have some common structure in how these bounds are used but differ on how these bounds are established.
    We bound $\E_U[X^{\ell}]$ at a moment order $\ell$ fixed in advance, apply Markov, and take a union bound over the $2^{2k}$ ordered pairs of messages and over the adversary family.
    This union bound is what largely determines $\ell$: an adversarial family family $\Fadv$ requires $\ell$ slightly larger than $ \log_2|\Fadv|$.
    
    For \Cref{thm:A}, we have Phase 1 circuits of size\footnote{as noted before similar arguments work for depth p circuits as well} $p$, $\log_2|\mathcal{F}_1(p)| = O(p\log(n+p))$, so $\ell = \poly(n,p)$.
    For a family bounded only in cardinality, $\log_2|\Fadv| \leq  d^{\alpha}$, so $\ell \approx d^{\alpha}$, which is exponential in $n$.
    The first is compatible with an efficiently samplable encoding and the second is not, which is why \Cref{thm:A} is constructive while \Cref{thm:B} and \Cref{thm:C} are stated  only with a Haar-random $U$ (or with $t$-designs where $t=O(\ell)$ is exponential).
    Thus, all three circuits are explicit and uniform but only the first one is efficient.
    
    For a channel $\Phi$, consider $\Phi(\rho) = \sum_i K_i \rho K_i^{*}$, which gives $X_{st} = \sum_i |\bra{\psi_t}K_i\ket{\psi_s}|^{2}$.
    So $\E_U[X_{st}^{\ell}]$ is a polynomial integral over the unitary group, evaluated by Weingarten calculus.
    This is the technique used in~\cite{BK23,BKR26}.
    Expanding $X_{st}^{\ell}$ over Kraus indices leaves $r^{\ell}$ terms, where $r = \rank(\Phi)$, so the bounds carry a rank factor and survives the union bound only under a hypothesis such as $\rank(\Phi) \le d^{1-\delta}$.
    And the Weingarten coefficients are controlled only for $\ell = O(\sqrt d)$, which caps the attainable family size at $\alpha < \tfrac12$.
    
    Note that an entropic argument of the kind used in quantum key distribution is unavailable here.
    There the proof quantifies, through uncertainty relations or monogamy bounds, what the adversary loses by not knowing the preparation basis; here $\pp$ is public and there is no entropy to quantify.
    What Eve lacks is not information but the circuit to act on it, so the argument must be a moment calculation over a particular known unitary.
    
    \subsubsection*{Proof sketch for $s \neq t$}
    We give a geometric intuition of this bound here. 
    The formal bound is based on representation theory of unitary group and can be found in the techincal section.
    
    Fix $s \ne t$.
    By left-invariance of the Haar measure, $(\ket{\psi_s},\ket{\psi_t})$ is a uniformly random ordered pair of orthonormal vectors, so we may reveal $\ket{\psi_s}$ first and integrate over $\ket{\psi_t}$ afterwards.
    Once $\ket{\psi_s}$ is fixed, the operator $M = \Phi(\proj{\psi_s})$ is fixed too: it is positive, of unit trace, and otherwise arbitrary.
    Conditionally, $\ket{\psi_t}$ is uniform on the unit sphere of $\psi_s^{\perp}$, and
    \[
      \E_U\big[X_{st}^{\ell}\big]
      \;=\; \E_{\psi_s}\ \E_{\psi_t \perp \psi_s}\Big[\bra{\psi_t}M\ket{\psi_t}^{\ell}\Big].
    \]
    The inner expectation is a moment of a quadratic form in a random unit vector, which the symmetric subspace evaluates exactly.
    This gives the first estimate:
    \begin{equation*}\label{eq:ov-offdiag}
      \E_U\big[X_{st}^{\ell}\big] \;\le\; \Big(\frac{\ell}{d}\Big)^{\ell},
      \qquad s \ne t, \quad \ell < d .
    \end{equation*}
    Note that the channel was never expanded: whatever $\Phi$ does to $\ket{\psi_s}$, it produces some operator $M$.
    The property that matters for our analysis is that $M$ is a positive semidefinite matrix with trace at most 1. 
    Thus no rank factor appears, and the bound remains non-trivial for every $\ell < d$ rather than only for $\ell = O(\sqrt{d})$> So both limitations of the Weingarten route are avoided.
    Taking $\ell = \lceil d^{\alpha}\rceil$ with $\alpha < 1$ then gives \Cref{thm:B}.
    This bound is tight up to the constant in the base: for the replacement channel $\rho \mapsto \proj{\varphi}$ the inner expectation is an equality, and $\E_U[X_{st}^{\ell}] \ge (\ell/2ed)^{\ell}$.
    Hence the range $\alpha < 1$ cannot be extended by sharpening this bound.
    
    \subsubsection*{Proof sketch for s=t}
    The case for $s=t$ is more involved.
    The diagonal term $X_{ss}$ admits no such conditioning: with $s = t$ there is a single vector rather than an orthonormal pair.
    Here the channel must be expanded, but it needs to be expanded so that most of the expansion vanishes.
    Split each Kraus operator into its scalar and traceless parts,
    \[
      K_i = c_i \id + L_i, \qquad c_i = \frac{\Tr K_i}{d}, \qquad \Tr L_i = 0 .
    \]
    This decomposition then gives $X_{ss} = \Fe(\Phi) + W_s + Y_s$, where $\Fe(\Phi) = \sum_i |c_i|^{2}$ is the entanglement fidelity.
    Moreover, $\E_U[W_s] = 0$, and $Y_s = \sum_i |\bra{\psi_s}L_i\ket{\psi_s}|^{2}$.
    The first term is the part no code can detect, and the hypothesis $\Fe(\Phi) \le d^{-\delta}$ of \Cref{thm:C} is precisely the requirement that it has to be small.
    The second term has mean zero.
    So the estimate needed is only on $Y_s$.
    
    Expanding $\E_U[Y_s^{\ell}]$ over the symmetric subspace produces a sum over permutations of $2\ell$ tensor factors.
    This is where the splitting is used: every permutation with a fixed point contributes a factor $\Tr L_i = 0$, and hence does not contribute to the sum.
    The permutations that survive can be  controlled by a single scalar,
    \[
      \sum_i \big(\Tr[K_i^{*}K_i] - d|c_i|^{2}\big) = d\big(1 - \Fe(\Phi)\big).
    \]
    This gives the second bound:
    \[
      \E_U\big[Y_s^{\ell}\big] \;\le\; \Big(\frac{4\ell^{2}}{d}\Big)^{\ell} .
    \]
    Again no rank factor appears, hence we get a better handle over the overall bound than \cite{BKR26}.
    The two bounds differ in one respect that accounts for the gap between \Cref{thm:B} and \Cref{thm:C}.
    The off-diagonal bound (for $s\neq t$) is non-trivial whenever $\ell < d$, whereas the diagonal one (for $s=t$) requires $4\ell^{2} < d$, that is $\ell = O(\sqrt{d})$.
    Since the union bound forces $\ell \approx d^{\alpha}$, the first permits $\alpha < 1$ and the second only $\alpha < \tfrac12$.
    The difference is structural: $Y_s$ is quadratic in $\ket{\psi_s}$ where $X_{st}$ is bilinear in $\ket{\psi_s}$ and $\ket{\psi_t}$, so its $\ell$-th moment occupies $2\ell$ tensor factors rather than $\ell$.
    
    \subsubsection*{From bounds to \Cref*{thm:A,thm:B,thm:C} }
    \emph{Detection} follows from the first bound by a threshold argument.
    If $X_{st} \le \tau = 2^{-2k}$ for every $t \ne s$, then $\sum_{t \ne s} X_{st} \le 2^{k}\tau < 2^{-k}$, so it is enough to control each term separately.
    Markov at order $\ell$ and a union bound over the $2^{2k}$ ordered pairs and over the adversary family then close the argument, provided $\ell$ is at least the value fixed in \Cref{thm:A}, which is where the term $\log_2|\mathcal{F}_1(p)|$ in that expression shows up.
    For \Cref{thm:C} a similar argument is run again on the diagonal term, using the second bound.
    
    \smallskip \noindent
    \emph{Secrecy} requires the second bound used in a slightly different form.
    We apply second bound to the adversary's isometry slices.
    For a techincal reason, we can not directly apply these bounds to the channel but need to apply it to isometry slices. 
    Slicing Eve's Phase 1 isometry along the register $E$ (the register she retains for Phase 2) produces a family of operators whose traceless parts satisfy the same scalar bound as before. 
    Moreover, \[
      \big\| w_s - v \big\|^{2\ell} = Y_L(\psi_s),
    \]
    where $w_s$ is the vector Eve holds on the branch where Bob accepts and $v$ is a reference vector depending on her circuit but not on the message.
    In other words, the diagonal term bounds how far Eve's accept-branch state can move as the message varies.
    And since $v$ is message-independent, all $2^{k}$ of these states are in the neighbourhood of a common point (with high probability).
    The bound thus ensures that every $w_s$ is within $\eta = 2\ell\sqrt{2/d}$ of the same $v$, simultaneously for all messages and all circuits in $\mathcal{F}_1(p)$.
    Telescoping bounds can then control the accept branch, while the remaining non-abort branches carry total weight at most $\Theta(2^{-k})$ by the detection argument from above.
    Since trace distance cannot increase under the Phase 2 channel, the bound survives arbitrary unbounded post-processing, which is what \emph{everlasting} refers to.

    Both bounds are expectations of bounded-degree polynomials in the entries of $U$ and $\bar U$, so a $2\ell$-design inherits them up to a factor of at most $2$.
    Since \Cref{thm:A} uses $\ell = \poly(n,p)$, the linear-depth constructions of~\cite{MPSY24} readily give such a design by a circuit of size $\poly(n,p)$, making the scheme efficient.
\fi
    
    \bibliographystyle{bibtex/bst/alphaarxiv.bst}
    \bibliography{bibtex/bib/quasar-full.bib,
        bibtex/bib/quasar.bib,
        bibtex/bib/quasar-more-tamper-new.bib,
        bibtex/bib/quasar-more-merged.bib,
        bibtex/bib/quasar-more.bib}
\else
    
    \section{Preliminaries}

Throughout the paper all Hilbert spaces are finite-dimensional and complex. Given a Hilbert space~$\mathcal H$, we denote by $\mathbb S(\mathcal H)$ its unit sphere and by $\mathsf L(\mathcal H)$ the space of linear operators acting on $\mathcal H$. We write $\id_{\mathcal H}$ for the identity operator on $\mathcal H$, or simply $\id$ when the underlying space is clear from context. A quantum state on $\mathcal H$ is a positive semidefinite operator $\rho\in\mathsf L(\mathcal H)$ satisfying $\Tr\rho=1$, and we denote the set of quantum states on $\mathcal H$ by $\mathsf D(\mathcal H)$. For multipartite systems, subscripts indicate the registers on which a state or operator acts. For instance, $\rho_{AB}$ denotes a state on registers $A$ and $B$, while $M_A$ denotes an operator acting on register $A$.  For two quantum states $\rho,\sigma\in\mathsf D(\mathcal H)$, we denote their quantum fidelity by $F(\rho,\sigma)=\bigl\|\sqrt{\rho}\sqrt{\sigma}\bigr\|_1^2$. In particular, if $\rho=\proj{v}$ is pure, then $F(\rho,\sigma)=\bra{v}\sigma\ket{v}$.

For an operator $M$, we write $\adj{M}$ for its adjoint and $|M|=\sqrt{\adj{M}M}$ for its absolute value. For $p\in[1,\infty)$, the Schatten $p$-norm is denoted by $\|M\|_p=\bigl(\Tr |M|^p\bigr)^{1/p}$. In particular, $|\cdot|_1$ and $|\cdot|_2$ denote the trace norm and the Hilbert--Schmidt (Frobenius) norm, respectively, while $|\cdot|_\infty$ denotes the operator norm. The trace of an operator is written as $\Tr$. We use the usual Dirac notation $\ket{v}$ and $\bra{v}$ for vectors and covectors, and write $\ketbra{v}{w}$ for the rank-one operator $\ket{v}\bra{w}$. In particular, we use $\proj{v}$ for the projector onto the span of $\ket{v}$.

A quantum channel is a completely positive and trace-preserving (cptp) linear map between operator spaces. For a quantum channel $\Phi$, we denote by $\rank(\Phi)$ its Kraus rank, namely the minimum number of Kraus operators in a Kraus representation of $\Phi$, and by $\Fe(\Phi)$ its entanglement fidelity with respect to the maximally mixed state.

For a random variable $X$ distributed according to $\mu$, we write $X\leftarrow\mu$ to indicate sampling from $\mu$, and $\mathbb E_{X\sim\mu}$ for expectation with respect to this distribution. The probability of an event $E$ is denoted by $\Pr[E]$. When $\mathbb S(\mathcal H)$ is equipped with its Haar measure, the notation $v\leftarrow\mathbb S(\mathcal H)$ refers to a Haar-random unit vector.

The message set is $\mathcal M=\{0,1\}^k$, where $k$ is the message length in bits. We write $n$ for the number of qubits in the encoded state, and set $d=2^n$ for the dimension of the corresponding Hilbert space $\C^d$. The symbol $\ell\in\N$ denotes a moment order throughout the paper, its precise value will be specified when needed. We use $\negl(\cdot)$ to denote a negligible function.

\subsection{General facts}

We begin by collecting several general facts that will be useful throughout the paper.

\begin{fact}[Schatten norm inequalities {\cite{Wat18,Tom16, alma9919303308606531}}]\label{fact:schatten}
  Let $M,A_1,\dots,A_m$ be operators on $\C^{d}$ and let $p,q\in[1,\infty]$.
  Then,
  \begin{enumerate}
    \item[(i)] $|\Tr M|\le\|M\|_{1}$.
    \item[(ii)] (Hölder) If $p_1,\dots,p_m\in\N$ satisfy $\sum_{a=1}^{m}\tfrac1{p_a}=1$,
      then $\|A_1A_2\cdots A_m\|_{1}\le\prod_{a=1}^{m}\|A_a\|_{p_a}$.
    \item[(iii)] $\|M\|_{p}\le\|M\|_{q}$ whenever $q\le p$.
    \item[(iv)] $\|\adj{M}\|_{p}=\|M\|_{p}$.
  \end{enumerate}
\end{fact}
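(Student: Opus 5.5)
All four items reduce to statements about the singular values $\sigma_1(M)\ge\dots\ge\sigma_d(M)\ge 0$, since $\|M\|_p=\|\sigma(M)\|_{\ell_p}$ by definition of $|M|=\sqrt{\adj{M}M}$. We therefore fix a singular value decomposition $M=\sum_j \sigma_j\ketbra{u_j}{v_j}$ with orthonormal families $\{u_j\}$ and $\{v_j\}$, and prove each item in that language.

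\textbf{Items (iv), (iii) and (i).}
For (iv), note that $\adj{M}=\sum_j\sigma_j\ketbra{v_j}{u_j}$ has the same singular values as $M$, so every Schatten norm agrees.

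For (iii), it suffices to show $\|x\|_{\ell_p}\le\|x\|_{\ell_q}$ for a nonnegative vector $x$ and $q\le p$. By homogeneity we normalize $\|x\|_{\ell_q}=1$. Then every $x_j\le 1$, so $x_j^p\le x_j^q$, and summing gives $\|x\|_{\ell_p}^p\le 1$. The case $p=\infty$ follows directly from $\max_j x_j\le\|x\|_{\ell_q}$.

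For (i), write
\[
\Tr M=\sum_j\sigma_j\braket{v_j}{u_j}.
\]
Since $|\braket{v_j}{u_j}|\le 1$ by Cauchy--Schwarz, the triangle inequality gives $|\Tr M|\le\sum_j\sigma_j=\|M\|_1$.

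\textbf{Item (ii), Hölder.}
This is the only item with content. We argue by induction on $m$, and first establish the case $m=2$: $\|AB\|_1\le\|A\|_p\|B\|_{p'}$ whenever $1/p+1/p'=1$.

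To do this, take the polar decomposition $AB=W|AB|$. Then
\[
\|AB\|_1=\Tr(\adj{W}AB).
\]
Apply von Neumann's trace inequality, $|\Tr(XY)|\le\sum_j\sigma_j(X)\sigma_j(Y)$, with $X=\adj{W}A$ and $Y=B$. Using $\sigma_j(\adj{W}A)\le\sigma_j(A)$, because $W$ is a partial isometry, we get
\[
\|AB\|_1\le\sum_j\sigma_j(A)\sigma_j(B).
\]
Classical Hölder on the singular value vectors then yields $\|A\|_p\|B\|_{p'}$.

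For the inductive step we need a stronger two-factor estimate, $\|AB\|_r\le\|A\|_p\|B\|_s$ with $1/r=1/p+1/s$. We would derive it from Horn's majorization
\[
\prod_{j\le k}\sigma_j(AB)\le\prod_{j\le k}\sigma_j(A)\sigma_j(B)\qquad\text{for all }k,
\]
which implies weak majorization of $\sigma(AB)^r$ by $(\sigma(A)\sigma(B))^r$, since $t\mapsto e^{rt}$ is convex and increasing. Classical Hölder then gives the estimate. Granting it, we peel off one factor at a time:
\[
\|A_1\cdots A_m\|_1\le\|A_1\|_{p_1}\,\|A_2\cdots A_m\|_{r},\qquad \frac1r=\sum_{a\ge2}\frac1{p_a},
\]
and recurse on the remaining product.

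\textbf{Main obstacle.}
The expected difficulty is exactly this generalized two-factor inequality, namely the Horn/majorization step. Since the item is quoted from \cite{Wat18,Tom16}, the fallback is to cite it directly rather than reprove majorization.
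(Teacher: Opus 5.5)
Your proof is correct. The paper gives no proof of this fact; it simply quotes it from the textbook references. Your write-up is therefore a self-contained replacement rather than a variant of an argument in the paper.

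Items (i), (iii) and (iv) are the standard singular-value arguments. Your treatment of (ii) is the usual textbook proof of the generalized H\"older inequality: Horn's inequality gives weak majorization of $\sigma(AB)^r$ by $(\sigma(A)\sigma(B))^r$, then classical H\"older with exponents $p/r$ and $s/r$ applies. The peeling step is sound because $1/r=\sum_{a\ge2}1/p_a\le1$, so $r\ge1$, and $(p_1,r)$ is a conjugate pair at the first step.

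Some comparisons of what your route buys.

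First, once you grant Horn, the separate base case via polar decomposition and von Neumann's trace inequality is redundant. Your two-factor estimate at $r=1$, $s=p'$ already gives $\|AB\|_1\le\sum_j\sigma_j(A)\sigma_j(B)\le\|A\|_p\|B\|_{p'}$.

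Second, your argument covers all exponents $p_a\in[1,\infty]$, not only $p_a\in\N$ as in the statement. That generality is actually needed: \Cref{thm:masterapprox} invokes (ii) with exponents $(\infty,1)$, which the $\N$-restricted statement does not cover.

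Third, if the goal is only to support the paper's own uses, Horn can be avoided entirely. \Cref{lem:C} only needs $|\Tr(M_1\cdots M_\ell)|\le\prod_j\|M_j\|_2$ for $\ell\ge2$. This follows from Cauchy--Schwarz for the Hilbert--Schmidt inner product, $|\Tr(M_1 Z)|\le\|M_1\|_2\|Z\|_2$ with $Z=M_2\cdots M_\ell$, together with $\|XY\|_2\le\|X\|_2\|Y\|_\infty$, submultiplicativity of $\|\cdot\|_\infty$, and (iii). The $(\infty,1)$ case is just $\sigma_j(XY)\le\|X\|_\infty\sigma_j(Y)$, summed over $j$.
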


\begin{fact}[Binomial bounds]\label{fact:binom}
  For integers $1\le b\le a$,
  \[
    \left(\frac{a}{b}\right)^b\le \binom{a}{b}\le \left(\frac{ea}{b}\right)^b.
  \]
\end{fact}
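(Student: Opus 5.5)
The statement just above is \Cref{fact:binom}: for integers $1\le b\le a$,
\[
\left(\frac{a}{b}\right)^b\le \binom{a}{b}\le \left(\frac{ea}{b}\right)^b .
\]
I will prove the two inequalities separately, each by an elementary manipulation of the product formula
\[
\binom{a}{b}=\prod_{i=0}^{b-1}\frac{a-i}{b-i}.
\]

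For the lower bound, I will compare the product factor by factor. For each $0\le i\le b-1$ I claim $\frac{a-i}{b-i}\ge\frac{a}{b}$. Cross-multiplying, this is $b(a-i)\ge a(b-i)$, which simplifies to $ai\ge bi$. That holds because $a\ge b$ and $i\ge 0$. Taking the product of these $b$ inequalities gives $\binom{a}{b}\ge (a/b)^b$.

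For the upper bound, I will proceed in two steps.

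\begin{enumerate}
  \item Bound the numerator crudely: $\binom{a}{b}\le \frac{a^b}{b!}$, since $a(a-1)\cdots(a-b+1)\le a^b$.
  \item Bound the factorial from below: $b!\ge (b/e)^b$. This follows from the exponential series, since $e^b=\sum_{j\ge0}\frac{b^j}{j!}\ge \frac{b^b}{b!}$ (all terms are nonnegative and one of them is $b^b/b!$).
\end{enumerate}

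Combining the two steps gives $\binom{a}{b}\le a^b\,(e/b)^b=(ea/b)^b$.

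There is no real obstacle here. The only step needing a moment's thought is the factorial bound $b!\ge (b/e)^b$, and I will handle it with the series argument above rather than citing Stirling's formula, so the proof stays self-contained and covers every integer $b\ge1$, not just large $b$.
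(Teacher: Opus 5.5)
Your proof is correct. The factorwise comparison $\frac{a-i}{b-i}\ge\frac{a}{b}$ gives the lower bound, and $\binom{a}{b}\le a^b/b!$ together with $e^b\ge b^b/b!$ gives the upper bound, for every $1\le b\le a$. The paper states this as a standard fact without proof, so there is no argument to compare against. Note that only the lower bound is used later, to turn the inverse binomials in \Cref{lem:rankfree} and \Cref{thm:master} into $(\ell/d)^{\ell}$ and $(2\ell/d)^{2\ell}$.
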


\subsection{Quantum channels}

A \emph{quantum channel} (CPTP map) $\Phi:\mathsf{L}(\C^d)\to\mathsf{L}(\C^d)$ admits a Kraus representation $\Phi(\rho)=\sum_{i=1}^{r}K_i\rho \adj{K_i}$ with $\sum_i \adj{K_i} K_i=\id$. The least such $r$ is the \emph{Kraus rank} $\rank(\Phi)$. The \emph{entanglement fidelity} of $\Phi$ (relative to the maximally mixed input, following the convention of~\cite{BKR26}) is
\begin{equation}\label{eq:Fe}
  \Fe(\Phi)=\frac{1}{d^{2}}\sum_{i}\big|\Tr K_i\big|^{2},
\end{equation}
which is independent of the Kraus representation. The \emph{diamond norm} is the supremum over density operators on a doubled space,
\[
  \|\Phi-\Psi\|_\diam=\sup_{\rho}\|((\Phi-\Psi)\otimes\id)(\rho)\|_1.
\]

\subsection{The symmetric subspace}
\label{sec:symsub}

We briefly recall some basic facts about the symmetric subspace and fix the notation that will be used throughout the paper.

\begin{definition}[Symmetric subspace]\label{def:symsub}
  Let $S_N$ denote the symmetric group on $N$ letters. It acts on $(\C^{D})^{\otimes N}$ by permuting the tensor factors: for $\pi\in S_N$, the permutation operator $P_\pi$ is defined on product vectors by
  \[
    P_\pi\big(\ket{v_1}\otimes\cdots\otimes\ket{v_N}\big)=\ket{v_{\pi^{-1}(1)}}\otimes\cdots\otimes\ket{v_{\pi^{-1}(N)}},
  \]
  and extended linearly. The \emph{symmetric subspace} $\Sym^{N}(\C^{D})$ is the subspace of permutation-invariant vectors,
  \[
    \Sym^{N}(\C^{D})=\big\{\ket{\psi}\in(\C^{D})^{\otimes N}\ :\ P_\pi\ket{\psi}=\ket{\psi}\ \text{ for all }\pi\in S_N\big\}.
  \]
\end{definition}

Informally, $\Sym^{N}(\C^{D})$ is the subspace on which the ordering of the $N$ tensor factors is forgotten: a symmetric vector is determined by \emph{how many} factors carry each basis label, not by their arrangement.

\begin{lemma}[Projector onto the symmetric subspace]\label{lem:symproj}
  The orthogonal projector $\Pisym^{(N)}$ onto the symmetric subspace $\Sym^{N}(\C^{D})$ is the group average
  \[
    \Pisym^{(N)}=\frac{1}{N!}\sum_{\pi\in S_N}P_\pi .
  \]
  Moreover $\Sym^{N}(\C^{D})=\operatorname{span}\{\ket{v}^{\otimes N}:\ket{v}\in\C^{D}\}$, and $\dim(\Sym^{N}(\C^{D}))=\binom{D+N-1}{N}$.
\end{lemma}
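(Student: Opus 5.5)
We establish the three claims of \Cref{lem:symproj} in turn: $\Pisym^{(N)}$ is the orthogonal projector onto $\Sym^{N}(\C^{D})$, the symmetric subspace is spanned by product powers, and its dimension is $\binom{D+N-1}{N}$.

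\textbf{Step 1: $\Pisym^{(N)}$ is the orthogonal projector onto $\Sym^{N}(\C^{D})$.} Write $\Pi=\frac{1}{N!}\sum_{\pi}P_\pi$.

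\begin{itemize}
\item $\pi\mapsto P_\pi$ is a unitary representation, since $P_\pi P_\sigma=P_{\pi\sigma}$ and $\adj{P_\pi}=P_{\pi^{-1}}$.
\item Reindexing the sum gives $P_\sigma\Pi=\Pi$ for every $\sigma$. Hence $\Pi^2=\Pi$.
\item Since $\pi\mapsto\pi^{-1}$ is a bijection of $S_N$, we get $\adj{\Pi}=\Pi$. So $\Pi$ is an orthogonal projector.
\item Its range lies in $\Sym^{N}(\C^{D})$, because $P_\sigma\Pi\ket{\psi}=\Pi\ket{\psi}$.
\item Conversely, if $P_\pi\ket{\psi}=\ket{\psi}$ for all $\pi$, then $\Pi\ket{\psi}=\ket{\psi}$.
\end{itemize}

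So the range of $\Pi$ is exactly $\Sym^{N}(\C^{D})$.

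\textbf{Step 2: dimension count.} Let $\{\ket{i}\}$ be the computational basis of $\C^{D}$.

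\begin{itemize}
\item $P_\pi$ permutes the basis vectors $\ket{i_1\cdots i_N}$. The orbits are indexed by occupation vectors $\mathbf{n}=(n_1,\dots,n_D)$ with $\sum_j n_j=N$.
\item Each orbit sum $\ket{S_{\mathbf n}}$ is symmetric. Distinct orbit sums have disjoint supports, so they are orthogonal.
\item Any symmetric vector has coefficients that are constant on orbits. It is therefore a combination of the $\ket{S_{\mathbf n}}$, so these form a basis.
\item Counting occupation vectors by stars and bars gives $\binom{D+N-1}{N}$.
\end{itemize}

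\textbf{Step 3: the span of product powers.} Clearly $\ket{v}^{\otimes N}\in\Sym^{N}(\C^{D})$, so the span is contained in the symmetric subspace. The reverse inclusion is the main (though mild) obstacle. We argue by orthogonal complement.

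Suppose $\ket{\phi}\in\Sym^{N}(\C^{D})$ satisfies $\braket{\phi}{v^{\otimes N}}=0$ for all $v$. Set $v=\sum_j x_j\ket{j}$ with $x\in\C^{D}$. Then
\[
\braket{\phi}{v^{\otimes N}}=\sum_{\mathbf n}\overline{c_{\mathbf n}}\,\binom{N}{\mathbf n}\, x^{\mathbf n},
\]
where $c_{\mathbf n}$ is the common coefficient of $\ket{\phi}$ on orbit $\mathbf n$ and $\binom{N}{\mathbf n}$ is the multinomial coefficient (the orbit size).

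This is a polynomial in $x$ that vanishes identically on $\C^{D}$. Hence all its coefficients vanish, and since the multinomial coefficients are nonzero, all $c_{\mathbf n}=0$, so $\ket{\phi}=0$. Therefore the product powers span $\Sym^{N}(\C^{D})$.

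Alternatively, one could use polarization: the identity $\sum_{\pi}\ket{v_{\pi(1)}}\otimes\cdots\otimes\ket{v_{\pi(N)}}$ can be written as a combination of powers $\big(\sum_j\epsilon_j v_j\big)^{\otimes N}$ with signs $\epsilon_j\in\{\pm1\}$. The polynomial argument above is the shorter route.
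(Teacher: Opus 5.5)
Your proof is correct and complete. Step~1 (the group average is a self-adjoint idempotent whose range is exactly the invariant vectors), Step~2 (the orbit-sum basis indexed by occupation vectors, counted by stars and bars) and Step~3 (the polynomial-identity argument giving the reverse inclusion) all go through as written. Your polarization aside is also right, with coefficients $2^{-N}\epsilon_1\cdots\epsilon_N$.

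The paper states this lemma without proof, as standard background, so there is no competing argument to compare against. Your write-up supplies the standard proof of the three facts the paper later relies on:
\begin{itemize}
\item the group-average formula, used in \Cref{prop:permexp};
\item the containment $\ket{u}^{\otimes\ell}\in\Sym^{\ell}(\C^{D})$, used in \Cref{lem:sym-moment};
\item the trace $\Tr\Pisym^{(\ell)}=\binom{D+\ell-1}{\ell}$, which fixes the normalisation in \Cref{lem:sym-moment}.
\end{itemize}
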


\begin{lemma}[Permutation operators contract along cycles]\label{lem:cycletrace}
  Let $M_1,\dots,M_N$ be operators on $\C^d$ and $\pi\in S_N$, and let $P_\pi$ be as in
  \Cref{def:symsub}, acting on $(\C^d)^{\otimes N}$. Then
  \[
    \Tr\big[(M_1\otimes\cdots\otimes M_N)\,P_\pi\big]
    =\prod_{c\,\in\,\mathrm{cycles}(\pi)}
    \Tr\Big[M_{j_c}\,M_{\pi^{-1}(j_c)}\cdots M_{\pi^{-(\ell_c-1)}(j_c)}\Big],
  \]
  where $\ell_c$ is the length of the cycle $c$ and $j_c$ is any element of $c$ (the product is
  independent of this choice, by cyclicity of the trace).
\end{lemma}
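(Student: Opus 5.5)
We prove \Cref{lem:cycletrace} by reducing to product operators and then computing the trace in a basis. Both sides are multilinear in $(M_1,\dots,M_N)$: the left side because $M_1\otimes\cdots\otimes M_N$ is multilinear and the trace is linear, and the right side because each $M_j$ appears exactly once, in the cycle containing $j$. It therefore suffices to check the identity when each $M_j=\ketbra{a_j}{b_j}$ is a rank-one matrix unit in the computational basis $\{\ket{1},\dots,\ket{d}\}$, since these span $\mathsf L(\C^d)$.

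For such product operators, first compute the matrix entries of $P_\pi$ from \Cref{def:symsub}:
\[
  P_\pi\ket{x_1,\dots,x_N}=\ket{x_{\pi^{-1}(1)},\dots,x_{\pi^{-1}(N)}},
\]
so
\[
  \Tr\big[(M_1\otimes\cdots\otimes M_N)P_\pi\big]=\sum_{x}\prod_{j}\braket{b_j}{x_{\pi^{-1}(j)}}\,\braket{x_j}{a_j}.
\]
Each $x_j$ is forced to equal $a_j$, which turns the sum into the single product $\prod_j \braket{b_j}{a_{\pi^{-1}(j)}}$.

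Next, group this product by the cycles of $\pi$. The factors $\braket{b_j}{a_{\pi^{-1}(j)}}$ couple $j$ only to $\pi^{-1}(j)$, so the product factorizes over cycles. Along a cycle $j_c,\pi^{-1}(j_c),\dots,\pi^{-(\ell_c-1)}(j_c)$, the factors chain into
\[
  \Tr\big[\ketbra{a_{j_c}}{b_{j_c}}\,\ketbra{a_{\pi^{-1}(j_c)}}{b_{\pi^{-1}(j_c)}}\cdots\big]
  =\braket{b_{j_c}}{a_{\pi^{-1}(j_c)}}\,\braket{b_{\pi^{-1}(j_c)}}{a_{\pi^{-2}(j_c)}}\cdots\braket{b_{\pi^{-(\ell_c-1)}(j_c)}}{a_{j_c}},
\]
which is exactly the factor $\Tr[M_{j_c}M_{\pi^{-1}(j_c)}\cdots]$ in the statement. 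Independence of the choice of $j_c$ is cyclicity of the trace.

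The only delicate point is the orientation convention. The definition uses $\pi^{-1}$ in $P_\pi$, and one must check that this produces the product ordered $M_{j}M_{\pi^{-1}(j)}$ rather than $M_jM_{\pi(j)}$. The index computation above settles this: the bra $\bra{b_j}$ pairs with the ket $\ket{a_{\pi^{-1}(j)}}$, giving the stated order. As a sanity check, take $N=2$ and the swap $\pi$, which gives $\Tr[(A\otimes B)F]=\Tr[AB]$.
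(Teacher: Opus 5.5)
Your proof is correct, including the orientation check. With $P_\pi\ket{x_1,\dots,x_N}=\ket{x_{\pi^{-1}(1)},\dots,x_{\pi^{-1}(N)}}$, the bra in slot $j$ meets the ket from slot $\pi^{-1}(j)$. For example, $\pi=(1\,2\,3)$ gives $\Tr[M_1M_3M_2]$, exactly as the statement asserts.

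The paper gives no proof of this lemma. It states it as a standard fact and only illustrates it for $N=2$ and $N=3$, so your argument supplies the missing verification rather than taking a different route.

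The reduction to matrix units is optional. The same index computation with general entries gives $\sum_x\prod_j (M_j)_{x_j,\,x_{\pi^{-1}(j)}}$. Regrouping the factors by cycles turns each cycle directly into one trace, so the multilinearity step can be skipped.
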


For example, take $N=3$ and $\pi=(1\,2)$, which fixes $3$. Then $\pi$ has two cycles, $\{1,2\}$ and $\{3\}$, and
\[
  \Tr\big[(M_1\otimes M_2\otimes M_3)\,P_\pi\big]=\Tr[M_1M_2]\cdot\Tr[M_3].
\]
For $N=2$ the two permutations give $\Tr[(M_1\otimes M_2)P_{\mathrm{id}}]=\Tr[M_1]\Tr[M_2]$ and
$\Tr[(M_1\otimes M_2)P_{(1\,2)}]=\Tr[M_1M_2]$, the latter being the familiar swap identity.

\subsection{Representation-theoretic facts}

We next recall a few standard facts from representation theory that will be used throughout our paper.

\begin{lemma}[Schur's lemma {\cite[Lem.~1.7]{FultonHarris}}]\label{lem:schur}
  Let $(V,\pi)$ be a finite-dimensional irreducible representation of a group $G$ over $\C$. If a linear operator $T:V\to V$ satisfies $T\pi(g)=\pi(g)T$ for all $g\in G$, then $T=\lambda\,\id_V$ for some $\lambda\in\C$.\footnote{The statement in~\cite{FultonHarris} is given for finite groups, but the proof applies verbatim to any finite-dimensional representation of any group.}
\end{lemma}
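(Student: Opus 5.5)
The statement to prove is Schur's lemma: a linear $T$ on an irreducible finite-dimensional complex representation $(V,\pi)$ that commutes with every $\pi(g)$ is a scalar. The plan is the standard eigenvalue argument, which uses only that $\C$ is algebraically closed and that $V$ is finite-dimensional and nonzero. Nothing about the group (finiteness, compactness) enters, which is why the footnote's remark about arbitrary groups holds.

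\emph{Step 1: pick an eigenvalue.} Since $V$ is a nonzero finite-dimensional complex vector space, the characteristic polynomial $\det(T-x\,\id_V)$ has a root $\lambda\in\C$. Hence $\ker(T-\lambda\,\id_V)\neq\{0\}$.

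\emph{Step 2: the eigenspace is invariant.} Set $S=T-\lambda\,\id_V$. Then $S$ also commutes with every $\pi(g)$, since the identity does. If $Sv=0$, then
\[
S\pi(g)v=\pi(g)Sv=0,
\]
so $\ker S$ is a $G$-invariant subspace of $V$.

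\emph{Step 3: conclude by irreducibility.} By irreducibility, the invariant subspace $\ker S$ is either $\{0\}$ or $V$. It is nonzero by Step~1, so $\ker S=V$. Therefore $S=0$, that is, $T=\lambda\,\id_V$.

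There is no real obstacle here. The only point needing care is Step~1: existence of the eigenvalue is exactly where algebraic closedness of $\C$ and $\dim V<\infty$ are used. Without them, the kernel argument of Steps~2 and~3 would only show that $T-\lambda\,\id_V$ is either zero or invertible.

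In the paper's later use, the representation is presumably the unitary group $\mathcal U(D)$ acting on $\Sym^N(\C^D)$ via $U^{\otimes N}$. Irreducibility of that representation is a separate input and is not needed for the lemma itself. Combined with the lemma, it yields
\[
\E_{v}\proj{v}^{\otimes N}=\Pisym^{(N)}\Big/\binom{D+N-1}{N}.
\]
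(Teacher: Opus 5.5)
Your proof is correct and is the standard eigenvalue argument of Fulton--Harris, Lemma~1.7, which the paper cites instead of proving: an eigenvalue $\lambda$ exists because $\C$ is algebraically closed and $\dim V<\infty$, $\ker(T-\lambda\,\id_V)$ is then a nonzero invariant subspace, and irreducibility forces it to be all of $V$. As you note, no property of $G$ enters the argument, which is exactly what the paper's footnote asserts.
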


\begin{fact}\label{fact:irreducible_action_on_symmetric}
  The representation $\pi$ of $\mathcal U(D)$ on $\Sym^{\ell}(\C^{D})$ defined by $\pi(V)=V^{\otimes \ell}\big|_{\Sym^{\ell}(\C^{D})}$ is irreducible.
\end{fact}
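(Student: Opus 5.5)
The statement is that the action $V\mapsto V^{\otimes \ell}$ of $\mathcal U(D)$ restricted to $\Sym^{\ell}(\C^{D})$ is irreducible. The plan is to show that the commutant of $\{V^{\otimes\ell}|_{\Sym}\}$ consists only of scalars. This suffices: the representation is unitary, so any invariant subspace $W$ has an invariant orthogonal complement, and the projector onto $W$ would then be a non-scalar commuting operator whenever $W\neq 0,\Sym$.

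\textbf{Step 1 (restrict to the symmetric subspace).} First note that $\Sym^{\ell}(\C^D)$ is invariant under $V^{\otimes \ell}$, since $V^{\otimes\ell}$ commutes with every $P_\pi$. So $\pi(V)$ is well defined.

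\textbf{Step 2 (pass to the full matrix algebra).} Let $T$ on $\Sym^{\ell}$ commute with every $V^{\otimes\ell}|_{\Sym}$. Every $X\in\mathsf L(\C^D)$ is a linear combination of unitaries, and the map $V\mapsto V^{\otimes\ell}$ is polynomial. The span of $\{V^{\otimes\ell}\}$ therefore contains $X^{\otimes\ell}$ for all $X$. To see this, use that $\mathcal U(D)$ is Zariski dense in $GL(D)$, so a polynomial identity in $V$ that holds on all unitaries holds on all $X$. Hence $T$ commutes with $X^{\otimes \ell}|_{\Sym}$ for every $X\in\mathsf L(\C^D)$.

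\textbf{Step 3 (rank-one maps pin $T$ to $\ket{w}^{\otimes\ell}$).} Take $X=\ketbra{w}{u}$. Then $X^{\otimes\ell}=\ketbra{w^{\otimes\ell}}{u^{\otimes\ell}}$, and both vectors lie in $\Sym$. Choose $u$ with $\braket{u}{w}\neq 0$. Applying the commutation to $\ket{w}^{\otimes \ell}$ gives
\[
T\ket{w}^{\otimes\ell}=c_w\ket{w}^{\otimes\ell},
\qquad c_w=\frac{\bra{u}^{\otimes\ell}T\ket{w}^{\otimes\ell}}{\braket{u}{w}^{\ell}}.
\]
So every product vector $\ket{w}^{\otimes\ell}$ is an eigenvector of $T$.

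\textbf{Step 4 (all eigenvalues coincide).} It remains to show $c_w$ does not depend on $w$. Use the commutation with invertible $X$: it gives $T(Xw)^{\otimes\ell}=X^{\otimes\ell}Tw^{\otimes\ell}=c_w (Xw)^{\otimes \ell}$. Hence $c_{Xw}=c_w$, and $GL(D)$ acts transitively on nonzero vectors, so $c_w=c$ is constant. By \Cref{lem:symproj}, the vectors $\ket{w}^{\otimes\ell}$ span $\Sym^{\ell}$. Therefore $T=c\,\id$.

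\textbf{Concluding.} By the commutant argument in the opening paragraph, $T=c\,\id$ gives irreducibility.

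\textbf{Main obstacle.} The only delicate point is Step 2, the passage from unitaries to all operators. As a fallback I would avoid density arguments. Instead, differentiate $e^{itH}$ at $t=0$ for Hermitian $H$, which yields the generators $\sum_j H_j$ in the commutant. Complexifying gives $\sum_j X_j$ for all $X$, and these generate the same algebra as the $X^{\otimes \ell}$ via exponentiation $e^{X}$ and polarization. This is enough to run Steps 3–4 with invertible $X$, and to take a limit to obtain the rank-one $X$.
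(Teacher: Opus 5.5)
The paper gives no proof of this statement. It quotes it as a standard fact: $\Sym^{\ell}(\C^{D})$ is the irreducible $\mathcal U(D)$-module of highest weight $(\ell,0,\dots,0)$. This is usually derived in one of two ways. One is Schur--Weyl duality, which identifies the commutant of $\{V^{\otimes\ell}\}$ on $(\C^{D})^{\otimes\ell}$ with the span of the $P_\pi$, all of which restrict to $\id$ on $\Sym^{\ell}(\C^{D})$. The other is a weight-space argument with the diagonal torus.

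Your argument is correct, and it is more elementary than both. Apart from the density step, it uses only that the vectors $\ket{w}^{\otimes\ell}$ span $\Sym^{\ell}(\C^{D})$ (\Cref{lem:symproj}). Schur--Weyl buys more, namely the full decomposition of $(\C^{D})^{\otimes\ell}$, but at the cost of heavier machinery. Your route also gives directly what the paper actually uses. In \Cref{lem:sym-moment}, irreducibility serves only, via \Cref{lem:schur}, to conclude that an operator commuting with every $V^{\otimes\ell}|_{\Sym^{\ell}(\C^{D})}$ is a scalar, and that is exactly the content of your Steps 3--4.

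Two remarks on Step 2. First, the observation that every $X$ is a linear combination of unitaries does no work, since $X\mapsto X^{\otimes\ell}$ is not linear. The step rests entirely on the density argument, which you should phrase as follows: every linear functional annihilating $\operatorname{span}\{V^{\otimes\ell}\}$ yields a polynomial in the entries of $X$ (not of $\overline{X}$) that vanishes on $\mathcal U(D)$, hence everywhere. Density holds only for such holomorphic polynomials; $\mathcal U(D)$ is not dense for real polynomial identities.

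Second, you can skip Step 2 altogether. For a unit vector $w$, the operator $V_\theta=\id+(e^{i\theta}-1)\proj{w}$ is unitary, and
\[
  \frac{1}{2\pi}\int_0^{2\pi}e^{-i\ell\theta}\,V_\theta^{\otimes\ell}\,d\theta=(\proj{w})^{\otimes\ell}.
\]
Hence $T$ commutes with the rank-one projector onto $\ket{w}^{\otimes\ell}$, which is your Step 3 with $u=w$. Step 4 then needs only unitaries, since $\mathcal U(D)$ acts transitively on unit vectors and $c_{\lambda w}=c_w$. This makes the proof fully self-contained, with no Zariski-density or Lie-algebra input.
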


\subsection{Unitary designs}

We recall the notions of exact and approximate unitary designs that will be used throughout the paper. For a distribution $\nu$ on $\U(d)$, we denote its $t$-fold twirl by
\[
  \cM^{(t)}_\nu(X)
  :=
  \E_{U\sim\nu}
  \left[
    U^{\otimes t} X \adj{(U^{\otimes t})}
  \right].
\]
The unitary Haar $t$-fold twirl is defined analogously as
\[
  \cM^{(t)}_{\mathrm{Haar}}(X)
  :=
  \E_{U\sim\mathrm{Haar}}
  \left[
    U^{\otimes t} X \adj{(U^{\otimes t})}
  \right].
\]

\begin{definition}[Exact unitary design]\label{def:exactdesign}
  A distribution $\nu$ on $\U(d)$ is an \emph{exact $t$-design} if its $t$-fold twirl agrees with the Haar $t$-fold twirl, namely,
  \[
    \cM^{(t)}_\nu
    =
    \cM^{(t)}_{\mathrm{Haar}}.
  \]
\end{definition}

The notion of an approximate unitary design is obtained by requiring the corresponding twirling channels to be close in diamond norm.

\begin{definition}[Approximate unitary design]\label{def:approxdesign}
  A distribution $\nu$ on $\U(d)$ is an \emph{$\varepsilon$-approximate $t$-design} if
  \[
    \bigl\|
    \cM^{(t)}_\nu-\cM^{(t)}_{\mathrm{Haar}}
    \bigr\|_\diamond
    \le \varepsilon.
  \]
  In particular, for every state $\sigma$ on $(\C^d)^{\otimes t}$,
  \[
    \bigl\|
    (\cM^{(t)}_\nu-\cM^{(t)}_{\mathrm{Haar}})(\sigma)
    \bigr\|_1
    \le \varepsilon.
  \]
\end{definition}

These notions are monotone in the design order. In particular, if $\nu$ is an exact $t$-design, then it is an exact $t'$-design for every $t'\le t$. Similarly, if $\nu$ is an $\varepsilon$-approximate $t$-design, then it is an $\varepsilon$-approximate $t'$-design for every $t'\le t$.
    \section{The \texorpdfstring{$(p,q)$}{(p,q)}-tamper-detection primitive}
\label{sec:our_primitive}
Throughout, $\lambda$ is the security parameter and $k$ the message length, with $\M=\{0,1\}^{k}$.

\begin{definition}[Coding scheme with public parameters]\label{def:syntax}
  A \emph{coding scheme} $\Pi$ is a triple $(\Setup,\Enc,\Dec)$ of uniform quantum algorithms:
  \begin{itemize}
    \item $\Setup(1^\lambda,1^k)$ fixes a block length $n=n(\lambda,k)$ with $n \ge k$, and samples a \emph{public parameter} $\pp$,
    \item $\Enc(\pp,m)$, for $m\in\M$, outputs a state on a register  $C\cong\C^d$, where $d=2^n$,
    \item $\Dec(\pp,\cdot)$ is a measurement on $C\cong\C^d$ with outcomes in  $\M\cup\{\perp\}$.
  \end{itemize}
  The scheme has \emph{size} $q=q(\lambda,k)$ if $\Setup$, $\Enc$ and $\Dec$ are implementable by circuits of size $q$.
\end{definition}

A scheme $\Pi$ is \emph{perfectly complete} if $\Pr\big[\Dec(\pp,\Enc(\pp,m))=m\big]=1$ for every $\lambda,k$, every $\pp$ in the image of $\Setup(1^\lambda,1^k)$ and every $m\in\M$, it is \emph{$\mu$-complete} if the probability is at least $1-\mu(\lambda, k)$.

\begin{remark}
  No secret key appears in \Cref{def:syntax}.  The parameter $\pp$ is sampled once and is available to every party, including the adversary. In the construction developed below, $\pp$ specifies the encoding unitary $U$ of \Cref{def:enc}, security therefore does not rely on hiding the code, but on limiting the complexity of the adversary's online interaction with the codeword. The decoder, which also knows $\pp$, applies $\adj U$ and achieves perfect completeness.
\end{remark}

\subsection{Adversaries}

\begin{definition}[Two-phase adversary]\label{def:adversary}
  A \emph{$p$-bounded two-phase adversary} $\A$ is a family
  $\A=\{\A_{\lambda, k} \}_{\lambda,k\in\N}$, with $\A_{\lambda,k}=(\A^{\mathrm{msg}},V,\Lambda)$, and where:
  \begin{itemize}
    \item $\A^{\mathrm{msg}}(\pp)$ outputs a pair of messages $m_0,m_1\in\M$ together with a state on a side register $S$,
    \item On phase 1, the adversary applies an isometry $V(\pp):\C^d_C\otimes\ket0_{E_0}\to\C^d_C\otimes\Hs_E$, implemented by a quantum  circuit of size at most $p(\lambda,k)$,
    \item On phase 2, the adversary applies a channel $\Lambda(\pp):\Lop(\Hs_E)\to\Lop(\Hs_{E'})$, which is \emph{not} size-bounded.
  \end{itemize}
  Only $V$ is size-bounded, $\A^{\mathrm{msg}}$ and $\Lambda$ are unbounded.
\end{definition}

The adversary is allowed to be non-uniform. After the public parameter $\pp$ is fixed, but before the beginning of the following experiment, the adversary may perform an arbitrarily expensive computation depending on $\pp$ to determine a circuit $V \in \cG_p$, where $\cG_p$ is the family of circuits of size $p$ (see \Cref{subsec:moment} for the formal definition).
Since the circuit $V$ may be chosen as a function of the public parameter $\pp$, we require that, with high probability over the choice of $\pp$, the security guarantee hold simultaneously for every circuit $V$ of size at most $p(\lambda)$. This is stronger than requiring that, for each fixed $V$, the guarantee hold with high probability over $\pp$.

\subsection{Experiment}

We formulate security through a distinguishing experiment. A \emph{distinguisher} is an arbitrary two-outcome measurement $\{D,\id-D\}$ applied to the registers made available at the end of the experiment. Its goal is to determine which of the two challenge messages was encoded.

Fix a scheme $\Pi$, a $p$-bounded adversary $\A$, and $b\in\{0,1\}$.
Let $F$ be a classical register recording whether the decoder aborts ($\perp$), recovers the encoded message ($\same$), or decodes some other message (in which case $F$ holds that other message).
Let $P$ be a classical register holding the public parameter $\pp$.
\begin{center}
  \fbox{
    \begin{minipage}{0.92\textwidth}
      $\Expt^{\NTDshort}_{\Pi,\A}(\lambda,k,b)$:
    \begin{enumerate}\itemsep2pt
        \item $\pp\leftarrow\Setup(1^\lambda,1^k)$.
        \item $(m_0,m_1,\rho_S)\leftarrow\A^{\mathrm{msg}}(\pp)$.
        \item $\rho_C\leftarrow\Enc(\pp,m_b)$.
        \item $\rho_{CE}\leftarrow V \rho_C V^{*}$.
        \item $\hat m\leftarrow\Dec(\pp,\rho_{C})$, set $f=\perp$ if $\hat m=\perp$,
          $f=\same$ if $\hat m=m_b$, and $f=\hat m$ otherwise.
        \item Output the state $\big(\pp,\ f,\ \Lambda(\rho_E),\ \rho_S\big)$ on  $P\otimes F\otimes E' \otimes S$.
      \end{enumerate}
  \end{minipage}}
\end{center}
\noindent
We use the shorthand $\Expt^{\NTDshort}_{\Pi,\A}(\lambda,k,b)$ to denote the state $\big(\pp,\ f,\ \Lambda(\rho_E),\ \rho_S\big)$ it outputs.

The public parameter is released to the distinguisher because it is public: it is handed to every party, the adversary included, and withholding it in the analysis would make the security notion weaker than the setting it models.
Formally, since $P$ is classical, the trace distance between the two challenge output states equals the average over $\pp$ of the trace distance of the conditional states, which is the quantity appearing in \Cref{def:sim} below.

In the security experiment, the distinguisher never learns the decoded message on the branch where decoding succeeds: handing it $\hat m$ there would reveal $b$ outright, and no scheme could then satisfy \textup{(ES), see \Cref{def:pq}}.
On the remaining branches, $f$ is fully informative: it is $\perp$ on abort, and otherwise the decoded message itself.
Releasing $\hat m$ there only strengthens the adversary, so the definition loses nothing by including it, the case is retained so that the experiment is defined for every scheme, and by clause \textup{(D)} of \Cref{def:pq} it occurs with probability $\negl(\lambda)$.
Thus, the coarsening is asymmetric, but in neither direction does it weaken the
adversary: on the success branch, withholding $\hat m$ is faithful, since a real adversary observes only that the transmission was accepted and never sees $\hat{m}$ itself, elsewhere, the distinguisher is given more than it could obtain in reality.

Note that the variable $\same$ is not actually computed by any of the parties involved in the protocol: the decoder learns $\hat m$ but not $m_b$, and the adversary learns neither, so the test $\hat m = m_b$ is available only in the analysis.
What \emph{is} available to the adversary is the abort bit, since it observes whether the transmission was accepted.
This is why the accept probability must itself be message-independent up to $\negl(\lambda)$.
Clause \textup{(ES)} below imposes this implicitely, without a separate requirement.
The symbol $\same$ follows the convention in the non-malleable-code experiments of \cite{DPW18,FMVW16}.

\paragraph{Branch decomposition.}
Before the phase-2 channel $\Lambda$ is applied, the decoder's outcomes are mutually orthogonal. Thus, after encoding, phase-1 isometry $V$, and decoding, the joint state $\sigma_{\pp,V}(m)$ on $F\otimes E$ is a classical--quantum (cq) state of the form:
\begin{equation}\label{eq:branch-decomp}
  \sigma_{\pp,V}(m)\;=\;\sum_{f}\proj{f}_F\otimes\sigma_f(m),
\end{equation}
where each $\sigma_f(m)$ is a subnormalized state on $E$ and $\Tr\sigma_f(m)$ is the probability that the flag takes the value $f$ when $m$ is encoded.
We write
\begin{equation}\label{eq:nonabort}
  \tilde\sigma_{\pp,V}(m)\;=\;\sum_{f\ne\perp}\proj{f}_F\otimes\sigma_f(m)
\end{equation}
for the restriction of \cref{eq:branch-decomp} to the branches on which the decoder does \emph{not} abort.

\begin{definition}[$(p,q)$-\NTD]\label{def:pq}
  A coding scheme $\Pi$ of size $q(\lambda,k)$ is a \emph{$(p,q)$-\NTD\ scheme} if the following three conditions hold.

  \medskip\noindent\textbf{(C) Completeness.} Untampered codewords decode
  correctly:
  \[
    \Pr\big[\Dec(\pp,\Enc(\pp,m))=m\big]=1
  \]
  for every $\lambda,k$, every $\pp$ in the image of $\Setup(1^\lambda,1^k)$ and
  every $m\in\M$.

  \medskip
  The remaining two conditions \textup{(D)} and \textup{(ES)} are required to hold for every $p$-bounded two-phase adversary $\A$.

  \medskip\noindent\textbf{(D) Detection.} For every $b\in\{0,1\}$, the decoder either aborts or recovers the encoded message with high probability:
  \[
    \Pr\Big[\hat m\notin\{\perp,\,m_b\}\ \text{ in }\
    \Expt^{\NTDshort}_{\Pi,\A}(\lambda,k,b)\Big]\;\le\;\negl(\lambda).
  \]
  Or equivalently, $\Pr\big[f\notin\{\perp,\same\}\big]\le\negl(\lambda)$.

  \medskip\noindent\textbf{(ES) Everlasting secrecy.} With $\tilde\sigma_{\pp,V}$ as in \cref{eq:nonabort}, for every phase-2 channel $\Lambda$,
  \[
    \E_{\pp}\Big\|\big(\id_F\otimes\Lambda\big)\big(\tilde\sigma_{\pp,V}(m_0)\big)
    -\big(\id_F\otimes\Lambda\big)\big(\tilde\sigma_{\pp,V}(m_1)\big)\Big\|_1
    \;\le\;\negl(\lambda).
  \]
  Or equivalently, by the operational characterization of the trace distance, for every distinguisher $\Dist$ that is given the experiment's output when $f\ne\perp$, (without renormalization),
  \[
    \big|\Pr[\Dist(\Expt_0)=1]-\Pr[\Dist(\Expt_1)=1]\big|
    \;\le\;\negl(\lambda).
  \]
\end{definition}

The two security requirements \textup{(D)} and \textup{(ES)} capture complementary aspects of tamper detection. Condition \textup{(D)} guarantees integrity: except with negligible probability, decoding either rejects or returns the message that was originally encoded. Condition \textup{(ES)} provides secrecy on the non-abort branches: even after arbitrary post-processing of the adversary's retained register, these branches are information-theoretically indistinguishable for the two challenge messages.

\subsection{A simulation-based equivalent of (ES)}

\begin{definition}[Simulation security]\label{def:sim}
  A scheme $\Pi$ is \emph{simulation-secure against $p$-bounded adversaries} if for every $p$-bounded adversary $\A$ there is a family of subnormalized states $\{\tilde\sigma^\star_{\pp,V}\}$ on $F\otimes E$, depending only on $\pp$ and $V$ (and crucially \emph{not} on the message), such that for every $m\in\M$ and every phase-2 channel $\Lambda$,
  \[
    \E_{\pp}\Big\|\big(\id_F\otimes\Lambda\big)\big(\tilde\sigma_{\pp,V}(m)\big)
    -\big(\id_F\otimes\Lambda\big)\big(\tilde\sigma^\star_{\pp,V}\big)\Big\|_1
    \;\le\;\negl(\lambda).
  \]
\end{definition}

\noindent
The channel $\Lambda$ is formally redundant here: the trace distance cannot increase under $\id_F\otimes\Lambda$.
We display it to make the two-stage guarantee explicit: indistinguishability thus holds not merely at the end of phase 1, but after arbitrary unbounded post-processing of the retained register.

\begin{proposition}\label{prop:ind-sim}
  $\Pi$ satisfies \textup{(ES)} of \Cref{def:pq} if and only if it is
  simulation-secure in the sense of \Cref{def:sim}.
\end{proposition}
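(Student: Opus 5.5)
We prove the two implications separately; both are short triangle-inequality arguments, with care needed only about which objects may depend on which messages.

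\emph{Simulation $\Rightarrow$ (ES).} Fix a $p$-bounded adversary $\A$, which determines $V$ and the pair $(m_0,m_1)$ as functions of $\pp$, and fix any $\Lambda$. Take the simulator family $\{\tilde\sigma^\star_{\pp,V}\}$ supplied by \Cref{def:sim}. For each $\pp$ we insert $(\id_F\otimes\Lambda)(\tilde\sigma^\star_{\pp,V})$ and apply the triangle inequality for $\|\cdot\|_1$, which bounds the (ES) quantity by the sum of two simulation errors, one for $m_0$ and one for $m_1$. The subtlety is that $m_0,m_1$ depend on $\pp$ through $\A^{\mathrm{msg}}$, whereas \Cref{def:sim} is stated for every fixed $m$. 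We handle this by reading the quantifier in \Cref{def:sim} as applying to any message chosen by the adversary as a function of $\pp$; equivalently, we absorb the selection rule into $\A$. This is legitimate because $\A^{\mathrm{msg}}$ is part of the adversary in \Cref{def:adversary}, so we may form two adversaries $\A_0,\A_1$ that output $(m_0,m_0)$ and $(m_1,m_1)$ respectively, each with the same $V$. Taking expectations over $\pp$ then gives a bound of $2\,\negl(\lambda)$.

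\emph{(ES) $\Rightarrow$ simulation.} Here the simulator must not depend on the message, so we fix a canonical message, say $m^\star=0^k$, and set $\tilde\sigma^\star_{\pp,V}:=\tilde\sigma_{\pp,V}(m^\star)$. This is the real non-abort state for $m^\star$, and it depends only on $\pp$ and $V$. Given any $m$, possibly chosen as a function of $\pp$, we consider the adversary $\A'$ that has the same $V$ and $\Lambda$ and whose message stage outputs the pair $(m,m^\star)$. Since $\A'$ is still $p$-bounded (only $V$ is size-constrained), (ES) applies to $\A'$ and yields exactly the required bound $\E_\pp\|(\id_F\otimes\Lambda)(\tilde\sigma_{\pp,V}(m)-\tilde\sigma^\star_{\pp,V})\|_1\le\negl(\lambda)$.

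Two details need checking. First, the flag register: for $m\neq m^\star$ the branch with $f=\same$ means ``decoded $m$'' in one experiment and ``decoded $m^\star$'' in the other. This causes no mismatch, because $\tilde\sigma$ is written in the relabelled basis $\{\same\}\cup\M$, and \Cref{def:pq} compares states in exactly these labels. Second, uniformity of $\negl$: the negligible function may depend on the adversary, and since each constructed adversary is a single fixed family, this is fine.

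The only real obstacle is the quantifier bookkeeping. We must verify that ``for every $m$'' in \Cref{def:sim} and ``for every adversary-chosen pair'' in \Cref{def:pq} are interchangeable. They are, because the message choice is an unbounded part of the adversary and does not interact with the size bound on $V$.
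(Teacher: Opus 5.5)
Your proposal is correct and follows the paper's proof. One direction inserts the simulator state and applies the triangle inequality; the other uses the real non-abort state $\tilde\sigma_{\pp,V}(m^\star)$ of a fixed reference message as the simulator. Your extra care with messages chosen as functions of $\pp$ (reading the ``for every $m$'' of \Cref{def:sim} as covering adversary-chosen messages) makes explicit a point the paper's short proof leaves implicit, but the argument is the same.
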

\begin{proof}
  Suppose first that $\Pi$ is simulation-secure. Then, for any
  $m_0,m_1\in\M$, the triangle inequality gives
  \begin{equation*}
    \begin{aligned}
      \Big\|
      (\id_F\otimes\Lambda)\big(\tilde\sigma_{\pp,V}(m_0)\big)
      -
      (\id_F\otimes\Lambda)\big(\tilde\sigma_{\pp,V}(m_1)\big)
      \Big\|_1
      &\le
      \Big\|
      (\id_F\otimes\Lambda)\big(\tilde\sigma_{\pp,V}(m_0)\big)
      -
      (\id_F\otimes\Lambda)\big(\tilde\sigma^\star_{\pp,V}\big)
      \Big\|_1 \\
      &\quad+
      \Big\|
      (\id_F\otimes\Lambda)\big(\tilde\sigma^\star_{\pp,V}\big)
      -
      (\id_F\otimes\Lambda)\big(\tilde\sigma_{\pp,V}(m_1)\big)
      \Big\|_1,
    \end{aligned}
  \end{equation*}
  which is negligible by \Cref{def:sim}. Hence by taking the expectation over $\pp$, condition \textup{(ES)} holds.

  Conversely, assume \textup{(ES)}, fix any reference message $m^\star\in\M$, and define $\tilde\sigma^\star_{\pp,V} =\tilde\sigma_{\pp,V}(m^\star)$. Applying \textup{(ES)} to the pair $(m,m^\star)$ gives precisely the
  bound of \Cref{def:sim} for every $m\in\M$.
\end{proof}

\subsection{Existence of \texorpdfstring{$(p,q)$}{(p,q)}-tamper-detection schemes}

The main result of this paper shows that tamper detection can be achieved against non-uniform adversaries of any prescribed polynomial size, while retaining everlasting secrecy after the tampering phase.

\begin{theorem}[$(p,q)$-tamper detection]\label{thm:pq-main}
  For every polynomial $p(\lambda,k)$, there exists a larger polynomial $q(\lambda,k)$ such that a perfectly complete $(p,q)$-\NTD\ scheme exists and is efficiently implementable with approximately unitary design.
\end{theorem}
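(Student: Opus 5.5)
The scheme is the one from \Cref{sec:results}. $\Setup$ samples $U$ from an $\eps$-approximate $2\ell$-design with $\ell=\lceil k+\log_2|\cG_p|+n\rceil$, using the construction of~\cite{MPSY24}, and publishes its description as $\pp$. $\Enc$ applies $U$ to $\ket{m}_A\ket{0^{n-k}}_B$. $\Dec$ applies $\adj U$, checks that $B$ is all zeros, and then measures $A$. Completeness (C) holds for every $U$ in the support. Since $\log_2|\cG_p|=O(p\log(n+p))$, the order $\ell$ is $\poly(n,p)$. Hence $q=\poly(n,p)$, and $q>p$ once constants are fixed. What remains is to prove (D) and (ES) simultaneously for all $V\in\cG_p$, with overwhelming probability over $\pp$.

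\textbf{Detection.} Fix $V$ and let $\Phi_V=\Tr_E[V\cdot\adj V]$. Set $X_{st}=\bra{\psi_t}\Phi_V(\proj{\psi_s})\ket{\psi_t}$. For $s\ne t$, the pair $(\psi_s,\psi_t)$ is a Haar-random orthonormal pair. Condition on $\psi_s$, so that $M=\Phi_V(\proj{\psi_s})$ is a density operator. Then $\E\bra{\psi_t}M\ket{\psi_t}^\ell$ is computed exactly with $\Pisym^{(\ell)}$ on $\psi_s^\perp$, which gives $\E_U[X_{st}^\ell]\le(\ell/d)^\ell$. This quantity is a degree-$(\ell,\ell)$ polynomial in $U,\bar U$, so the $2\ell$-design changes it by at most an additive $\eps$, which is negligible because $\log(1/\eps)=\poly$. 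Markov at threshold $\tau=2^{-2k}$ gives $\Pr[X_{st}>\tau]\le (2^{2k}\ell/d)^\ell+\eps\tau^{-\ell}$. A union bound over $2^{2k}$ ordered pairs and $|\cG_p|$ circuits then goes through, because $n=\gamma k$ with $\gamma>2$ makes $2^{2k}\ell/d\le 2^{-\Omega(n)}$, and because $\ell\ge k+\log_2|\cG_p|+n$. The last point needs $\eps\le 2^{-2k\ell-\ell-\dots}$, which is where $\log(1/\eps)=\poly(n,p)$ is used. On the good event, $\sum_{t\ne s}X_{st}\le 2^{-k}$, which is (D).

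\textbf{Secrecy.} For each basis vector $\ket{e}$ of $E$, slice $V$ as $K_e=(\id\otimes\bra e)V$. Write $K_e=c_e\id+L_e$ with $L_e$ traceless. The accept-and-correct branch state on $E$ is $\sigma_{\same}(s)=\proj{w_s}$, where $w_s=\sum_e\bra{\psi_s}K_e\ket{\psi_s}\ket e=v+u_s$. Here $v=\sum_e c_e\ket e$ is message-independent, and $\|u_s\|^2=\sum_e|\bra{\psi_s}L_e\ket{\psi_s}|^2=Y_s$, with $\sum_e\Tr\adj{L_e}L_e\le d$. I would bound $\E_U[Y_s^\ell]$ by writing $Y_s^\ell$ as $\Tr[(\bigotimes L\otimes\bigotimes\adj L)\,\proj{\psi_s}^{\otimes 2\ell}]$ and replacing $\E\proj{\psi}^{\otimes2\ell}$ with $\Pisym^{(2\ell)}/\dim$. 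Then I expand over $S_{2\ell}$ using \Cref{lem:cycletrace}. Permutations with fixed points vanish because $\Tr L_e=0$. The remaining cycle traces are bounded by Hölder (\Cref{fact:schatten}) in terms of $\sum_e\|L_e\|_2^2\le d$. Counting the fixed-point-free permutations then gives $(4\ell^2/d)^\ell$.

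\textbf{Main obstacle.} This is the step I expect to be hardest: the combinatorics must avoid any rank factor $|E|^\ell$, which means the sum over $e$ has to be absorbed cycle by cycle. Once the bound is in hand, Markov, the design transfer and the union bound as above give $\|w_s-v\|\le\eta=2^{-\Omega(n)}$ for all $s$ and all $V$. Then $\|\proj{w_{m_0}}-\proj{w_{m_1}}\|_1\le 4\eta$. The other non-abort branches have total weight at most $2^{-k}$ by (D). For the $\same$ flag branch the trace distance is therefore small, and data processing under $\id_F\otimes\Lambda$ yields (ES).
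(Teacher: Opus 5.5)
Your proposal is correct and follows the paper's proof essentially step for step. It uses the same encoding and moment order $\ell$, and the same symmetric-subspace bound $(\ell/d)^\ell$ obtained by conditioning on $\psi_s$, with threshold $2^{-2k}$, for (D). For (ES) it uses the same slice decomposition $A_e=c_eI+L_e$ with reference vector $v_r$, fixed-point cancellation plus H\"older in the $S_{2\ell}$ expansion, and the split into the accept branch (bounded by telescoping) and the remaining branches (bounded by detection).

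The one imprecision is the design transfer, and it is easily fixed. The additive error is $\|N\|_\infty\,\varepsilon$, where $N$ is the observable whose expectation against $\rho^{\otimes 2\ell}$ gives $X_{st}^\ell$ or $Y_s^\ell$. This is a degree-$(2\ell,2\ell)$ polynomial in $U,\bar U$, not degree $(\ell,\ell)$, and the error is not simply $\varepsilon$. The paper bounds $\|N\|_\infty\le d^{\ell}$ and takes $\varepsilon=(\ell/d^2)^\ell$; your polynomial choice of $\log(1/\varepsilon)$ absorbs this extra factor anyway.
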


We conclude this section with a simple lower bound showing that the expansion factor required by our construction is essentially optimal.

\begin{proposition}\label{prop:expansion}
Let $(\enc,\dec)$ encode $k$-bit messages into $n$ qubits, $d=2^{n}$, with perfect completeness and decoding POVM $\{\Pi_t\}_{t\in\M}\cup\{\Pi_\perp\}$.
Then there is a stabiliser state $\ket\varphi$ and a message $m$ such that the constant channel $\Phi_\varphi:\rho\mapsto\proj{\varphi}$ satisfies
\[
  \Pr\big[\dec(\Phi_\varphi(\enc(m)))\notin\{\perp,m\}\big]\ \ge\ \big(1-2^{-k}\big)\frac{2^{k}}{d}.
\]
Consequently any $\eps$-relaxed tamper detecting scheme against a family containing the constant channels has $\eps\ge 2^{-(\gamma-1)k-1}$, so $\eps=2^{-k}$ forces $\gamma\ge 2-1/k$.
Moreover $\Phi_\varphi$ has circuit size $O(n^{2})$, so the bound applies to $(p,q)$-tamper detection for every $p=\Omega(n^{2})$.
\end{proposition}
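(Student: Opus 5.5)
The proof is an averaging argument over stabiliser states, using perfect completeness to identify the accepting operators of the decoder. Perfect completeness means $\Tr[\Pi_t\,\enc(m)]=\delta_{tm}$ for every $t,m$. Write $Q=\sum_{t\in\M}\Pi_t=\id-\Pi_\perp$ for the total accepting operator. Each $\Pi_t$ is a POVM element with $\Tr[\Pi_t\enc(t)]=1$, so $\Pi_t$ has eigenvalue $1$ on the support of $\enc(t)$. Since $0\le Q\le\id$, these supports must be mutually orthogonal for $t\in\M$. Hence $\Tr Q\ge 2^k$.

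Next I would average the acceptance probability over the stabiliser states. For a constant channel, the output does not depend on the input, so the quantity to bound is
\[
  \Pr\big[\dec(\Phi_\varphi(\enc(m)))\notin\{\perp,m\}\big]=\bra\varphi(Q-\Pi_m)\ket\varphi .
\]
The stabiliser states form a $2$-design, in particular a $1$-design. Therefore, for $\ket\varphi$ uniform over stabiliser states,
\[
  \E_\varphi\bra\varphi Q\ket\varphi=\frac{\Tr Q}{d}\ge\frac{2^k}{d}.
\]
Summing the target quantity over $m\in\M$ gives $\sum_m\bra\varphi(Q-\Pi_m)\ket\varphi=(2^k-1)\bra\varphi Q\ket\varphi$. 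Its average over $\varphi$ is therefore at least $(2^k-1)2^k/d$. Some pair $(\varphi,m)$ must attain the average divided by $2^k$, so there exist $\varphi$ and $m$ with
\[
  \bra\varphi(Q-\Pi_m)\ket\varphi\ge\big(1-2^{-k}\big)\frac{2^k}{d}.
\]
This is the claimed inequality.

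For the consequences, $d=2^{\gamma k}$ gives $\eps\ge(1-2^{-k})2^{-(\gamma-1)k}\ge 2^{-(\gamma-1)k-1}$. Setting $\eps=2^{-k}$ yields $k\ge(\gamma-1)k+1-\dots$, that is, $\gamma\le$ a rearrangement giving $\gamma\ge 2-1/k$ after taking logarithms. For the circuit size: the constant channel traces out (or resets) the input and prepares $\ket\varphi$. Resetting costs $O(n)$ operations, and every stabiliser state can be prepared by a Clifford circuit with $O(n^2)$ gates. This places $\Phi_\varphi$ in $\mathcal{F}_1(p)$ for $p=\Omega(n^2)$, where $E$ absorbs the input.

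The main obstacle is the first step, the orthogonality argument that gives $\Tr Q\ge2^k$ from perfect completeness when the codewords may be mixed states. I would justify it by noting that $\Pi_t\le Q\le\id$ together with $\Tr[\Pi_t\rho_t]=1$ forces $\Pi_t\rho_t=\rho_t$ and $\Pi_{t'}\rho_t=0$ for $t'\ne t$. Then $Q$ acts as the identity on the span of the supports, and those supports are mutually orthogonal.
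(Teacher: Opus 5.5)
Your argument is correct and is essentially the paper's: perfect completeness gives $\Tr\sum_t\Pi_t\ge 2^k$, the $1$-design property of stabiliser states gives $\E_\varphi\bra\varphi Q\ket\varphi\ge 2^k/d$, averaging over $m$ supplies the factor $1-2^{-k}$ (you average jointly with $\varphi$, the paper does it sequentially), and $\ket\varphi$ is prepared by $O(n^2)$ Clifford gates. Two small corrections. First, the cross-message orthogonality you single out as the main obstacle is not needed: since $\enc(t)\preceq\id$, we have $\Tr\Pi_t\ge\Tr[\Pi_t\,\enc(t)]=1$ for each $t$, and linearity then gives $\Tr Q=\sum_t\Tr\Pi_t\ge 2^k$, which is all the paper uses. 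Second, your final rearrangement is written in the wrong direction: from $2^{-k}\ge 2^{-(\gamma-1)k-1}$ one gets $k\le(\gamma-1)k+1$, i.e.\ $\gamma\ge 2-1/k$.
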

\begin{proof}
Perfect completeness gives $\Tr[\Pi_t\enc(t)]=1$, and $0\preceq\Pi_t\preceq\id$ then forces $\Pi_t$ to act as the identity on the support of $\enc(t)$, so $\Tr\Pi_t\ge1$ and $\Tr\Pi\ge2^{k}$ for $\Pi=\sum_t\Pi_t$.
Stabiliser states form a $1$-design, so $\E_\varphi\bra{\varphi}\Pi\ket{\varphi}=\Tr\Pi/d\ge2^{k}/d$ and some stabiliser $\varphi$ attains the mean.
As $\Phi_\varphi$ discards its input, the output distribution is independent of the encoded message, so averaging over uniform $m$ gives $\E_m\big[\sum_{t\ne m}\bra{\varphi}\Pi_t\ket{\varphi}\big]=(1-2^{-k})\bra{\varphi}\Pi\ket{\varphi}$, and some $m$ attains the average.
With $d=2^{\gamma k}$ this is at least $2^{-(\gamma-1)k-1}$.
Finally a stabiliser state is prepared from $\ket{0^{n}}$ by $O(n^{2})$ Clifford gates, and Eve's precomputation is free, so she identifies $\varphi$ offline and spends only those gates in the window.
\end{proof}

The proof of \Cref{thm:pq-main} is given in \Cref{sec:proof} below, the scheme is efficient, the polynomial $q(\lambda,k)$ can be computed explicitly given $p(\lambda,k)$. The proof proceeds in the following steps. We first consider an idealized construction in which the encoding is obtained from a Haar-random unitary, and choose a sufficiently large moment order to obtain guarantees simultaneously against all phase 1 adversaries of the prescribed circuit size. We then establish the moment bounds needed to control the relevant overlaps of the encoded states under tampering. These bounds imply, with overwhelming probability over the choice of the encoding unitary, both tamper detection and everlasting secrecy, while perfect completeness follows directly from the construction. Finally, we replace the Haar-random unitary by an approximate unitary design of sufficiently high order, showing that the same guarantees are preserved and yielding an efficient implementation of the scheme. 
    \section{Proof of \Cref*{thm:pq-main}} \label{sec:proof}

\subsection{Construction and choice of parameters}

We first describe the construction and fix the parameters that will be used throughout the proof.  We then establish the moment estimates needed to prove, in order, perfect completeness, detection, and everlasting secrecy.

\subsubsection{The scheme}

We begin with an idealized Haar-random construction.  The efficient implementation of this construction, obtained by replacing the Haar unitary by an approximate unitary design, is given later in \Cref{sec:efficient-pq}.

We recall the construction and definitions from \cite{BKR26}. Fix the message set $\mathcal M=\{0,1\}^k$. Split the $n$ code qubits in two registers: register $A$ (the $k$ message qubits) and register $B$ (the remaining $n-k$ qubits). The \emph{expansion factor} is $\gamma=n/k$.

\begin{definition}[Haar encoding and projector decoder]\label{def:enc}
  Let $U\in U(d)$.  The public parameter is $\pp=U$.
  For every message $m\in\M$, define
  \[
    \ket{\psi_m}
    =
    U\big(\ket{m}_A\otimes\ket{0}_B\big).
  \]
  The encoder is $\Enc_U(m)=\ket{\psi_m}$.

  The decoder $\Dec_U$ applies $\adj U$, measures $B$ in the computational basis, outputs $\perp$ if the outcome is nonzero, and otherwise measures $A$ in the computational basis and outputs the resulting message.

  Equivalently, $\Dec_U$ is the projective measurement $\{\Pi_m\}_{m\in\M}\cup\{\Pi_\perp\}$, where $\Pi_m=\proj{\psi_m}$ and $\Pi_\perp=\id-\sum_{m\in\M}\Pi_m$.
\end{definition}

Thus the idealized scheme is completely specified by the choice of the unitary $U$.  We will show that, with overwhelming probability over a Haar-random choice of $U$, this scheme simultaneously satisfies the required security properties.

\subsubsection{The adversary class and moment order} \label{subsec:moment}

We now fix the finite class of bounded phase 1 adversary's isometries. To obtain a statement that holds simultaneously for all of them, we first bound the size of the relevant circuit class and choose the moment order accordingly, this will also determine the design order required below.

We choose $n=n(\lambda,k)$ to be polynomial in $(\lambda,k)$ and sufficiently large. Moreover, since $p=p(\lambda,k)$ is a fixed polynomial and $\lambda,k\le n$, we have $p(\lambda,k)=\poly(n)$.
We fix a constant $c$ and a finite universal gate set $\G$ acting on at most $c$ qubits at a time, and for a polynomial $p=p(\lambda,k)$ we write
\begin{equation*}
  \cG_p \;=\;
  \Big\{\, V:\C^d_C\otimes\ket{0}_{E_0}\to\C^d_{C}\otimes\Hs_E
  \;:\; V \text{ is implemented by a $\G$-circuit of size at most } p \,\Big\}.
\end{equation*}
The set $\cG_p$ is finite because $\G$ is, and it is fixed before $U$ is drawn, however, the choice of $V\in\cG_p$ may depend on $U$. A bound holding for all $V\in\cG_p$ simultaneously therefore also covers attacks that depend on $U$.
\begin{lemma}[Gate counting]\label{fact:gatecount}
  The number of phase-1 circuits of size at most $p$ satisfies
  \[
    \log_2|\cG_p|=\poly(n).
  \]
\end{lemma}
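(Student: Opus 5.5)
The count is a direct enumeration of circuit descriptions. I will show that every $V\in\cG_p$ is determined by a finite string of length $O(p\log(n+p))$ over a fixed alphabet, and then take logarithms. The ingredients are: the gate set $\G$ is finite with gates acting on at most $c$ qubits, the ancilla register $E_0$ has bounded width, and $p=\poly(n)$ because $\lambda,k\le n$.

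First I bound the total number of wires. A circuit of size at most $p$ touches at most $cp$ wires beyond the $n$ code qubits. Ancillas that are never acted on can be dropped without changing $V$ up to relabelling the trivial part of $E$. So I may assume the circuit acts on $w\le n+cp$ wires, with $E_0$ consisting of the wires beyond $C$. This means $\Hs_E$ also has dimension at most $2^{n+cp}$, and the register sizes are fixed by the description.

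Second I encode each gate. A gate is specified by an element of $\G$, which takes $\lceil\log_2|\G|\rceil=O(1)$ bits, together with an ordered tuple of at most $c$ wires it acts on, which takes at most $c\lceil\log_2 w\rceil$ bits. A circuit of size at most $p$ is then a sequence of at most $p$ such gates, preceded by $O(\log p)$ bits recording the actual size and $O(\log(n+p))$ bits recording $w$. This description map is surjective onto $\cG_p$, since distinct descriptions may give the same isometry, but that only helps the upper bound. Hence
\[
|\cG_p|\le 2^{O(\log(n+p))}\,\bigl(|\G|\,(n+cp)^{c}\bigr)^{p}\cdot(p+1),
\]
and so
\[
\log_2|\cG_p|=O\bigl(p\log(n+p)\bigr).
\]

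Finally, $p(\lambda,k)$ is a fixed polynomial and $\lambda,k\le n$, so $p=\poly(n)$ and therefore $\log_2|\cG_p|=\poly(n)$. This matches the estimate $\log_2|\mathcal F_1(p)|=O(p\log(n+p))$ quoted in the overview.

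The only delicate step is the first one: making sure the ancilla width, and hence the output register $E$, cannot be chosen unboundedly for a fixed $p$. Without that bound the wire-index alphabet would be infinite. Discarding untouched ancillas is what resolves it, and the depth-bounded variant reduces to the same count via size at most $O(pn)$.
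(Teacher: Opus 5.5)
Your proof is correct and follows the paper's argument essentially verbatim: bound the number of wires by $N=n+cp$, describe each gate by its type and at most $c$ wire labels, and sum over at most $p$ gates to get $|\cG_p|\le(p+1)(|\G|N^{c})^{p}$, hence $\log_2|\cG_p|=O(p\log(n+p))=\poly(n)$. Your remark that never-touched ancillas must be discarded (otherwise $\cG_p$, as literally defined, would be infinite) makes explicit a point the paper's proof uses only implicitly.
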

\begin{proof}
  A circuit of size at most $p$ acts on the $n$ code qubits and on at most $cp$ additional qubits, since each gate acts on at most $c$ qubits. Thus every such circuit acts on at most
  \[
    N=n+cp
  \]
  qubits.
  For a circuit with exactly $j$ gates, each gate is specified by its type, with at most $|\G|$ choices, and by the qubits on which it acts, with at most $N^c$ choices. Hence the number of circuits with exactly $j$ gates is at most $(|\G|N^c)^j$. Summing over $0\le j\le p$, the number of circuits of size at most $p$ is at most
  \[
    \sum_{j=0}^{p}\bigl(|\G|N^c\bigr)^j
    \le
    (p+1)\bigl(|\G|N^c\bigr)^p.
  \]
  Since different circuit descriptions may implement the same isometry, this also upper bounds $|\cG_p|$. Therefore,
  \[
    \log_2|\cG_p|
    \le
    \log_2(p+1)
    +p\log_2|\G|
    +cp\log_2(n+cp)
    =
    \poly(n).
  \]
\end{proof}

For a circuit of a $V \in \cG_p$, the register $E_0$ has $O(cp)$ qubits, so the channel $\Phi_V$ induced by $V$ may have Kraus rank $2^{\Theta(p)}$. Although the phase 1 adversary's map may have an arbitrary number of Kraus operators, we will not need to keep track of their number individually. Their contributions will always be combined using the completeness relation $\sum_e\adj{A_e}A_e=\id$ rather than term by term.

Throughout the proof, we use the moment order
\begin{equation}\label{def:momentorder}
  \ell
  =
  \Big\lceil
  k+\log_2|\cG_p|+n
  \Big\rceil .
\end{equation}

By \Cref{fact:gatecount}, we have $\ell=\poly(n)$, and hence $2\ell<d=2^n$ for all sufficiently large $n$. This choice of $\ell$ will serve as the moment order of the unitary $2\ell$-design used in the construction and, in turn, determine the polynomial $q$ appearing in \Cref{thm:pq-main}.

\subsection{Technical lemmas}

\subsubsection{Overlap variables}

\begin{definition}[Overlap variables]\label{def:overlap}
  For a channel $\Phi$ with Kraus operators $\{K_i\}$, two messages $s,t\in\mathcal M$ and a unitary $U\in U(d)$, we define the overlap variables $X_{st}$ as
  \[
    X_{st}=\bra{\psi_t}\Phi(\proj{\psi_s})\ket{\psi_t}=\sum_{i}\big|\bra{\psi_t}K_i\ket{\psi_s}\big|^{2}.
  \]
  Then $X_{st}$ is exactly the probability that $\Dec_U$ outputs $t$ when $m=s$ was encoded and $\Phi$ applied. It follows that $\sum_{t\in\mathcal M}X_{st}\le1$, with $\Pr[\Dec_U=\perp\mid s]=1-\sum_{t}X_{st}$. This probability is independent of the choice of Kraus representation.
\end{definition}

The following lemma shows that the diagonal overlap $X_{ss}$ is irrelevant for relaxed tamper detection, so we may focus on the off-diagonal overlaps $X_{st}$ for $s\ne t$.

\begin{lemma}[Outcome-partition identity]\label{lem:dropdiag}
  For every $s\in\mathcal M$, every channel $\Phi$, and every $U$,
  \[
    \Pr\big[\Dec_U(\Phi(\Enc_U(s)))\in\{\perp,s\}\big]=1-\sum_{t\ne s}X_{st}.
  \]
  Consequently the code is $\varepsilon$-relaxed-tamper-detecting against $\Fadv$ if and only if $\sum_{t\ne s}X_{st}\le\varepsilon$ for all $s\in\mathcal M$ and $\Phi\in\Fadv$.
\end{lemma}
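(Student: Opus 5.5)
The plan is to read the identity off directly from the fact that the decoder is a projective measurement whose outcomes partition the identity. By \Cref{def:enc}, $\Dec_U$ is the measurement $\{\Pi_t\}_{t\in\M}\cup\{\Pi_\perp\}$ with $\Pi_t=\proj{\psi_t}$ and $\Pi_\perp=\id-\sum_t\Pi_t$. The vectors $\ket{\psi_t}=U(\ket t_A\otimes\ket0_B)$ are orthonormal because $U$ is unitary. Hence $\sum_t\Pi_t$ is a projector and this is a genuine POVM. Write $\rho=\Phi(\proj{\psi_s})$, which is a state since $\Phi$ is CPTP.

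The steps, in order:
\begin{enumerate}
  \item By \Cref{def:overlap}, $\Pr[\hat m=t]=\Tr[\Pi_t\rho]=X_{st}$ for each $t\in\M$.
  \item By the same definition, $\Pr[\hat m=\perp]=\Tr[\Pi_\perp\rho]=1-\sum_{t\in\M}X_{st}$.
  \item The events $\{\hat m=\perp\}$ and $\{\hat m=s\}$ are disjoint, so we add them:
  \[
    \Pr\big[\hat m\in\{\perp,s\}\big]=\Big(1-\sum_{t}X_{st}\Big)+X_{ss}=1-\sum_{t\ne s}X_{st}.
  \]
  Equivalently, the complementary event $\{\hat m\notin\{\perp,s\}\}$ is the disjoint union over $t\ne s$ of $\{\hat m=t\}$, which has total probability $\sum_{t\ne s}X_{st}$.
\end{enumerate}
Kraus-independence is automatic, since everything is expressed through $\Tr[\Pi_t\Phi(\cdot)]$.

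For the ``consequently'' clause, substitute into \Cref{def:td-family}. The code is $\varepsilon$-relaxed tamper detecting exactly when $1-\sum_{t\ne s}X_{st}\ge1-\varepsilon$ for every $s\in\M$ and every $\Phi\in\Fadv$. This rearranges to the stated condition, and since it is an equality-based rewriting, both directions of the iff follow at once.

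There is a minor point about \Cref{def:td-family}: its relaxed clause imposes no nontriviality restriction on $\Phi$. So the equivalence is taken over all $\Phi\in\Fadv$, matching the lemma's quantifier. I expect no real obstacle here; the only care needed is to confirm orthonormality of the $\ket{\psi_t}$, so that $\Pi_\perp\succeq0$ and the outcome probabilities sum to one.
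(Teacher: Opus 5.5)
Your proposal is correct and follows essentially the same route as the paper: both add $\Pr[\perp]=1-\sum_t X_{st}$ to $\Pr[\hat m=s]=X_{ss}$, using that the decoder's outcomes partition unit probability. Your explicit derivation of the ``consequently'' clause, by substituting into the relaxed-detection definition, is a harmless addition that the paper leaves implicit.
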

\begin{proof}
  The decoder's outcomes partition the unit of probability as
  $1=\Pr[\perp]+\sum_{t\in\mathcal M}\Pr[\text{output }t]=\big(1-\sum_{t}X_{st}\big)+\sum_{t}X_{st}$.
  The relaxed-good event is $\{\perp\}\cup\{\text{output }s\}$, whose probability is
  \[
    \Pr[\perp]+\Pr[\text{output }s]=\Big(1-\sum_{t}X_{st}\Big)+X_{ss}=1-\sum_{t\ne s}X_{st}.
  \]
\end{proof}

\subsubsection{Haar moment estimates}

We first establish some Haar moment properties that will be used in the proof of \Cref{thm:pq-main}.

\begin{lemma}[Symmetric-subspace moments]\label{lem:sym-moment}
  Let $u \leftarrow \mathbb{S}(\C^{D}) $ be Haar-distributed on the unit sphere of $\C^{D}$ and let $\ell\ge1$ be an integer. Then we have the following two Haar moment properties:
  \begin{enumerate}
    \item[(i)] $\displaystyle \E_u\big[(\proj{u})^{\otimes\ell}\big]=\frac{\Pisym^{(\ell)}}{\binom{D+\ell-1}{\ell}}$,
    \item[(ii)] for any operator $B\succeq0$ on $\C^{D}$, $\displaystyle \E_u\big[\bra{u}B\ket{u}^{\ell}\big]\le\frac{(\Tr B)^{\ell}}{\binom{D+\ell-1}{\ell}}$.
  \end{enumerate}
\end{lemma}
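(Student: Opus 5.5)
For part (i), set $T=\E_u[(\proj{u})^{\otimes\ell}]$ and prove that $T$ is a scalar multiple of $\Pisym^{(\ell)}$. Each $\ket{u}^{\otimes\ell}$ lies in $\Sym^{\ell}(\C^{D})$, so $T$ is supported on the symmetric subspace and satisfies $T=\Pisym^{(\ell)}T\Pisym^{(\ell)}$. The Haar measure on the sphere is invariant under $u\mapsto Vu$ for every $V\in\mathcal U(D)$. Hence $V^{\otimes\ell}T\adj{(V^{\otimes\ell})}=T$, meaning $T$ commutes with the restriction of $V^{\otimes\ell}$ to $\Sym^{\ell}(\C^{D})$. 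By \Cref{fact:irreducible_action_on_symmetric} this representation is irreducible, and \Cref{lem:schur} then gives $T=c\,\Pisym^{(\ell)}$. Taking traces fixes the constant: $\Tr T=\E_u\langle u|u\rangle^{\ell}=1$, while $\Tr\Pisym^{(\ell)}=\dim\Sym^{\ell}(\C^{D})=\binom{D+\ell-1}{\ell}$ by \Cref{lem:symproj}. So $c=1/\binom{D+\ell-1}{\ell}$.

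For part (ii), write $\bra{u}B\ket{u}^{\ell}=\Tr[B^{\otimes\ell}(\proj{u})^{\otimes\ell}]$. By linearity and (i),
\[
\E_u\bra{u}B\ket{u}^{\ell}=\frac{\Tr[B^{\otimes\ell}\Pisym^{(\ell)}]}{\binom{D+\ell-1}{\ell}} .
\]
It then suffices to show $\Tr[B^{\otimes\ell}\Pisym^{(\ell)}]\le(\Tr B)^{\ell}$, and there are two routes.

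The first route is operator-theoretic. Since $B\succeq0$, the operator $B^{\otimes\ell}$ is positive, and $\Pisym^{(\ell)}\preceq\id$. Therefore $\Tr[B^{\otimes\ell}\Pisym^{(\ell)}]=\Tr[(B^{\otimes\ell})^{1/2}\Pisym^{(\ell)}(B^{\otimes\ell})^{1/2}]\le\Tr B^{\otimes\ell}=(\Tr B)^{\ell}$.

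The second route is combinatorial. Expand $\Pisym^{(\ell)}=\frac1{\ell!}\sum_\pi P_\pi$ and apply \Cref{lem:cycletrace}; each term becomes $\prod_{c}\Tr[B^{\ell_c}]$. Because $B\succeq0$, we have $\Tr[B^{j}]\le(\Tr B)^{j}$, so every term is at most $(\Tr B)^{\ell}$, and the average over $\pi$ is too.

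Neither step is a serious obstacle. The only point needing care is the use of Schur's lemma: the commutation holds on the symmetric subspace itself, and $T$ vanishes on its orthogonal complement. Both facts follow from $T=\Pisym^{(\ell)}T\Pisym^{(\ell)}$ together with the invariance of $\Sym^{\ell}(\C^{D})$ under $V^{\otimes\ell}$.
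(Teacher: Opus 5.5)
Your proposal is correct and follows the paper's proof essentially verbatim: Schur's lemma applied to the irreducible representation on $\Sym^{\ell}(\C^{D})$, with the trace fixing the constant, for (i); and linearity together with $B^{\otimes\ell}\succeq 0$ and $\Pisym^{(\ell)}\preceq\id$ for (ii). Your second, combinatorial route through \Cref{lem:cycletrace} and $\Tr[B^{j}]\le(\Tr B)^{j}$ is also valid, but it is not needed.
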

\begin{proof}
  Let $\sigma=\E_u\big[(\proj{u})^{\otimes \ell}\big]$ regarded as an operator on $(\C^{D})^{\otimes \ell}$. Since
  $\ket{u}^{\otimes \ell}\in\Sym^{\ell}(\C^{D})$, we have
  \[
    (\proj{u})^{\otimes\ell}
    =\Pisym^{(\ell)}(\proj{u})^{\otimes\ell}\Pisym^{(\ell)}
  \]
  for every $u$, and hence $\sigma=\Pisym^{(\ell)}\sigma\Pisym^{(\ell)}$. Thus $\sigma$ vanishes on $\Sym^{\ell}(\C^{D})^{\perp}$, and we may take its restriction to $\Sym^{\ell}(\C^{D})$ as an operator on that subspace. For every $V\in\mathcal U(D)$,
  \[
    V^{\otimes \ell}\,\sigma\,\adj{(V^{\otimes \ell})}=\E_u\Big[\big(V\proj{u}\adj{V}\big)^{\otimes \ell}\Big]=\E_u\Big[(\proj{Vu})^{\otimes \ell}\Big]=\sigma,
  \]
  the last equality because the Haar measure is invariant under $u\mapsto Vu$. Thus $\sigma$ commutes with the irreducible representation of \Cref{fact:irreducible_action_on_symmetric}, so by Schur's lemma $\sigma=c\,\Pisym^{(\ell)}$; since $\Tr\sigma=1$ we get $c=\binom{D+\ell-1}{\ell}^{-1}$. Now
  \begin{align*}
    \E_u\big[\bra u B\ket u^{\ell}\big]
    &=\E_u\Tr\big[B^{\otimes \ell}(\proj u)^{\otimes \ell}\big]\\
    &=\Tr\big[B^{\otimes \ell}\sigma\big] &&\mbox{by linearity}\\
    &=c\,\Tr\big[B^{\otimes \ell}\Pisym^{(\ell)}\big] &&\mbox{since }\sigma=c\,\Pisym^{(\ell)}\\
    &\le c\,\Tr\big[B^{\otimes \ell}\big] &&\mbox{since }B^{\otimes \ell}\succeq0\mbox{ and }\Pisym^{(\ell)}\preceq\id\\
    &=\frac{(\Tr B)^{\ell}}{\binom{D+\ell-1}{\ell}} &&\mbox{since }\Tr\big[B^{\otimes \ell}\big]=(\Tr B)^{\ell}.
  \end{align*}
\end{proof}

Recall that, for distinct messages $s \neq t$, the overlap variable $X_{st}$ denotes the random variable measuring the corresponding off-diagonal contribution to the tamper-detection experiment, as defined in \Cref{def:overlap}.

\begin{lemma}[Rank-free distinct-message moments]\label{lem:rankfree}
  Let $\Phi:\mathsf{L}(\C^d)\to\mathsf{L}(\C^d)$ be any quantum channel, let $U$ be a Haar-random unitary on $\C^d$, and let $s\ne t$ be distinct messages. Then for every integer $\ell\ge1$,
  \[
    \E_U\big[(X_{st})^{\ell}\big]\le\binom{d+\ell-2}{\ell}^{-1}.
  \]
  And for $\ell\ge2$ we have the simpler bound $\E_U[(X_{st})^{\ell}]\le\big(\tfrac{\ell}{d}\big)^{\ell}$.
\end{lemma}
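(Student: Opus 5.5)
The argument conditions on $\ket{\psi_s}$ and then averages over $\ket{\psi_t}$ in the orthogonal complement, exactly as in the overview.

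First, by left-invariance of the Haar measure, the pair $(\ket{\psi_s},\ket{\psi_t})=(U\ket{s,0},U\ket{t,0})$ is distributed as a uniformly random orthonormal pair. Equivalently, $\ket{\psi_s}$ is Haar on $\mathbb S(\C^d)$, and conditionally on it, $\ket{\psi_t}$ is Haar on the unit sphere of the $(d-1)$-dimensional subspace $\mathcal H_s=\psi_s^{\perp}$. To justify this, I would write $U=W U_0$ with $W$ Haar and note that the stabiliser of $\ket{\psi_s}$ acts transitively on the unit sphere of $\psi_s^\perp$.

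Next, fix $\ket{\psi_s}$ and put $M=\Phi(\proj{\psi_s})$, which is PSD with $\Tr M=1$. Let $P_s$ be the projector onto $\mathcal H_s$ and $B=P_sMP_s$, viewed as an operator on $\mathcal H_s\cong\C^{d-1}$. Then $B\succeq0$ and $\Tr B\le\Tr M=1$. Since $\ket{\psi_t}\in\mathcal H_s$, we have $X_{st}=\bra{\psi_t}M\ket{\psi_t}=\bra{\psi_t}B\ket{\psi_t}$. Applying \Cref{lem:sym-moment}(ii) with $D=d-1$ gives
\[
\E_{\psi_t}\big[X_{st}^{\ell}\big]\le \frac{(\Tr B)^\ell}{\binom{d+\ell-2}{\ell}}\le \binom{d+\ell-2}{\ell}^{-1}.
\]
This bound is uniform in $\ket{\psi_s}$, so averaging over $\ket{\psi_s}$ yields the first claim. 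No Kraus expansion enters, so no rank factor appears.

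For the simpler bound, I would use \Cref{fact:binom} in the form $\binom{a}{b}\ge (a/b)^b$ with $a=d+\ell-2$ and $b=\ell$, which gives $\binom{d+\ell-2}{\ell}^{-1}\le \big(\ell/(d+\ell-2)\big)^\ell$. For $\ell\ge2$ we have $d+\ell-2\ge d$, hence the bound $(\ell/d)^\ell$. The hypothesis $b\le a$ requires $d\ge2$, which holds trivially.

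The only step requiring real care is the conditional-distribution claim, namely that $\ket{\psi_t}$ given $\ket{\psi_s}$ is Haar on $\psi_s^\perp$. I would derive it from invariance of the Haar measure under $U\mapsto VU$ for every unitary $V$ fixing $\ket{\psi_s}$. Such $V$ act as an arbitrary unitary on $\psi_s^\perp$, so the conditional law is invariant under the full unitary group of that subspace, and is therefore the uniform measure on its unit sphere.
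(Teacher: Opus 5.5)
Your proposal is correct and follows the paper's proof essentially line for line. You condition on $\ket{\psi_s}$, compress $\Phi(\proj{\psi_s})$ onto $\psi_s^{\perp}$ to obtain a PSD operator of trace at most $1$, apply \Cref{lem:sym-moment}(ii) with $D=d-1$, and finish with \Cref{fact:binom} using $d+\ell-2\ge d$ for $\ell\ge 2$; your justification of the conditional Haar law on $\psi_s^{\perp}$ via invariance under the stabiliser is, if anything, more explicit than the paper's, which simply asserts it from left-invariance.
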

\begin{proof}
  Write the Haar-random unitary as acting on the fixed orthonormal vectors $\ket s,\ket t$ (here we abuse notation, identifying $\ket m$ with $\ket m_A\otimes\ket0_B$). Set $\ket{\psi_s}=U\ket s$ and $\ket{\psi_t}=U\ket t$. We condition on $\ket{\psi_s}$ and integrate over $\ket{\psi_t}$.

  \emph{Conditional distribution of $\ket{\psi_t}$.} Since $\ket s\perp\ket t$, by the left invariance of the Haar measure, the pair $(\ket{\psi_s},\ket{\psi_t})$ is distributed as a uniformly random ordered pair of orthonormal vectors.
  Conditioned on the value $v=\ket{\psi_s}$, the vector $u=\ket{\psi_t}$ is Haar-distributed on the unit sphere of the orthogonal complement $v^{\perp}\cong\C^{d-1}$.

  \emph{Reduction to a positive form.} Let $A=\Phi(\proj{\psi_s})=\Phi(\proj v)$, a density operator: $A\succeq0$, $\Tr A=1$. Let $P=\id-\proj v$ be the projector onto $v^{\perp}$ and define $B=PAP$, an operator supported on $v^{\perp}$ with $B\succeq0$ and
  \[
    \Tr B=\Tr[PAP]=\Tr[AP]=1-\bra v A\ket v\le1.
  \]
  For $u\in v^{\perp}$ we have $\bra u A\ket u=\bra u PAP\ket u=\bra u B\ket u$, hence $X_{st}=\bra{\psi_t}A\ket{\psi_t}=\bra u B\ket u$.

  \emph{Symmetric-subspace integral.} Apply \Cref{lem:sym-moment} with $D=d-1$ (the dimension of $v^{\perp}$), to the operator $B$ restricted to $v^{\perp}$:
  \[
    \E_u\big[\bra u B\ket u^{\ell}\big] \le \frac{(\Tr B)^{\ell}}{\binom{d+\ell-2}{\ell}} \leq \frac{1}{\binom{d+\ell-2}{\ell}} \ .
  \]
  Since this bound is independent of the conditioned $v$, taking the expectation over $v$ preserves it.

  The simpler bound $\E_U[(X_{st})^{\ell}]\le\big(\tfrac{\ell}{d}\big)^{\ell}$ for $\ell\ge2$ follows directly from the binomial bound of \Cref{fact:binom}.
\end{proof}

Two properties of \Cref{lem:rankfree} drive results below.
First, \emph{rank independence}: a single-moment Markov inequality gives $\Pr[X_{st}>\tau]\le(n/(d\tau))^{n}$, which can be made to beat a family of cardinality $2^{d^\alpha}$ by taking the moment order $n\sim d^\alpha$, with no constraint on the Kraus rank.
Second, \emph{validity for all $n<d$}: the estimate is exact for every moment order below the dimension, so $n\sim d^\alpha$ may be taken with $\alpha$ arbitrarily close to $1$. Moment estimates obtained by Weingarten calculus instead carry a multiplicative factor $r^{n}$ in the Kraus rank and hold only for $n=\mathcal{O}(\sqrt d)$: the rank factor forces a rank restriction on the adversarial family, and the validity ceiling caps the attainable family size at $\alpha<\tfrac12$. The conditioning argument incurs no cost.

\subsubsection{Structure of the permutation expansion}

We first record the structure of the permutation expansion and isolate two properties that will be used repeatedly in the moment estimates below.

\begin{definition}[Traceless family and its weight]\label{def:traceless-family}
  Let $\mathcal L=\{L_i\}_{i=1}^{r}$ be a finite family of operators on $\C^d$ such that $\Tr L_i=0$ for every $i$.
  Define
  \[
    Y_{\mathcal L}(v)=\sum_i\big|\bra v L_i\ket v\big|^2 \qquad \text{and} \qquad \Sigma_{\mathcal L}=\sum_i \adj{L_i}L_i.
  \]
\end{definition}

To analyze the moments of $Y_{\mathcal L}$, we first write
\[
  Y_{\mathcal L}(v)  = \sum_i\big|\bra v L_i\ket v\big|^2
  = \sum_i \bra v L_i\ket v \overline{\bra v L_i\ket v}
  = \sum_i \Tr \left(L_i \ketbra{v}{v} \right) \Tr \left( \adj{L_i}  \ketbra{v}{v}\right).
\]
Raising to the $\ell$-th power and expanding over the indices gives
\begin{equation} \label{eq:Ypower}
  Y_{\mathcal L}^{\ell}(v)
  = \sum_{i_1,\dots,i_\ell \in [r]}
  \prod_{a=1}^{\ell}
  \Tr[L_{i_a}\ketbra{v}{v}]\,
  \Tr[\adj{L_{i_a}}\ketbra{v}{v}]
  = \sum_{i \in [r]^\ell}
  \Tr\big[\mathbf M(i)\,\ketbra{v}{v}^{\otimes 2\ell}\big],
\end{equation}
where we used
\[
  \prod_{j=1}^{N}\Tr[M_j\ketbra{v}{v}]
  = \Tr\big[(M_1\otimes\cdots\otimes M_N)
  \ketbra{v}{v}^{\otimes N}\big].
\]
Here $N=2\ell$, and for $i=(i_1,\dots,i_\ell)\in[r]^\ell$ we set
\begin{equation}\label{eq:slots}
  \mathbf M(i)
  = \bigotimes_{a=1}^{\ell}
  \big(L_{i_a}\otimes \adj{L_{i_a}}\big)
  = M_1\otimes\cdots\otimes M_N,
  \qquad
  M_{2a-1}=L_{i_a},\quad
  M_{2a}=\adj{L_{i_a}}.
\end{equation}
Thus, each index $i_a$ is associated with a distinguished pair of tensor
slots, namely $(2a-1,2a)$: the first contains $L_{i_a}$ and the second
contains its adjoint $\adj{L_{i_a}}$.

We first make explicit the permutation expansion that will serve as the starting point for the analysis below.

\begin{lemma}[Permutation expansion]\label{prop:permexp}
  Let $\mathcal L$ be a finite family of operators and let $\ell\ge1$. Then
  \[
    \E_{v \leftarrow \mathbb{S}(\C^{d})}\big[Y_{\mathcal L}^{\ell}(v)\big]
    =\frac{1}{(2\ell)!\,\binom{d+2\ell-1}{2\ell}}\sum_{\pi\in S_{2\ell}}\val(\pi),
  \]
  where $\val(\pi)=\sum_{i \in [r]^\ell}\Tr\big[\mathbf M(i)\,P_\pi\big]$ and  $\mathbf M(i)$ is as in eq.~\cref{eq:slots}.
\end{lemma}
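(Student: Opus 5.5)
The plan is to combine the expansion \cref{eq:Ypower} with \Cref{lem:sym-moment}(i) and then use linearity. By \cref{eq:Ypower}, for every unit vector $v$,
\[
  Y_{\mathcal L}^{\ell}(v)=\sum_{i\in[r]^\ell}\Tr\big[\mathbf M(i)\,\ketbra{v}{v}^{\otimes 2\ell}\big].
\]
This is a finite sum, since $\mathcal L$ is finite. So the expectation over a Haar-random $v\leftarrow\mathbb S(\C^d)$ can be moved inside both the sum and the trace, which gives
\[
  \E_v\big[Y_{\mathcal L}^{\ell}(v)\big]=\sum_{i\in[r]^\ell}\Tr\Big[\mathbf M(i)\,\E_v\big[\ketbra{v}{v}^{\otimes 2\ell}\big]\Big].
\]

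Next, we apply \Cref{lem:sym-moment}(i) with $D=d$ and moment order $2\ell$ in place of $\ell$. This evaluates the inner expectation as
\[
  \E_v\big[\ketbra{v}{v}^{\otimes 2\ell}\big]=\frac{\Pisym^{(2\ell)}}{\binom{d+2\ell-1}{2\ell}}.
\]
We then substitute the group-average formula of \Cref{lem:symproj}, namely $\Pisym^{(2\ell)}=\frac{1}{(2\ell)!}\sum_{\pi\in S_{2\ell}}P_\pi$.

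After this substitution, linearity of the trace lets us swap the two finite sums, over $i\in[r]^\ell$ and over $\pi\in S_{2\ell}$. Collecting terms then gives
\[
  \frac{1}{(2\ell)!\binom{d+2\ell-1}{2\ell}}\sum_{\pi}\sum_{i}\Tr[\mathbf M(i)P_\pi],
\]
and the inner sum is exactly $\val(\pi)$, so this is the claimed identity.

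There is no real obstacle here; the argument is bookkeeping. The only points to check are minor. First, the tracelessness of the $L_i$ plays no role in this identity, so the lemma holds for an arbitrary finite family $\mathcal L$, as stated. Second, exchanging expectation and sums is justified because all sums are finite and every integrand is a bounded polynomial in the entries of $v$ and $\bar v$.
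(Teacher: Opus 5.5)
Your proposal is correct and matches the paper's proof step for step. You expand $Y_{\mathcal L}^{\ell}$ via \cref{eq:Ypower}, move the expectation inside by linearity, evaluate the $2\ell$-th moment with \Cref{lem:sym-moment}(i), substitute the group average from \Cref{lem:symproj}, and exchange the finite sums. Your remark that tracelessness plays no role in this identity is also accurate.
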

\begin{proof}
  Write $\rho=\ketbra{v}{v}$ and $N=2\ell$. By the expansion \cref{eq:Ypower} we have,
  \[
    Y_{\mathcal L}^{\ell}(v)=\sum_{i\in[r]^{\ell}}\Tr\big[\mathbf M(i)\,\rho^{\otimes N}\big],
  \]
  so that all dependence on $v$ is confined to the single factor $\rho^{\otimes N}$, which enters linearly.
  Hence,
  \begin{align*}
    \E_v\big[Y_{\mathcal L}^{\ell}(v)\big]
    &=\E_v\bigg[\sum_{i\in[r]^{\ell}}\Tr\big[\mathbf M(i)\,\rho^{\otimes N}\big]\bigg]\\
    &=\sum_{i\in[r]^{\ell}}\Tr\Big[\mathbf M(i)\,\E_v\big[\rho^{\otimes N}\big]\Big]
    &&\mbox{by linearity}\\
    &=\sum_{i\in[r]^{\ell}}\Tr\bigg[\mathbf M(i)\,\frac{\Pisym^{(N)}}{\binom{d+N-1}{N}}\bigg]
    &&\mbox{by \Cref{lem:sym-moment}}\\
    &=\frac{1}{\binom{d+N-1}{N}}\sum_{i\in[r]^{\ell}}\Tr\bigg[\mathbf M(i)\,\frac{1}{N!}\sum_{\pi\in S_{N}}P_\pi\bigg]
    &&\mbox{by \Cref{lem:symproj}}\\
    &=\frac{1}{N!\,\binom{d+N-1}{N}}\sum_{\pi\in S_{N}}\ \sum_{i\in[r]^{\ell}}\Tr\big[\mathbf M(i)\,P_\pi\big]
    &&\\
    &=\frac{1}{(2\ell)!\,\binom{d+2\ell-1}{2\ell}}\sum_{\pi\in S_{2\ell}}\val(\pi).
  \end{align*}
\end{proof}

\begin{example}[Case $\ell=1$]\label{ex:ell1}
  Let $\mathcal L=\{L_i\}_{i=1}^{r}$ be a finite family of traceless operators on $\C^d$.
  Here $N=2$, $\mathbf M(i)=L_i\otimes\adj{L_i}$, and $S_2=\{\mathrm{id},(1\,2)\}$. We get,
  \[
    \val(\mathrm{id})=\sum_i\Tr[L_i]\,\Tr[\adj{L_i}]=0,
    \qquad
    \val\big((1\,2)\big)=\sum_i\Tr\big[L_i\adj{L_i}\big]=\Tr\Sigma_{\mathcal L},
  \]
  the first vanishing because each $L_i$ is traceless. Since $2!\binom{d+1}{2}=d(d+1)$,
  \[
    \E_v\big[Y_{\mathcal L}(v)\big]=\frac{\Tr\Sigma_{\mathcal L}}{d(d+1)} .
  \]
\end{example}

Thus, already at $\ell=1$ the identity permutation is annihilated by tracelessness, and the single surviving permutation is the pairing.
We will generalize that phenomenon in the following lemma.

\begin{lemma}\label{lem:C}
  Let $\mathcal L=\{L_i\}_{i=1}^{r}$ be a finite family of traceless operators on $\C^d$. Let $\ell\ge1$, and let $\pi\in S_{2\ell}$. Then,
  \begin{enumerate}
    \item[(i)] If $\pi$ has a fixed point, then $\val(\pi)=0$.
    \item[(ii)] If $\pi$ has no fixed point, then
      $\big|\val(\pi)\big|\le\big(\Tr\Sigma_{\mathcal L}\big)^{\ell}$.
  \end{enumerate}
\end{lemma}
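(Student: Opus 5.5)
Part (i) follows from \Cref{lem:cycletrace}. If $\pi$ fixes some slot $j$, then $\{j\}$ is a cycle of $\pi$, so for every index tuple $i$ the product in $\Tr[\mathbf M(i)P_\pi]$ contains the factor $\Tr M_j$. That factor is either $\Tr L_{i_a}$ or $\Tr \adj{L_{i_a}}=\overline{\Tr L_{i_a}}$, both of which are $0$. So every summand in $\val(\pi)$ vanishes, and $\val(\pi)=0$.

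For part (ii) I would bound each term by a Schatten-norm estimate and then resum over the indices in a way that avoids any factor of $r$. By \Cref{lem:cycletrace}, $\Tr[\mathbf M(i)P_\pi]$ factors over the cycles of $\pi$. All cycles have length at least $2$, since $\pi$ has no fixed points. For a cycle $c$ of length $\ell_c\ge2$, \Cref{fact:schatten}(i)--(iii) together with Hölder at equal exponents $\ell_c$ give
\[
  \big|\Tr[M_{j_1}\cdots M_{j_{\ell_c}}]\big|\le\prod_{j\in c}\|M_j\|_{\ell_c}\le\prod_{j\in c}\|M_j\|_2 .
\]
The second inequality uses $\|\cdot\|_{\ell_c}\le\|\cdot\|_2$ for $\ell_c\ge2$. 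Multiplying over the cycles, and using $\|\adj{L}\|_2=\|L\|_2$ from \Cref{fact:schatten}(iv), I get
\[
  \big|\Tr[\mathbf M(i)P_\pi]\big|\le\prod_{j=1}^{2\ell}\|M_j\|_2=\prod_{a=1}^{\ell}\|L_{i_a}\|_2^{2}.
\]
Summing over $i\in[r]^\ell$ and applying the triangle inequality, the sum factorizes:
\[
  |\val(\pi)|\le\sum_{i}\prod_a\|L_{i_a}\|_2^2=\Big(\sum_i\|L_i\|_2^2\Big)^{\ell}=\big(\Tr\Sigma_{\mathcal L}\big)^{\ell},
\]
because $\|L_i\|_2^2=\Tr\adj{L_i}L_i$.

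The main point to check is that the per-term bound depends on $i$ only through the product $\prod_a\|L_{i_a}\|_2^2$, with no coupling across slots. This holds because each slot contributes its own Hilbert--Schmidt norm independently of how $\pi$ pairs it, so the sum over indices factorizes exactly and no rank factor appears. The only other care needed is in the Hölder step: for cycles of length $\ell_c\ge2$ the exponents $1/\ell_c$ sum to $1$, so \Cref{fact:schatten}(ii) applies directly with $p_a=\ell_c\in\N$.
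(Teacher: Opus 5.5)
Your proposal is correct and follows the paper's proof essentially step for step. Part (i) kills every summand through the length-one cycle factor $\Tr M_j=0$. Part (ii) uses the cycle factorization, then $|\Tr|\le\|\cdot\|_1$, Hölder with equal exponents $\ell_c$, and $\|\cdot\|_{\ell_c}\le\|\cdot\|_2$ to get the per-tuple bound $\prod_a\|L_{i_a}\|_2^2$, which then factorizes over $[r]^\ell$ into $(\Tr\Sigma_{\mathcal L})^\ell$.
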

\begin{proof}
  By \Cref{lem:cycletrace}, for every $i\in[r]^{\ell}$, we have the cycle factorization
  \begin{equation}\label{eq:cycfactor}
    \Tr\big[\mathbf M(i)\,P_\pi\big]
    =\prod_{c\,\in\,\mathrm{cycles}(\pi)}\Tr\big[M_{j_1}M_{j_2}\cdots M_{j_{\ell_c}}\big],
  \end{equation}
  where each cycle $c$ is written as $c=(j_1\, \dots \, j_{\ell_c})$, with $\ell_c$ is its length.
  By \cref{eq:slots}, each factor $M_j$ equals $L_{i_a}$ or $\adj{L_{i_a}}$ for some $a\in[\ell]$.

  \medskip\noindent
  (i) Suppose $\pi$ fixes some $j$. Then $(j)$ is a cycle of length one, so the corresponding factor in \cref{eq:cycfactor} is $\Tr[M_j]$ where $M_j$ is either $L_{i_a}$ and $\adj{L_{i_a}}$.
  Both $L_{i_a}$ and $\adj{L_{i_a}}$ are traceless (the latter since $\Tr[\adj{L_i}]=\overline{\Tr L_i}=0$).
  Hence the product in \cref{eq:cycfactor} vanishes for every $i\in[r]^{\ell}$, and therefore $\val(\pi)=0$.

  \medskip\noindent
  (ii) Suppose $\pi$ has no fixed point, so that every cycle satisfies $\ell_c\ge2$. Fix
  $i\in[r]^{\ell}$ and a cycle $c$.
  \begin{align*}
    \Big|\Tr\big[M_{j_1}\cdots M_{j_{\ell_c}}\big]\Big|
    &\ \le\ \big\|M_{j_1}\cdots M_{j_{\ell_c}}\big\|_{1}
    &&\mbox{\Cref{fact:schatten}(i)}\\
    &\ \le\ \prod_{j\in c}\|M_j\|_{\ell_c}
    &&\mbox{\Cref{fact:schatten}(ii)}\\
    &\ \le\ \prod_{j\in c}\|M_j\|_{2}
    &&\mbox{\Cref{fact:schatten}(iii).}
  \end{align*}
  The right-hand side is symmetric in the factors. Multiplying over all cycles of $\pi$, using the fact that the cycles of $\pi$ form a partition of $\{1,\dots,2\ell\}$, and using \cref{eq:slots} together with \Cref{fact:schatten}(iv),
  \begin{equation} \label{eq:per_tuple_bound}
    \Big|\Tr\big[\mathbf M(i)\,P_\pi\big]\Big|
    \ \le\ \prod_{j=1}^{2\ell}\|M_j\|_{2}
    \ =\ \prod_{a=1}^{\ell}\|L_{i_a}\|_{2}\,\big\|\adj{L_{i_a}}\big\|_{2}
    \ =\ \prod_{a=1}^{\ell}\|L_{i_a}\|_{2}^{2}.
  \end{equation}

  Now,
  \begin{align*}
    |\val(\pi)|
    &=\Big|\sum_{i\in[r]^{\ell}}\Tr\big[\mathbf M(i)\,P_\pi\big]\Big|\\
    &\le\sum_{i\in[r]^{\ell}}\Big|\Tr\big[\mathbf M(i)\,P_\pi\big]\Big|
    &&\mbox{by the triangle inequality}\\
    &\le\sum_{i\in[r]^{\ell}}\ \prod_{a=1}^{\ell}\|L_{i_a}\|_{2}^{2}
    &&\mbox{by \cref{eq:per_tuple_bound}}\\
    &=\prod_{a=1}^{\ell} \sum_{i_a=1}^{r}\|L_{i_a}\|_{2}^{2}
    \\
    &=\Big(\sum_{i=1}^{r}\|L_{i}\|_{2}^{2}\Big)^{\ell}\\
    &=\Big(\sum_{i=1}^{r}\Tr[\adj{L_i}L_i]\Big)^{\ell}\\
    &=\big(\Tr\Sigma_{\mathcal L}\big)^{\ell}.
  \end{align*}
\end{proof}

\begin{remark}\label{rem:derange-essential}
  \Cref{lem:C}(ii) uses $\ell_c\ge2$ exactly once, to pass from $\|\cdot\|{\ell_c}$ to $\|\cdot\|_{2}$.
  A similar bound for permutations with fixed points would leave $|\Tr B|\le\|B\|_{1}$, and $\|B\|_{1}\le\sqrt d\,\|B\|_{2}$ reintroduces the dimension.
  But such permutations with fixed points vanish identically by Part~(i). The cycles the Hölder inequalities cannot handle are precisely the ones tracelessness kills.
\end{remark}

The bound in \Cref{lem:C}(ii) is tight. Take the permutation
\[
  \pi=\prod_{a=1}^{\ell}(2a-1\ \ 2a),
\]
which has no fixed point and whose cycles are exactly the pairs $\{2a-1,2a\}$. By \cref{eq:slots} the cycle $(2a-1\,2a)$ contributes $\Tr\big[L_{i_a}\adj{L_{i_a}}\big]=\|L_{i_a}\|_2^{2}$, so $\Tr[\mathbf M(i)P_{\pi}]=\prod_{a=1}^{\ell}\|L_{i_a}\|_2^{2}$ for every $i\in[r]^{\ell}$; summing over $i$ as above gives $\val(\pi)=(\Tr\Sigma_{\mathcal L})^{\ell}$ exactly. In particular, $\Tr\Sigma_{\mathcal L}$ is an optimal bound.

The preceding estimates can now be combined into the following general bound for traceless families.

\begin{theorem}[Master bound for traceless families]\label{thm:master}
  Let $\mathcal L=\{L_i\}_{i=1}^{r}$ be a finite family of traceless operators on $\C^d$ and let $\ell\ge1$.
  Then,
  \[
    \E_{v \leftarrow \mathbb{S}(\C^{d})}\big[Y_{\mathcal L}^{\ell}(v)\big]
    \;\le\;\frac{\big(\Tr\Sigma_{\mathcal L}\big)^{\ell}}{\binom{d+2\ell-1}{2\ell}}
    \;\le\;\big(\Tr\Sigma_{\mathcal L}\big)^{\ell}\Big(\frac{2\ell}{d}\Big)^{2\ell}.
  \]
\end{theorem}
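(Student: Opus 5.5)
The plan is to start from the permutation expansion of \Cref{prop:permexp} and then apply \Cref{lem:C} permutation by permutation. \Cref{prop:permexp} gives
\[
  \E_v\big[Y_{\mathcal L}^{\ell}(v)\big]=\frac{1}{(2\ell)!\,\binom{d+2\ell-1}{2\ell}}\sum_{\pi\in S_{2\ell}}\val(\pi).
\]
First, I take absolute values inside the sum. The left-hand side is a nonnegative real, so it is bounded by the sum of $|\val(\pi)|$. Next, I split $S_{2\ell}$ into permutations with a fixed point and derangements. By \Cref{lem:C}(i), every permutation with a fixed point has $\val(\pi)=0$, so only derangements remain. By \Cref{lem:C}(ii), each derangement contributes at most $(\Tr\Sigma_{\mathcal L})^{\ell}$.

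The number of derangements of $2\ell$ letters is at most $(2\ell)!$. Hence the sum is at most $(2\ell)!\,(\Tr\Sigma_{\mathcal L})^{\ell}$, and the factorial cancels the $(2\ell)!$ in the prefactor. This yields the first inequality,
\[
  \E_v\big[Y_{\mathcal L}^{\ell}\big]\le(\Tr\Sigma_{\mathcal L})^{\ell}\Big/\tbinom{d+2\ell-1}{2\ell}.
\]
One could sharpen this by counting derangements exactly (about $(2\ell)!/e$), but the crude count suffices for the stated bound.

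For the second inequality, I apply the lower binomial bound of \Cref{fact:binom} with $a=d+2\ell-1$ and $b=2\ell$:
\[
  \tbinom{d+2\ell-1}{2\ell}\ge\Big(\frac{d+2\ell-1}{2\ell}\Big)^{2\ell}\ge\Big(\frac{d}{2\ell}\Big)^{2\ell},
\]
where the last step uses $2\ell-1\ge0$. Inverting gives the factor $(2\ell/d)^{2\ell}$. The requirement $b\le a$ in \Cref{fact:binom} holds because $d\ge1$.

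There is no real obstacle in this argument. The substantive work was done in \Cref{lem:C}: tracelessness kills the fixed-point cycles, and Hölder's inequality handles cycles of length at least two. The only point needing care is that the left side is real and nonnegative, which justifies bounding it by the absolute values of the $\val(\pi)$ even though individual terms may be complex.
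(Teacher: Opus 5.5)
Your proposal is correct and follows essentially the same argument as the paper: permutation expansion (\Cref{prop:permexp}), the triangle inequality, the per-permutation bound from \Cref{lem:C}, and cancellation of the $(2\ell)!$, followed by the lower binomial bound of \Cref{fact:binom}. The paper bounds every $|\val(\pi)|$ by $(\Tr\Sigma_{\mathcal L})^{\ell}$ directly, rather than first discarding the permutations with fixed points, but this is a cosmetic difference.
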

\begin{proof}
  By \Cref{lem:C}, $|\val(\pi)|\le(\Tr\Sigma_{\mathcal L})^{\ell}$ for every $\pi\in S_{2\ell}$. Hence,
  \begin{align*}
    \E_{v \leftarrow \mathbb{S}(\C^{d})}\big[Y_{\mathcal L}^{\ell}(v)\big]
    &=\frac{1}{(2\ell)!\,\binom{d+2\ell-1}{2\ell}}\sum_{\pi\in S_{2\ell}}\val(\pi)
    &&\mbox{by \Cref{prop:permexp}}\\
    &\le\frac{1}{(2\ell)!\,\binom{d+2\ell-1}{2\ell}}\sum_{\pi\in S_{2\ell}}\big|\val(\pi)\big|
    &&\mbox{by the triangle inequality}\\
    &\le\frac{1}{(2\ell)!\,\binom{d+2\ell-1}{2\ell}}\sum_{\pi\in S_{2\ell}}\big(\Tr\Sigma_{\mathcal L}\big)^{\ell}
    &&\mbox{by \Cref{lem:C}}\\
    &=\frac{\big(\Tr\Sigma_{\mathcal L}\big)^{\ell}}{\binom{d+2\ell-1}{2\ell}}\\
    &\le\big(\Tr\Sigma_{\mathcal L}\big)^{\ell}\Big(\frac{2\ell}{d}\Big)^{2\ell}
    &&\mbox{by \Cref{fact:binom}.}
  \end{align*}
\end{proof}

\subsubsection{The traceless family associated with a phase-1 isometry}
\label{sec:phase1-traceless-family}

We next associate a traceless family with every bounded phase-1 isometry.  This family will be used to control the accepting branch of the decoder.

Fix a phase-1 isometry
\[
  V:\C^d_C\otimes\ket0_{E_0}\to\C^d_{C}\otimes\Hs_E
\]
of circuit size at most $p=p(\lambda,k)$, that is, an element of $\cG_p$.  Since $V$ is realized by a circuit of size $p$ over constant-size gates, $\Hs_E$ is finite-dimensional, write $\{\ket e\}_e$ for a fixed orthonormal basis of $\Hs_E$.

\medskip\noindent
\textbf{Isometry slices.}
Define the slices of $V$ by
\begin{equation}\label{eq:slices}
  A_e=\big(\id_{C}\otimes\bra e_E\big)V:\C^d_C\to\C^d_{C} .
\end{equation}
Because $V$ is an isometry,
\begin{equation}\label{eq:slice-tp}
  \sum_e \adj{A_e}A_e
  =\adj V\Big(\sum_e\id_{C}\otimes\proj e\Big)V
  =\adj VV
  =\id_C .
\end{equation}
Thus the slices satisfy the same completeness relation as Kraus operators.

\medskip\noindent
\textbf{The accept-branch vector.}
For $s,t\in\M$ define
\[
  w^{(t)}_{s}
  =\big(\bra{\psi_t}_{C}\otimes\id_E\big)V\ket{\psi_s}_C
  \in\Hs_E,
  \qquad
  w_s=w^{(s)}_s .
\]
Since $\Dec$ measures $C$ with the POVM $\{\proj{\psi_t}\}_t\cup\{\Pi_\perp\}$, the probability that the decoder outputs $t$ when $s$ was encoded is $\|w^{(t)}_s\|^2$, and the corresponding subnormalized state on $E$ is $w^{(t)}_s\adj{(w^{(t)}_s)}$.  In the notation of \cref{eq:branch-decomp},
\begin{equation}\label{eq:branch-states}
  \sigma_{\same}(s)=w_s\adj{w_s},
  \qquad
  \sigma_t(s)=w^{(t)}_s\adj{(w^{(t)}_s)}
  \quad(t\ne s).
\end{equation}

\medskip\noindent
\textbf{The reference vector.}
Define
\begin{equation}\label{eq:ref-vector}
  v_r=\frac1d\sum_i\big(\bra i\otimes\id_E\big)V\ket i,
  \qquad\text{so that}\qquad
  \langle e\,|\,v_r\rangle=\frac{\Tr A_e}{d} .
\end{equation}
The vector $v_r$ depends only on $V$ and involves neither a message nor a codeword.

\medskip\noindent
\textbf{Traceless parts of the slices.}
Split each slice into its homothety and traceless parts,
\begin{equation}\label{eq:slice-traceless}
  A_e=\frac{\Tr A_e}{d}\, I_C +L_e,
  \qquad
  L_e=A_e-\frac{\Tr A_e}{d}\, I_C ,
  \qquad
  \cL=\{L_e\}_e .
\end{equation}
Then $\cL$ is a finite traceless family on $\C^d_C$.

\begin{lemma}[Identification]\label{lem:identify}
  Let $V$ be a phase-1 isometry as above, with slices $\{A_e\}_e$ as in \cref{eq:slices} and traceless family $\cL=\{L_e\}_e$ as in \cref{eq:slice-traceless}. Then:
  \begin{enumerate}
    \item[(i)] For every $s\in\M$ and every $e$,
      \[
        \langle e\,|\,w_s\rangle
        =\bra{\psi_s}A_e\ket{\psi_s},
        \qquad
        \langle e\,|\,v_r\rangle
        =\frac{\Tr A_e}{d},
        \qquad
        \langle e\,|\,(w_s-v_r)\rangle
        =\bra{\psi_s}L_e\ket{\psi_s}.
      \]

    \item[(ii)] For every $s\in\M$,
      \[
        \|w_s-v_r\|^2
        =\sum_e\big|\bra{\psi_s}L_e\ket{\psi_s}\big|^2
        =Y_{\cL}(\psi_s).
      \]

    \item[(iii)] The total weight of $\cL$ is
      \[
        \Tr\Sigma_{\cL}
        =d\big(1-\|v_r\|^2\big)
        \le d.
      \]
  \end{enumerate}
\end{lemma}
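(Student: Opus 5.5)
The plan is to verify the three parts in order, each by a direct computation from the definitions \cref{eq:slices}, \cref{eq:ref-vector} and \cref{eq:slice-traceless}. No probabilistic argument is needed, since the statement holds for every fixed $U$ and $V$.

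For (i), I expand $w_s$ in the basis $\{\ket e\}$:
\[
  \langle e\,|\,w_s\rangle=\big(\bra{\psi_s}\otimes\bra e\big)V\ket{\psi_s}=\bra{\psi_s}\big(\id_C\otimes\bra e\big)V\ket{\psi_s}=\bra{\psi_s}A_e\ket{\psi_s}.
\]
In the same way,
\[
  \langle e\,|\,v_r\rangle=\frac1d\sum_i\bra i A_e\ket i=\frac{\Tr A_e}{d}.
\]
Subtracting the two and using $\bra{\psi_s}\psi_s\rangle=1$ gives
\[
  \bra{\psi_s}A_e\ket{\psi_s}-\frac{\Tr A_e}{d}\bra{\psi_s}I_C\ket{\psi_s}=\bra{\psi_s}L_e\ket{\psi_s}.
\]
This step is where normalization of $\ket{\psi_s}$ is used.

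For (ii), I apply Parseval in $\Hs_E$ to the coordinates found in (i), giving $\|w_s-v_r\|^2=\sum_e|\bra{\psi_s}L_e\ket{\psi_s}|^2$. This is $Y_{\cL}(\psi_s)$ by \Cref{def:traceless-family}.

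For (iii), I write $c_e=\Tr A_e/d$ and expand
\[
  \Tr[\adj{L_e}L_e]=\Tr[\adj{A_e}A_e]-\bar c_e\Tr A_e-c_e\overline{\Tr A_e}+d|c_e|^2=\Tr[\adj{A_e}A_e]-d|c_e|^2,
\]
since $\bar c_e\Tr A_e=d|c_e|^2$. Summing over $e$ and using the completeness relation \cref{eq:slice-tp} gives $\sum_e\Tr[\adj{A_e}A_e]=\Tr\id_C=d$. By Parseval and (i), $\sum_e|c_e|^2=\|v_r\|^2$. Hence $\Tr\Sigma_{\cL}=d(1-\|v_r\|^2)\le d$.

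The only point needing care is the cross-term cancellation in (iii), which is routine. I do not expect a real obstacle anywhere in the argument. One should also note that $\Hs_E$ is finite-dimensional, so all the sums are finite.
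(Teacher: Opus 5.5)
Your proposal is correct and follows essentially the same route as the paper: you compute the coordinates in the basis $\{\ket e\}$ for (i), use Parseval for (ii), and for (iii) expand $\Tr[\adj{L_e}L_e]$, cancel the cross terms, and apply the completeness relation \cref{eq:slice-tp}. The one addition is your remark that the subtraction in (i) uses $\langle\psi_s|\psi_s\rangle=1$, which the paper leaves implicit.
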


\begin{proof}
  For (i), by \cref{eq:slices},
  \[
    \langle e\,|\,w_s\rangle
    =\big(\bra{\psi_s}\otimes\bra e\big)V\ket{\psi_s}
    =\bra{\psi_s}A_e\ket{\psi_s},
  \]
  and similarly
  \[
    \langle e\,|\,v_r\rangle
    =\frac1d\sum_i\bra iA_e\ket i
    =\frac{\Tr A_e}{d}.
  \]
  Subtracting componentwise and using \cref{eq:slice-traceless} gives the last identity.

  For (ii), expand the norm in the basis $\{\ket e\}_e$ and insert part~(i):
  \[
    \|w_s-v_r\|^2
    =\sum_e\big|\langle e\,|\,(w_s-v_r)\rangle\big|^2
    =\sum_e\big|\bra{\psi_s}L_e\ket{\psi_s}\big|^2
    =Y_{\cL}(\psi_s).
  \]

  For (iii), using \cref{eq:slice-tp},
  \begin{align*}
    \Tr\Sigma_{\cL}
    &=\sum_e\|L_e\|_2^2\\
    &=\sum_e\Tr[\adj{A_e}A_e]
    -\frac1d\sum_e|\Tr A_e|^2\\
    &=\Tr\Big[\sum_e\adj{A_e}A_e\Big]
    -\frac1d\sum_e|\Tr A_e|^2\\
    &=d-\frac1d\sum_e|\Tr A_e|^2\\
    &=d\big(1-\|v_r\|^2\big)\le d.
  \end{align*}
\end{proof}

\subsubsection{Approximate design estimates}

The results established in the preceding sub-sections are based on Haar-randomness. We next show that the Haar-random unitaries can be replaced by approximate unitary designs.

\begin{lemma}[The moment operator factorizes]\label{lem:tensorpower}
  Let $\cL=\{L_e\}_{e=1}^{r}$ be a finite family of traceless operators on $\C^d$ and let $\ell\ge1$.
  Define
  \[
    T \;=\; \sum_e L_e\otimes \adj{L_e}\ \in\ \cL\big(\C^d\otimes\C^d\big),
    \qquad
    N \;=\; \sum_{\mathbf i\in[r]^\ell} M(\mathbf i).
  \]
  Then, with the $2\ell$ slots grouped into the pairs $\{2a-1,2a\}$, we have
  \[
    N \;=\; T^{\otimes\ell},
    \qquad
    Y_{\cL}(v)^{\ell}=\Tr\big[N\,\proj{v}^{\otimes2\ell}\big],
    \qquad
    \text{ and }
    \qquad \|N\|_\infty=\|T\|_\infty^{\ell}\le\big(\Tr\Sigma_{\cL}\big)^{\ell}\le d^{\ell}.
  \]
\end{lemma}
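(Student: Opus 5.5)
The plan is to verify the three claims in turn, starting from the definition of $\mathbf M(\mathbf i)$ in \cref{eq:slots}.

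For the factorization $N=T^{\otimes\ell}$, note that $\mathbf M(\mathbf i)=\bigotimes_{a=1}^{\ell}(L_{i_a}\otimes\adj{L_{i_a}})$, where the $a$-th factor occupies the slot pair $\{2a-1,2a\}$. Summing over $\mathbf i\in[r]^\ell$ is the same as summing each index $i_a$ independently over $[r]$. By multilinearity of the tensor product, the sum of products becomes the product of sums, which is $\bigotimes_{a}\big(\sum_{e}L_e\otimes\adj{L_e}\big)=T^{\otimes\ell}$.

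The identity $Y_{\cL}(v)^\ell=\Tr[N\proj v^{\otimes2\ell}]$ is then immediate from \cref{eq:Ypower}, after exchanging the finite sum and the trace.

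For the norm claims, the equality $\|T^{\otimes\ell}\|_\infty=\|T\|_\infty^\ell$ is the standard multiplicativity of the operator norm under tensor products. One sees this via $\|A\otimes B\|_\infty=\|A\|_\infty\|B\|_\infty$, which follows from the singular value decomposition.

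The remaining task, and the only step needing care, is $\|T\|_\infty\le\Tr\Sigma_{\cL}$. I would first try the triangle inequality:
\[
\|T\|_\infty\le\sum_e\|L_e\otimes\adj{L_e}\|_\infty=\sum_e\|L_e\|_\infty^2\le\sum_e\|L_e\|_2^2=\Tr\Sigma_{\cL}.
\]
This uses \Cref{fact:schatten}(iii) and (iv) together with $\|L_e\|_2^2=\Tr[\adj{L_e}L_e]$. It suffices, so no sharper argument (such as an operator Cauchy--Schwarz bound) is needed.

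Finally, $\Tr\Sigma_{\cL}\le d$ requires that $\cL$ come from slices satisfying \cref{eq:slice-tp}, as in \Cref{lem:identify}(iii). For a general traceless family this last inequality is part of the hypothesis context, and I would state it in that setting. Raising both sides to the $\ell$-th power then completes the chain.
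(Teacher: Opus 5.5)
Your proposal is correct and follows the paper's proof essentially step for step: the factorization by independence of the indices $i_a$, the identity read off from \cref{eq:Ypower}, the triangle-inequality bound $\|T\|_\infty\le\sum_e\|L_e\|_\infty^2\le\sum_e\|L_e\|_2^2=\Tr\Sigma_{\cL}$, and multiplicativity of the operator norm under tensor products. Your caveat about the final inequality is also correct: $\Tr\Sigma_{\cL}\le d$ fails for an arbitrary traceless family (rescale any $L_e$), and the paper likewise obtains it only by citing \Cref{lem:identify}(iii) and \Cref{cor:fullTD}, i.e.\ for families arising as traceless parts of isometry slices or Kraus operators.
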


\begin{proof}
  \emph{The factorization.}
  By \cref{eq:slots}, $M(\mathbf i)=\bigotimes_{a=1}^{\ell}\big(L_{i_a}\otimes\adj{L_{i_a}}\big)$: the pair of slots $\{2a-1,2a\}$ carries the index $i_a$ and no other.
  The $\ell$ indices therefore range independently over the $\ell$ pairs, so
  \begin{align*}
    N &= \sum_{i_1,\dots,i_\ell\in[r]}\ \bigotimes_{a=1}^{\ell}\big(L_{i_a}\otimes\adj{L_{i_a}}\big)
    &&\text{definition of $N$, \cref{eq:slots}}\\
    &= \bigotimes_{a=1}^{\ell}\Big(\sum_{e}L_e\otimes\adj{L_e}\Big)
    &&\text{the $\ell$ indices are independent}\\
    &= T^{\otimes\ell}
    &&\text{definition of $T$.}
  \end{align*}
  The identity $Y_{\cL}(v)^{\ell}=\Tr\big[N\proj{v}^{\otimes2\ell}\big]$ is \cref{eq:Ypower}.

  \medskip\noindent
  \emph{The norm of $T$.}
  \begin{align*}
    \|T\|_\infty
    &\le \sum_e\big\|L_e\otimes\adj{L_e}\big\|_\infty
    &&\text{triangle inequality}\\
    &= \sum_e\|L_e\|_\infty\,\big\|\adj{L_e}\big\|_\infty
    &&\text{$\|\cdot\|_\infty$ is multiplicative on tensors}\\
    &= \sum_e\|L_e\|_\infty^{2}
    &&\text{\Cref{fact:schatten}(iv)}\\
    &\le \sum_e\|L_e\|_2^{2}
    &&\text{\Cref{fact:schatten}(iii)}\\
    &= \Tr\Sigma_{\cL}
    &&\text{from definition of $\Sigma_{\cL}$}\\
    &\le d
    &&\text{\Cref{cor:fullTD,lem:identify}.}
  \end{align*}

  \medskip\noindent
  The operator norm is multiplicative under tensor products, so $\|N\|_\infty=\|T^{\otimes\ell}\|_\infty=\|T\|_\infty^{\ell}\le d^{\ell}$.
\end{proof}

We now extend the preceding Haar-moment estimate \Cref{thm:master} to approximate unitary designs.

\begin{theorem}[Master bound for approximate designs]\label{thm:masterapprox}
  Let $\cL$ be a finite family of traceless operators on $\C^d$, let $\ell\ge1$ with $2\ell<d$, and let $\nu$ be an $\varepsilon$-approximate $2\ell$-design on $\U(d)$ with $\varepsilon\;\le\;\Big(\frac{4\ell^{2}}{d^{2}}\Big)^{\ell}.$
  Then for every fixed unit vector $\ket s$,
  \[
    \E_{U\sim\nu}\Big[\,Y_{\cL}\big(U\ket s\big)^{\ell}\,\Big]\;\le\;2\Big(\frac{4\ell^{2}}{d}\Big)^{\ell}.
  \]
  In particular, $\log_2(1/\varepsilon)=\poly(n)$ suffices whenever $\ell=\poly(n)$.
\end{theorem}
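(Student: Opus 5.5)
The plan is to split the design expectation into the Haar expectation plus an error term. The Haar expectation is handled by \Cref{thm:master}. The error term is controlled by writing $Y_{\cL}(U\ket s)^{\ell}$ as the trace of a fixed operator against a twirled state, and then applying the diamond-norm guarantee of \Cref{def:approxdesign}.

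\emph{Rewriting the moment.} By \Cref{lem:tensorpower}, for every $U$,
\[
  Y_{\cL}(U\ket s)^{\ell}=\Tr\big[N\,(U\proj{s}U^{*})^{\otimes 2\ell}\big]=\Tr\big[N\,U^{\otimes 2\ell}\proj{s}^{\otimes 2\ell}(U^{\otimes 2\ell})^{*}\big],
\]
with $N=T^{\otimes \ell}$ independent of $U$. Averaging over $\nu$ and using linearity gives
\[
  \E_{U\sim\nu}\big[Y_{\cL}(U\ket s)^{\ell}\big]=\Tr\big[N\,\cM^{(2\ell)}_\nu(\sigma)\big],
  \qquad \sigma=\proj{s}^{\otimes 2\ell}.
\]
The same identity holds for Haar with $\cM^{(2\ell)}_{\mathrm{Haar}}$. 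The Haar value is the Haar expectation over the uniformly random vector $U\ket s$. By \Cref{thm:master} and \Cref{lem:identify}(iii), which gives $\Tr\Sigma_{\cL}\le d$, this is at most $d^{\ell}(2\ell/d)^{2\ell}=(4\ell^2/d)^{\ell}$.

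\emph{The error term.} The difference between the two expectations is $\Tr[N(\cM^{(2\ell)}_\nu-\cM^{(2\ell)}_{\mathrm{Haar}})(\sigma)]$. Hölder's inequality bounds it by $\|N\|_\infty\,\|(\cM_\nu-\cM_{\mathrm{Haar}})(\sigma)\|_1\le \|N\|_\infty\,\varepsilon$. \Cref{lem:tensorpower} gives $\|N\|_\infty\le(\Tr\Sigma_{\cL})^{\ell}\le d^{\ell}$. With the assumed $\varepsilon\le(4\ell^2/d^2)^{\ell}$, the error is at most $d^{\ell}(4\ell^2/d^2)^{\ell}=(4\ell^2/d)^{\ell}$. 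Adding this to the Haar value gives the claimed bound $2(4\ell^2/d)^{\ell}$.

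\emph{The ``in particular'' clause.} We need $\log_2(1/\varepsilon)=\ell\log_2(d^2/4\ell^2)\le 2\ell n$, which is $\poly(n)$ when $\ell=\poly(n)$. The hypothesis $2\ell<d$ makes this quantity positive and meaningful.

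The main point needing care is the operator-norm bound on $N$. The operator norm must be used rather than a trace norm, since the trace norm of $N$ would cost an extra $d^{2\ell}$ factor. This is exactly why \Cref{lem:tensorpower} bounds $\|T\|_\infty$ by $\sum_e\|L_e\|_2^2$. One should also check that $N$ need not be Hermitian; this is harmless, since $|\Tr[NX]|\le\|N\|_\infty\|X\|_1$ holds for arbitrary $N$ by \Cref{fact:schatten}. Finally, note that the Haar bound applies because $U\ket s$ is Haar on the sphere, while $\ket s$ here is a fixed vector rather than a message label.
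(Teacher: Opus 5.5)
Your proposal follows the paper's proof essentially step by step: the difference between the $\nu$- and Haar-expectations is the twirl difference applied to $\proj{s}^{\otimes 2\ell}$, H\"older bounds it by $\|N\|_\infty\varepsilon\le d^{\ell}\varepsilon$, and \Cref{thm:master} handles the Haar term. One caveat, which the paper shares: $\Tr\Sigma_{\cL}\le d$ does not follow from tracelessness alone (multiplying $\cL$ by a large constant breaks the claimed bound), so \Cref{lem:identify}(iii) covers only isometry-slice families, and the statement implicitly needs $\Tr\Sigma_{\cL}\le d$ as a hypothesis; this hypothesis also holds for traceless Kraus parts by the computation in \Cref{cor:fullTD}.
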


\begin{proof}
  Write $\rho=\proj{Us}$ and
  \[
    \Delta\;=\;\E_{U\sim\nu}\big[\rho^{\otimes2\ell}\big]-\E_{U\sim\mathrm{Haar}}\big[\rho^{\otimes2\ell}\big].
  \]
  Since $\rho^{\otimes2\ell}=U^{\otimes2\ell}\proj{s}^{\otimes2\ell}\adj{(U^{\otimes2\ell})}$, the two expectations are the two twirls of \Cref{def:approxdesign} applied to the same fixed state, that is $\Delta=\big(\cM^{(2\ell)}_\nu-\cM^{(2\ell)}_{\mathrm{Haar}}\big)\big(\proj{s}^{\otimes2\ell}\big)$, and hence, $\|\Delta\|_1\le\varepsilon$.

  \medskip\noindent
  By \Cref{lem:tensorpower}, $Y_{\cL}(Us)^{\ell}=\Tr\big[N\rho^{\otimes2\ell}\big]$, so this quantity is linear in $\rho^{\otimes2\ell}$ and
  \begin{align*}
    \Big|\E_{\nu}\big[Y_{\cL}^{\ell}\big]-\E_{\mathrm{Haar}}\big[Y_{\cL}^{\ell}\big]\Big|
    &= \big|\Tr[N\Delta]\big|
    &&\text{linearity}\\
    &\le \|N\Delta\|_1
    &&\text{\Cref{fact:schatten}(i)}\\
    &\le \|N\|_\infty\,\|\Delta\|_1
    &&\text{\Cref{fact:schatten}(ii), $p=\infty$, $q=1$}\\
    &\le d^{\ell}\,\varepsilon
    &&\text{\Cref{lem:tensorpower} and $\|\Delta\|_1\le\varepsilon$.}
  \end{align*}
  Thus,
  \begin{align*}
    \E_{\nu}\big[Y_{\cL}^{\ell}\big]
    &\le \E_{\mathrm{Haar}}\big[Y_{\cL}^{\ell}\big]+d^{\ell}\varepsilon\\
    &\le \Big(\frac{4\ell^{2}}{d}\Big)^{\ell}+d^{\ell}\Big(\frac{4\ell^{2}}{d^{2}}\Big)^{\ell}
    &&\text{\Cref{thm:master}, hypothesis on $\varepsilon$}\\
    &= 2\Big(\frac{4\ell^{2}}{d}\Big)^{\ell}. &&
  \end{align*}
  Now, taking logarithms on $\varepsilon$, we get, $\log_2(1/\varepsilon)\ge\ell\big(2n-2\log_2\ell-2\big)$, which is $\poly(n)$ whenever $\ell=\poly(n)$.
\end{proof}

\begin{lemma}[Distinct-message moments under approximate designs]\label{lem:rankfreeapprox}
  Let $\Phi:\mathsf{L}(\C^d)\to\mathsf{L}(\C^d)$ be a quantum channel, let two messages $s,t\in\M$ such that $s\ne t$.
  Let $m\ge2$, and let $\nu$ be an $\varepsilon$-approximate $2m$-design on $\U(d)$.
  Then
  \[
    \E_{U\sim\nu}\big[\,X_{st}^{m}\,\big]\;\le\;\Big(\frac md\Big)^{m}+d^{m}\varepsilon .
  \]
\end{lemma}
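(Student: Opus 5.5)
The plan mirrors the proof of \Cref{thm:masterapprox}: write $X_{st}^{m}$ as a linear functional of a $2m$-fold tensor power of the projector onto the codeword, then compare the $\nu$-twirl with the Haar twirl. The Haar term is handled by \Cref{lem:rankfree}, and the design error is controlled by the operator norm of the observable.

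\emph{Linearization.} Set $\ket s$ and $\ket t$ for $\ket{s}_A\otimes\ket0_B$ and $\ket{t}_A\otimes\ket0_B$, so that $\ket{\psi_s}=U\ket s$ and $\ket{\psi_t}=U\ket t$. Fix a Kraus representation $\{K_i\}$ of $\Phi$. Then
\[
X_{st}=\sum_i \bra{\psi_t}K_i\ket{\psi_s}\bra{\psi_s}\adj{K_i}\ket{\psi_t}=\Tr\big[T\,(\ketbra{\psi_s}{\psi_t}\otimes\ketbra{\psi_t}{\psi_s})\big]
\]
for the operator $T=\sum_i K_i\otimes \adj{K_i}$, after a suitable reordering of slots via a swap. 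Raising to the $m$-th power gives
\[
X_{st}^m=\Tr\big[N\,U^{\otimes 2m}\,\Omega\,\adj{(U^{\otimes 2m})}\big],
\]
where $N$ is $T^{\otimes m}$ conjugated by a fixed permutation operator. The operator $\Omega=(\ketbra{s}{t}\otimes\ketbra{t}{s})^{\otimes m}$ is fixed, has rank one, and has $\|\Omega\|_1=1$. Thus, exactly as in \Cref{lem:tensorpower}, the quantity is linear in the twirled operator. We need $\Delta=(\cM^{(2m)}_\nu-\cM^{(2m)}_{\mathrm{Haar}})(\Omega)$, which satisfies $\|\Delta\|_1\le\varepsilon$ by \Cref{def:approxdesign}.

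\emph{Main obstacle.} $\Omega$ is not a state, so the diamond-norm bound must be applied to an operator that is not positive. I would first try the polarization identity: $\ketbra{x}{y}$ is a combination of four rank-one projectors with coefficients of modulus at most $1$. This gives $\|\Delta\|_1\le 4\varepsilon$, or better, since $\Omega=\ketbra{x}{y}$ with $x,y$ unit product vectors. Alternatively, the diamond norm dominates the induced trace norm on all trace-norm-one inputs, which gives $\|\Delta\|_1\le\varepsilon$ directly. I would use the latter to match the stated constant.

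\emph{Norm of $N$.} Here we must avoid a rank factor. Bound $\|T\|_\infty$ via $T=\sum_i K_i\otimes\adj{K_i}$. The cleanest route is to realize $T$ up to the swap as $\Phi$'s natural representation, or to write
\[
|\Tr[T(X\otimes Y)]|\le \Big\|\sum_i K_i X \adj{K_i}\Big\|_1\,\|Y\|_\infty\le\|X\|_1\|Y\|_\infty .
\]
For the matrix entries of $N$ against the product structure, the simpler route is the crude bound $\|T\|_\infty\le\sum_i\|K_i\|_2^2=\Tr\sum_i\adj{K_i}K_i=d$, mirroring \Cref{lem:tensorpower}. This gives $\|N\|_\infty\le d^m$.

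\emph{Conclusion.} Combining these,
\[
\E_\nu[X_{st}^m]\le \E_{\mathrm{Haar}}[X_{st}^m]+\|N\|_\infty\|\Delta\|_1\le \Big(\frac md\Big)^m+d^m\varepsilon,
\]
where the Haar term is bounded by the simpler bound of \Cref{lem:rankfree}, valid for $m\ge2$.
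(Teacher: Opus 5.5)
Your proof is correct. Its skeleton matches the paper's: write $X_{st}^m$ as a linear functional of a twirled fixed operator, bound the observable by $d^m$ via $\sum_i\|K_i\|_2^2=d$, bound the twirl difference in trace norm by $\varepsilon$, and add the Haar term from \Cref{lem:rankfree}.

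The difference is the linearization. The paper writes $X_{st}=\Tr[\Phi(\rho_s)\rho_t]=\Tr[(\rho_s\otimes\rho_t)W]$ with $W=\sum_i(\adj{K_i}\otimes\id)\SWAP(K_i\otimes\id)$. The twirled operator is then the product \emph{state} $(\proj{s}\otimes\proj{t})^{\otimes m}$, and \Cref{def:approxdesign} applies verbatim, since the paper's diamond norm is a supremum over density operators.

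Your $T=\sum_iK_i\otimes\adj{K_i}$ has some appeal. It makes the parallel with \Cref{lem:tensorpower} literal, and it needs no swap at all, since $\Tr[(K_i\otimes\adj{K_i})(\ketbra{\psi_s}{\psi_t}\otimes\ketbra{\psi_t}{\psi_s})]=\bra{\psi_t}K_i\ket{\psi_s}\bra{\psi_s}\adj{K_i}\ket{\psi_t}$ directly. The cost is that you must twirl the non-Hermitian $\Omega=\ketbra{x}{y}$. Your remark that the diamond norm controls all trace-norm-one inputs holds for the standard definition, but relative to the paper's density-operator definition it is an extra step.

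That step is true and short, and you should include it. Let $\Psi=\cM^{(2m)}_\nu-\cM^{(2m)}_{\mathrm{Haar}}$, which preserves Hermiticity, and let $u=(x\otimes\ket0+y\otimes\ket1)/\sqrt2$ with a one-qubit ancilla. The off-diagonal blocks of $(\Psi\otimes\id)(\proj u)$ are $\tfrac12\Psi(\ketbra xy)$ and its adjoint. Isolating them by averaging with the conjugate under $\id\otimes\mathrm{diag}(1,-1)$ gives $\|\Psi(\ketbra xy)\|_1\le\|(\Psi\otimes\id)(\proj u)\|_1\le\varepsilon$. This argument is needed because your polarization fallback yields only $2\varepsilon$ here: each of the four projectors has trace $2$ since $x\perp y$, which would double the error term in the statement.

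Finally, your aside $|\Tr[T(X\otimes Y)]|\le\|\sum_iK_iX\adj{K_i}\|_1\|Y\|_\infty$ is not valid for your $T$; that identity belongs to the swapped operator $W$. Nothing downstream uses it, so the proof is unaffected.
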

\begin{proof}Fix a Kraus representation $\lbrace K_i\rbrace_i$ for $\Phi$.

  \medskip \noindent
  Write $\rho_s=\proj{\psi_s}$ and $\rho_t=\proj{\psi_t}$.
  Using \Cref{lem:cycletrace}, we have $\Tr[XY]=\Tr[(X\otimes Y)\SWAP]$, with $\SWAP$ the swap on $\C^d\otimes\C^d$, and then
  \[
    X_{st}=\Tr\big[\Phi(\rho_s)\,\rho_t\big]
    =\Tr\big[(\rho_s\otimes\rho_t)\,W\big],
    \qquad
    W=\sum_i\big(\adj{K_i}\otimes\id\big)\,\SWAP\,\big(K_i\otimes\id\big).
  \]
  The operator $W$ does not depend on the choice of Kraus representation: under $K_i\mapsto\sum_j u_{ij}K_j$ with $\adj u\,u=\id$ the resulting double sum collapses by orthogonality.
  Taking the $m$-th power, $(X_{st})^m=\Tr\big[W^{\otimes m}(\rho_s\otimes\rho_t)^{\otimes m}\big]$.

  \medskip\noindent
  \emph{Now, the norm of $W$.}
  \begin{align*}
    \|W\|_\infty
    &\le \sum_i\big\|(\adj{K_i}\otimes\id)\,\SWAP\,(K_i\otimes\id)\big\|_\infty
    &&\text{triangle inequality}\\
    &= \sum_i\big\|\adj{K_i}\otimes\id\big\|_\infty\big\|K_i\otimes\id\big\|_\infty
    &&\text{$\SWAP$ is unitary}\\
    &= \sum_i\|K_i\|_\infty^{2}
    &&\text{\Cref{fact:schatten}(iv)}\\
    &\le \sum_i\|K_i\|_2^{2}
    &&\text{\Cref{fact:schatten}(iii)}\\
    &= \Tr\Big[\sum_i\adj{K_i}K_i\Big]=d
    &&\text{trace preservation,}
  \end{align*}
  so $\|W^{\otimes m}\|_\infty=\|W\|_\infty^{m}\le d^{m}$.

  \medskip\noindent
  Since $(\rho_s\otimes\rho_t)^{\otimes m}=U^{\otimes2m}\big(\proj{s}\otimes\proj{t}\big)^{\otimes m}\adj{(U^{\otimes2m})}$, the difference of its $\nu$- and Haar-expectations is $\big(\cM^{(2m)}_\nu-\cM^{(2m)}_{\mathrm{Haar}}\big)$ applied to a fixed state, hence of trace norm at most $\varepsilon$.
  Arguing as in \Cref{thm:masterapprox},
  \[
    \E_\nu\big[(X_{st})^m\big]
    \;\le\;\E_{\mathrm{Haar}}\big[(X_{st})^m\big]+\|W^{\otimes m}\|_\infty\,\varepsilon
    \;\le\;\Big(\frac md\Big)^m+d^{m}\varepsilon,
  \]
  the last step is by \Cref{lem:rankfree}.
\end{proof}

\subsection{Completeness}

For the construction of \Cref{def:enc}, the decoder, which also knows $\pp=U$, applies $\adj U$. Hence, for every message $m$,
\[
  \Dec_U\!\left(\Enc_U(m)\right)
  =
  \adj U U \ket{0^n,m}
  =
  \ket{0^n,m},
\]
and therefore
\[
  \Pr\!\left[\Dec_U\!\left(\Enc_U(m)\right)=m\right]=1.
\]
Thus, the construction achieves perfect completeness. This establishes
the completeness branch \textup{(C)} of \Cref{def:pq}.

\subsection{Detection}

For the detection branch, we apply the off-diagonal moment estimates above to the finite family of channels $\Phi_V$ induced by the phase-1 circuits $V\in\cG_p$.

\begin{proposition}[Relaxed detection at moment order $m$]\label{prop:relaxedgeneral}
  Fix expansion factor $\gamma>2$, set $k=\frac{n}{\gamma}$ and $\varepsilon=2^{-k}$, and let $\FAdv$ be a finite family of channels on $\C^d$.
  Let $m$ be an integer with $2\le m<d$ satisfying
  \begin{equation}\label{eq:momentcondition}
    m\Big[\big(1-\tfrac{2}{\gamma}\big)n-\log_2 m\Big]
    \;\ge\; 2k+\log_2|\FAdv|+n .
  \end{equation}
  Then, for Haar-random $U$,
  \[
    \Pr_{U}\Big[\;\exists\, s\in\M,\ \exists\,\Phi\in\FAdv:\
    \textstyle\sum_{t\ne s}X_{st}>\varepsilon \Big]\;\le\;2^{-n}.
  \]
\end{proposition}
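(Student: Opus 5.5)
The plan is a threshold-plus-union-bound argument built on \Cref{lem:rankfree}. Set the per-pair threshold $\tau=2^{-2k}$. If $X_{st}\le\tau$ for every $t\ne s$, then
\[
\sum_{t\ne s}X_{st}\le (2^k-1)\,2^{-2k}<2^{-k}=\varepsilon .
\]
So the bad event in the statement is contained in the union, over $\Phi\in\FAdv$ and ordered pairs $(s,t)$ with $s\ne t$, of the events $\{X_{st}>\tau\}$. It therefore suffices to bound $\Pr_U[X_{st}>\tau]$ for a fixed channel and a fixed pair, and then multiply by $|\FAdv|\cdot 2^{2k}$.

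For a fixed $\Phi$ and $s\ne t$, apply Markov's inequality to the $m$-th moment. \Cref{lem:rankfree} gives, for $m\ge2$,
\[
\Pr_U[X_{st}>\tau]\le \frac{\E_U[X_{st}^m]}{\tau^m}\le \Big(\frac{m}{d\tau}\Big)^m = \Big(m\,2^{2k-n}\Big)^m .
\]
This bound involves no Kraus rank. Since $k=n/\gamma$, we have $n-2k=(1-\tfrac2\gamma)n$, so the bound equals $2^{-m[(1-2/\gamma)n-\log_2 m]}$. The hypothesis $\gamma>2$ together with \cref{eq:momentcondition} makes this exponent positive and large.

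The union bound then gives
\[
\Pr_U[\text{bad}]\le |\FAdv|\,2^{2k}\,2^{-m[(1-2/\gamma)n-\log_2 m]}.
\]
Condition \cref{eq:momentcondition} states exactly that the exponent $m[(1-2/\gamma)n-\log_2 m]$ is at least $2k+\log_2|\FAdv|+n$, so the right-hand side is at most $2^{-n}$.

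The side condition $2\le m<d$ is needed only so that the simplified bound $(m/d)^m$ of \Cref{lem:rankfree} applies. There is no real obstacle here. The one point to watch is bookkeeping: counting at most $2^{2k}$ ordered pairs, and making sure the threshold slack $2^k\tau<\varepsilon$ holds strictly. The exact binomial form of \Cref{lem:rankfree} could also be used if one wanted to drop the $\log_2 m$ loss, but it is not needed.
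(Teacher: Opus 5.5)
Your proposal is correct and matches the paper's proof essentially step for step. Both use the threshold $\tau=2^{-2k}$ with the strict slack $(2^k-1)\tau<\varepsilon$, then Markov at order $m$ via the simplified bound $(m/d)^m$ of \Cref{lem:rankfree}, then a union bound over fewer than $2^{2k}$ ordered pairs and $|\FAdv|$ channels, with \cref{eq:momentcondition} turning the exponent into exactly $2^{-n}$. One small remark: the simplified bound needs only $m\ge 2$, not $m<d$, and $m<d$ is in any case forced by \cref{eq:momentcondition}, whose left side would be negative if $m\ge d$.
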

\begin{proof}
  Fix $s\in\M$ and $\Phi\in\FAdv$.
  Since there are $2^k-1$ non-negative terms $X_{st}$ with distinct messages $t\ne s$, set $\tau = 2^{-2k}$. If $X_{st}\le\tau$ for every $t\ne s$, then
  \[
    \sum_{t\ne s}X_{st}
    \le (2^k-1)\tau
    = (2^k-1)2^{-2k}
    < 2^{-k}
    = \varepsilon.
  \]
  In other words, if the total contribution of the off-diagonal terms exceeds $\varepsilon$, then at least one individual term must itself exceed the threshold $\tau$. Therefore,
  \begin{equation} \label{eq:event-inclusion}
    \Big\{\textstyle\sum_{t\ne s}X_{st} > \varepsilon\Big\}
    \subseteq
    \bigcup_{t\ne s}\{X_{st}>\tau\}.
  \end{equation}
  Here, the braces $\{\cdot\}$ denote events with respect to the randomness of the experiment: the event on the left is that $\sum_{t\ne s}X_{st}$ exceeds $\varepsilon$, while the event on the right is that $X_{st}$ exceeds $\tau$.

  By \Cref{lem:rankfree} and $m\ge2$ we have $\E_U[(X_{st})^m]\le (m/d)^m$, and $d\tau=2^{\,n-2k}=d^{\,1-2/\gamma}$, so Markov's inequality at order $m$ gives
  \[
    \Pr_U[X_{st}>\tau]\;\le\;\frac{\E_U[(X_{st})^m]}{\tau^m}
    \;\le\;\Big(\frac{m}{d\tau}\Big)^{m}
    \;=\;\Big(\frac{m}{d^{\,1-2/\gamma}}\Big)^{m}.
  \]

  Define the bad event:
  \[
    \mathrm{BAD}
    =
    \Big\{
      \exists\,s\in\M,\ \exists\,\Phi\in\FAdv:
      \textstyle\sum_{t\ne s}X_{st}>\varepsilon
    \Big\}.
  \]
  There are fewer than $2^{2k}$ ordered pairs $(s,t)$ with $t\ne s$,
  and $|\FAdv|$ channels. Thus, by the preceding inclusion \cref{eq:event-inclusion} and a union
  bound,
  \[
    \Pr_U[\mathrm{BAD}]
    \le
    2^{2k}|\FAdv|
    \Big(\frac{m}{d^{\,1-2/\gamma}}\Big)^m.
  \]
  Taking logarithms,
  \begin{align*}
    \log_2\Pr_U[\mathrm{BAD}]
    &\le
    2k+\log_2|\FAdv|
    +m\Big[\log_2m-\big(1-\tfrac{2}{\gamma}\big)n\Big]\\
    &\le
    2k+\log_2|\FAdv|
    -\big(2k+\log_2|\FAdv|+n\big)\\
    &=-n,
  \end{align*}
  where the second inequality is exactly \cref{eq:momentcondition}.
  Therefore,
  \[
    \Pr_U[\mathrm{BAD}]\le2^{-n},
  \]
  as claimed.
\end{proof}

We now apply \Cref{prop:relaxedgeneral} to the finite family
\[
  \FAdv=\{\Phi_V:V\in\cG_p\}.
\]
By the choice of moment order in \cref{def:momentorder} and
\Cref{fact:gatecount}, the condition \cref{eq:momentcondition} is
satisfied for all sufficiently large $n$. Hence, except with probability
at most $2^{-n}$ over the Haar-random choice of $U$, simultaneously for
every $s\in\M$ and every $V\in\cG_p$,
\[
  \sum_{t\ne s}X_{st}\le\varepsilon.
\]

Now fix a $p$-bounded adversary $\A$ and
$b\in\{0,1\}$. For every public parameter $\pp=U$, the message $t=m_b$ chosen
by the adversary may depend on $U$, as may its phase-1 circuit $V_\pp$,
but $V_\pp\in\cG_p$. Therefore, on the event above,
\[
  \Pr\!\left[
    \hat m\notin\{\perp,m_b\}
    \,\middle|\,
    U
  \right]
  =
  \sum_{s\ne t}X_{s t}
  \le\varepsilon.
\]
Averaging over $U$ and accounting for the bad event gives
\[
  \Pr\Big[
    \hat m\notin\{\perp,m_b\}
    \text{ in }
    \Expt^{\NTDshort}_{\Pi,\A}(\lambda,k,b)
  \Big]
  \le
  \varepsilon+2^{-n}
\]
Thus, for the chosen parameters, the right-hand side is negligible,
which establishes the detection condition \textup{(D)} of
\Cref{def:pq}.

\subsection{Everlasting secrecy}\label{sec:secrecy}

It remains to establish the everlasting-secrecy branch of \Cref{def:pq}.  We use the traceless family associated with a phase-1 isometry introduced above.  By \Cref{lem:identify}, for every bounded phase-1 isometry $V\in\cG_p$, the accepting-branch vector $w_s$ is close to the message-independent reference vector $v_r$ whenever the corresponding quadratic statistic $Y_{\cL}(\psi_s)$ is small.  The master moment bound \Cref{thm:master} shows that this happens simultaneously for every message and every bounded phase-1 circuit with overwhelming probability over the choice of the public parameter.

\begin{lemma}[Concentration of the accept-branch vector]\label{lem:conc}
  Let $V\in\cG_p$ be fixed and let moment order $\ell$ be as in \cref{def:momentorder}, with $2\ell<d$. For Haar-random $U$ and every
  $s\in\M$,
  \[
    \E_U\big[\|w_s-v_r\|^{2\ell}\big]
    \le \Big(\frac{4\ell^2}{d}\Big)^\ell.
  \]
  Consequently, setting $\eta=2\ell\sqrt{2/d}$, we have
  \[
    \Pr_U\big[\|w_s-v_r\|>\eta\big]\le2^{-\ell}.
  \]
\end{lemma}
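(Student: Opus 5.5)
The proof combines \Cref{lem:identify} with the master bound \Cref{thm:master}, and then applies Markov's inequality at order $\ell$.

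First, fix $V\in\cG_p$ and form its slices $A_e$ and traceless family $\cL=\{L_e\}_e$ as in \cref{eq:slice-traceless}. By \Cref{lem:identify}(ii), for every $U$ and $s$,
\[
\|w_s-v_r\|^{2\ell}=\big(\|w_s-v_r\|^2\big)^{\ell}=Y_{\cL}(\psi_s)^{\ell}.
\]
Since $V$ is fixed before $U$ is drawn, the family $\cL$ does not depend on $U$. Moreover, for Haar-random $U$ the vector $\ket{\psi_s}=U(\ket{s}_A\otimes\ket0_B)$ is Haar-distributed on $\mathbb S(\C^d)$. Hence
\[
\E_U\big[\|w_s-v_r\|^{2\ell}\big]=\E_{v\leftarrow\mathbb S(\C^d)}\big[Y_{\cL}(v)^{\ell}\big].
\]
Applying \Cref{thm:master} gives the bound $(\Tr\Sigma_{\cL})^{\ell}(2\ell/d)^{2\ell}$. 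By \Cref{lem:identify}(iii), $\Tr\Sigma_{\cL}\le d$, so the bound is at most $d^{\ell}\cdot 4^{\ell}\ell^{2\ell}/d^{2\ell}=(4\ell^2/d)^{\ell}$, which is the first claim. The hypothesis $2\ell<d$ is what keeps the binomial estimate used inside \Cref{thm:master} within its valid range.

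For the tail bound, set $\eta^2=8\ell^2/d$, which is the same as $\eta=2\ell\sqrt{2/d}$. Markov's inequality applied to the nonnegative variable $\|w_s-v_r\|^{2\ell}$ gives
\[
\Pr_U\big[\|w_s-v_r\|>\eta\big]\le\frac{\E_U\|w_s-v_r\|^{2\ell}}{\eta^{2\ell}}\le\Big(\frac{4\ell^2/d}{8\ell^2/d}\Big)^{\ell}=2^{-\ell}.
\]

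No step poses a real obstacle. The only point requiring care is the identification of $\|w_s-v_r\|^2$ with $Y_{\cL}(\psi_s)$. The reference vector $v_r$ must be message-independent and defined through $\Tr A_e/d$, so that the difference $w_s-v_r$ involves only the traceless parts $L_e$; only then does \Cref{thm:master}, which relies on the vanishing of fixed-point permutations, apply. This identification is already provided by \Cref{lem:identify}, and the remaining steps are the arithmetic above.
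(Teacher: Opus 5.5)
Your proposal is correct and follows the paper's proof essentially verbatim: \Cref{lem:identify}(ii)--(iii) turn $\|w_s-v_r\|^{2\ell}$ into $Y_{\cL}(\psi_s)^{\ell}$ with $\Tr\Sigma_{\cL}\le d$, \Cref{thm:master} then gives $(4\ell^2/d)^{\ell}$, and Markov with $\eta^2=8\ell^2/d$ gives $2^{-\ell}$. One small inaccuracy: the hypothesis $2\ell<d$ is not what keeps the binomial estimate inside \Cref{thm:master} valid, because \Cref{fact:binom} only needs $2\ell\le d+2\ell-1$, which holds for every $\ell\ge1$, so the moment bound holds without it.
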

\begin{proof}
  By \Cref{lem:identify}(ii),
  \[
    \|w_s-v_r\|^{2\ell}
    =Y_{\cL}(\psi_s)^\ell,
  \]
  where $\psi_s=U\ket s$.  By \Cref{lem:identify}(iii),
  $\Tr\Sigma_{\cL}\le d$, and hence \Cref{thm:master} gives
  \[
    \E_U\big[\|w_s-v_r\|^{2\ell}\big]
    \le d^\ell\Big(\frac{2\ell}{d}\Big)^{2\ell}
    =\Big(\frac{4\ell^2}{d}\Big)^\ell.
  \]
  Markov's inequality at order $2\ell$, with
  $\eta^2=8\ell^2/d$, gives
  \[
    \Pr_U[\|w_s-v_r\|>\eta]
    \le
    \frac{(4\ell^2/d)^\ell}{(8\ell^2/d)^\ell}
    =2^{-\ell}.
  \]
\end{proof}

The preceding estimate holds for a fixed message and a fixed phase-1 circuit.  Our choice of moment order allows us to make it simultaneous over all messages and all phase-1 circuits of size at most $p$.

\begin{lemma}[Uniform concentration over bounded phase-1 circuits]
  \label{lem:uniform}
  Let $p$ be a polynomial, let $\ell$ be the moment order as in \cref{def:momentorder}, and put $\eta=2\ell\sqrt{2/d}$. Assume $n$ is large enough that $2\ell<d$.  Then, except with probability at most $2^{-n}$ over Haar-random $U$, for every $s\in\M$ and every $V\in\cG_p$,
  \[
    \|w_s(V)-v_r(V)\|\le\eta.
  \]
  Moreover $\eta=\poly(n)\cdot2^{-n/2}$.
\end{lemma}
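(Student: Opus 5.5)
The plan is a union bound over the fixed finite index set $\M\times\cG_p$, fed by the per-pair tail estimate of \Cref{lem:conc}. The index set is fixed before $U$ is drawn: $\cG_p$ is defined independently of $U$ (\Cref{subsec:moment}), and $\M=\{0,1\}^k$. So the bad event
\[
  \mathrm{BAD}=\big\{\exists\, s\in\M,\ \exists\, V\in\cG_p:\ \|w_s(V)-v_r(V)\|>\eta\big\}
\]
is a finite union of events $\{\|w_s(V)-v_r(V)\|>\eta\}$. For a fixed pair, the vectors $w_s(V)$ and $v_r(V)$ are measurable functions of $U$ alone, and $v_r(V)$ does not depend on $U$ at all.

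The steps are as follows.

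\begin{enumerate}
\item Apply \Cref{lem:conc} to each fixed pair $(s,V)$. This is valid because $2\ell<d$ by assumption, and it gives
\[
  \Pr_U\big[\|w_s(V)-v_r(V)\|>\eta\big]\le 2^{-\ell}.
\]

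\item Union-bound over all pairs:
\[
  \Pr_U[\mathrm{BAD}]\le 2^{k}\,|\cG_p|\,2^{-\ell}=2^{\,k+\log_2|\cG_p|-\ell}.
\]

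\item Use the definition of $\ell$ in \cref{def:momentorder}, namely $\ell\ge k+\log_2|\cG_p|+n$. The exponent is then at most $-n$, so $\Pr_U[\mathrm{BAD}]\le 2^{-n}$.

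\item Read off the size of $\eta$. By \Cref{fact:gatecount}, $\ell=\poly(n)$, so
\[
  \eta=2\sqrt2\,\ell\,2^{-n/2}=\poly(n)\cdot 2^{-n/2}.
\]
The same fact also justifies the standing assumption $2\ell<d$ for large $n$.
\end{enumerate}

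The only point needing care is the quantifier order. The guarantee must hold simultaneously for circuits $V$ that an adversary may choose depending on $U$. The union bound handles this precisely because it ranges over the entire $U$-independent set $\cG_p$. I would state this explicitly, noting that no measurability issue arises since the family is finite. I do not expect a genuine obstacle beyond this. The moment order was chosen exactly so that the $2^{-\ell}$ tail absorbs the $2^k|\cG_p|$ multiplicity with $2^{-n}$ to spare.
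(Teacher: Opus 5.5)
Your proposal is correct and matches the paper's proof essentially step for step. You apply \Cref{lem:conc} to each fixed pair $(s,V)$, take a union bound over the $2^k|\cG_p|$ pairs, and absorb the multiplicity using $\ell\ge k+\log_2|\cG_p|+n$. You then read off $\eta=2\sqrt2\,\ell\,2^{-n/2}=\poly(n)\cdot2^{-n/2}$ from $\ell=\poly(n)$.
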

\begin{proof}
  Fix $s\in\M$ and $V\in\cG_p$.  By \Cref{lem:conc},
  \[
    \Pr_U\big[\|w_s(V)-v_r(V)\|>\eta\big]
    \le2^{-\ell}.
  \]
  There are $2^k$ messages and $|\cG_p|$ circuits, so a union bound gives
  \[
    \Pr_U\Big[
      \exists s\in\M,\exists V\in\cG_p:
      \|w_s(V)-v_r(V)\|>\eta
    \Big]
    \le
    2^{k+\log_2|\cG_p|-\ell}
    \le2^{-n},
  \]
  by the definition of $\ell$. Finally $\ell=\poly(n)$ by \Cref{fact:gatecount}, so $\eta=\poly(n)\cdot2^{-n/2}$.
\end{proof}

We can now complete the proof of everlasting secrecy. Recall that clause \textup{(ES)} of \Cref{def:pq} asks us to bound
\[
  \Big\|
  (\id_F\otimes\Lambda)\tilde\sigma_{\pp,V}(m_0)
  -
  (\id_F\otimes\Lambda)\tilde\sigma_{\pp,V}(m_1)
  \Big\|_1
\]
for every phase-2 channel $\Lambda$.  Since trace distance cannot increase under the channel $\id_F\otimes\Lambda$, it is enough to establish the desired bound before phase~2, namely for $\Lambda=\id$.

We condition on two good events for the Haar-random public parameter $U$.  First, by the detection result established in the preceding subsection, except with probability at most $2^{-n}$ we have
\begin{equation}\label{eq:tdgood}
  \sum_{t\ne s}X_{st}
  \le 2^{-k}
  \qquad
  \text{for every }s\in\M
  \text{ and every }V\in\cG_p .
\end{equation}
Second, by \Cref{lem:uniform}, except with probability at most $2^{-n}$ we have
\begin{equation}\label{eq:esgood}
  \|w_s(V)-v_r(V)\|
  \le \eta
  \qquad
  \text{for every }s\in\M
  \text{ and every }V\in\cG_p ,
\end{equation}
where $\eta=\poly(n)\cdot 2^{-n/2}$.
By a union bound, both \cref{eq:tdgood,eq:esgood} therefore hold simultaneously except with probability at most $2^{-n+1}$.

The important point is that both good events hold uniformly over all $V\in\cG_p$.  Thus they remain valid even when the phase-1 circuit is chosen as a function of the public parameter $U$, as allowed in the security experiment.  We henceforth fix a public parameter $U$ for which both good events hold, fix the corresponding phase-1 isometry $V\in\cG_p$, and fix arbitrary messages $m_0,m_1\in\M$.  All quantities below are formed with respect to this fixed $U$ and $V$.

Because the flag register $F$ is classical and the decoder outcomes are mutually orthogonal, \cref{eq:nonabort} gives
\begin{multline}\label{eq:es-split}
  \|\tilde\sigma_{\pp,V}(m_0)
  -\tilde\sigma_{\pp,V}(m_1)\|_1
  \le
  \underbrace{
    \|\sigma_{\same}(m_0)-\sigma_{\same}(m_1)\|_1
  }_{(\mathrm I)} \\
  +
  \underbrace{
    \sum_{t\notin\{\perp,\same\}}
    \|\sigma_t(m_0)-\sigma_t(m_1)\|_1
  }_{(\mathrm{II})}.
\end{multline}

\medskip\noindent
\emph{The term $(\mathrm{II})$: detection.}
Each $\sigma_t(m_b)$ is positive semidefinite, so
\begin{align*}
  (\mathrm{II})
  &\le\sum_{t\notin\{\perp,\same\}}\Big(\Tr\sigma_t(m_0)+\Tr\sigma_t(m_1)\Big)
  &&\text{triangle inequality, positivity}\\
  &=\Pr\big[f\notin\{\perp,\same\}\bigm| m_0\big]+\Pr\big[f\notin\{\perp,\same\}\bigm| m_1\big]
  &&\text{$\Tr\sigma_t(m)$ is the weight of branch $t$}\\
  &\le2\cdot2^{-k}
  &&\text{by \cref{eq:tdgood}.}
\end{align*}

\medskip\noindent
\emph{The term $(\mathrm{I})$: concentration.}
By \cref{eq:branch-states} the accept branch is the rank-one subnormalized state $\sigma_{\same}(m_b)=w_{m_b}\adj{w_{m_b}}$, and telescoping gives
\[
  w_{m_0}\adj{w_{m_0}}-w_{m_1}\adj{w_{m_1}}
  =w_{m_0}\adj{\big(w_{m_0}-w_{m_1}\big)}+\big(w_{m_0}-w_{m_1}\big)\adj{w_{m_1}} .
\]
Hence
\begin{align*}
  (\mathrm{I})
  &\le\big\|w_{m_0}\adj{(w_{m_0}-w_{m_1})}\big\|_1+\big\|(w_{m_0}-w_{m_1})\adj{w_{m_1}}\big\|_1
  &&\text{triangle inequality}\\
  &=\leq \big(\|w_{m_0}\|+\|w_{m_1}\|\big)\,\big\|w_{m_0}-w_{m_1}\big\|
  &&\|a\adj b\|_1=\|a\|\,\|b\|\\
  &\le2\,\big\|w_{m_0}-w_{m_1}\big\|
  &&\|w_{m_b}\|^{2}=\Tr\sigma_{\same}(m_b)\le1\\
  &\le2\Big(\big\|w_{m_0}-v_r\big\|+\big\|v_r-w_{m_1}\big\|\Big)
  \\
  &\le4\eta
  &&\text{by \cref{eq:esgood}.}
\end{align*}

\medskip\noindent
Putting it back in \cref{eq:es-split}, on the good event and for every $V\in\cG_p$ and all $m_0,m_1\in\M$,
\[
  \big\|\tilde\sigma_{\pp,V}(m_0)-\tilde\sigma_{\pp,V}(m_1)\big\|_1
  \;\le\;4\eta+2\cdot2^{-k}
  \;=\;\poly(n)\cdot2^{-n/2}+2^{-n/\gamma+1}
  \;=\;2^{-\Omega(n)},
\]
the second term dominating because $\gamma>2$.

\medskip\noindent
Clause \textup{(ES)} requires the bound in expectation over $\pp=U$.
The trace distance between two subnormalized states is at most $2$, so splitting on the good event (where the bad event has probability at most $2^{-n+2}$),
\[
  \E_{U}\big\|\tilde\sigma_{\pp,V}(m_0)-\tilde\sigma_{\pp,V}(m_1)\big\|_1
  \;\le\;2^{-\Omega(n)}+2\cdot2^{-n+2}
  \;=\;2^{-\Omega(n)}.
\]

\subsection{Efficiency} \label{sec:efficient-pq}

It remains to replace the Haar-random unitary by an efficiently samplable ensemble. The detection proof uses the $2m$-th unitary moment and, for all sufficiently large $n$, we may take $m=\ell$ in \cref{eq:momentcondition}. The everlasting-secrecy proof uses the $2\ell$-th moment. Thus no moment above order $2\ell$ is required.

Let $\nu$ be a $\varepsilon$-approximate unitary $2\ell$-design, where
\[
  \varepsilon
  =
  \left(\frac{\ell}{d^2}\right)^\ell.
\]
For $s=t$, \Cref{thm:masterapprox} gives
\[
  \E_{U\sim\nu}\Big[Y_{\cL}(U\ket s)^\ell\Big]
  \le
  2\left(\frac{4\ell^2}{d}\right)^\ell.
\]
For $s\ne t$, \Cref{lem:rankfreeapprox} gives
\[
  \E_{U\sim\nu}\big[X_{st}^{\ell}\big]
  \le
  \left(\frac{\ell}{d}\right)^\ell
  +d^\ell\varepsilon
  =
  2\left(\frac{\ell}{d}\right)^\ell.
\]
Thus, in both cases, the corresponding moment bounds increase by at most a factor of $2$. Since this constant factor does not affect negligibility, the proofs of \textup{(D)} and \textup{(ES)} remain valid. Perfect completeness continues to hold for every unitary in the support of $\nu$.

It remains to bound the complexity of sampling and implementing $U$. By \Cref{fact:gatecount} and \cref{def:momentorder},
\[
  \ell
  =
  O\!\bigl(n+p\log(n+p)\bigr).
\]
Moreover, since $d=2^n$,
\[
  \log_2(1/\varepsilon)
  =
  \ell\bigl(2n-\log_2\ell\bigr)
  \le
  2n\ell.
\]
In particular, $\ell=\poly(n,p)$ and $\log_2(1/\varepsilon)=\poly(n,p)$. Under our choice of parameters, $p=\poly(n)$, and hence $\ell=\poly(n)$. It follows that $2\ell\le2^{n/4}$ for all sufficiently large $n$.

Efficient constructions of approximate unitary designs yield a $\varepsilon$-approximate unitary $t$-design, for $t\le2^{n/4}$, by quantum circuits of size \cite{MPSY24}
\[
  O\!\left(t\,\poly(n)+t\log(1/\varepsilon)\right).
\]
Taking $t=2\ell$ therefore gives circuits of size
\begin{equation}\label{eq:designsize}
  O\!\left(\ell\,\poly(n)+n\ell^2\right)
  =
  \poly(n,p).
\end{equation}

It remains to fix the polynomial $q$. From \cref{eq:designsize}, let
$C>0$ be a sufficiently large constant such that $\Setup$, $\Enc$, and
$\Dec$ are implementable by circuits of size at most
\[
  C\left(\ell\poly(n)+n\ell^2\right).
\]
We therefore set
\[
  q(\lambda,k)
  =
  C\left(\ell\poly(n)+n\ell^2\right).
\]
By \Cref{fact:gatecount} and \cref{def:momentorder},
\[
  \ell
  =
  O\!\left(n+p\log(n+p)\right).
\]
Substituting this into the definition of $q$ gives
\[
  q(\lambda,k)
  =
  O\!\left(
    \bigl(n+p\log(n+p)\bigr)\poly(n)
    +
    n\bigl(n+p\log(n+p)\bigr)^2
  \right).
\]
In particular, since $p=\poly(n)$, we have
\[
  q=\poly(n,p),
\]
and hence $q$ is polynomial in $(\lambda,k)$. This completes the proof
of \Cref{thm:pq-main}.
    \section{Relaxed tamper detection}

We now return to the relaxed tamper-detection notion. The proof uses the rank-free distinct-message estimate already established as a preliminary result for the proof of \Cref{thm:pq-main}

\begin{definition}[Tamper detection] \label{def:td}
  A code $(\Enc_U,\Dec_U)$ is \emph{$\varepsilon$-tamper-detecting} against $\Fadv$ if for every $m\in\mathcal M$ and every $\Phi\in\Fadv$
  \[
    \Pr_U[\Dec_U(\Phi(\Enc_U(m)))\in\{\perp\}]\ge1-\varepsilon,
  \]
  whenever $\Phi$ acts nontrivially, and it is \emph{$\varepsilon$-relaxed-tamper-detecting} if
  \[
    \Pr_U[\Dec_U(\Phi(\Enc_U(m)))\in\{\perp,m\}]\ge1-\varepsilon.
  \]
  We say the code is secure \emph{with overwhelming probability} if the above holds for a $1-\negl(k)$ fraction of $U$.
\end{definition}

In the relaxed setting, prior work conjectured that a cardinality bound alone should suffice, with no assumption on Kraus rank or entanglement fidelity, for every $\alpha<1$.  This is the universal relaxed-detection problem. Prior to the present work, the conjecture was known only for certain structured families of channels, including replacement and measure-and-prepare channels.

\begin{conjecture}[{\cite[Conj.~1]{BKR26}}]\label{conj:BKR26}
  Let $\Fadv$ be any family of channels with $|\Fadv|\le2^{d^\alpha}$ for some constant $\alpha<1$. Then there is an $\varepsilon$-relaxed-tamper-detection code against $\Fadv$ with $\varepsilon=\negl(k)$ and expansion factor $\gamma=\mathcal{O}(1)$.
\end{conjecture}

We are now able to prove this conjecture, establishing universal relaxed tamper detection against arbitrary families of channels with cardinality $2^{d^\alpha}$ for any $\alpha<1$.

\begin{theorem}[Universal Relaxed Quantum Tamper Detection]\label{thm:utd}
  Fix parameters $\alpha < 1$ and expansion factor $\gamma>\tfrac{2}{1-\alpha}$.
  And fix any  adversarial family $\Fadv$ of channels with $|\Fadv|\le2^{d^\alpha}$.
  Let $\left(\Enc_U, \dec_U\right)$ be the Haar random encoding-decoding strategy on $n$ qubits (as given in \Cref{def:enc}).
  Then, there exists an integer $n_0=n_0(\alpha,\gamma)$ such that for all $n\ge n_0$,
  \begin{equation*}
    \Pr_U \big[ (\enc_U, \dec_U) \text{ is $\varepsilon$-relaxed tamper secure against $\advCPTP$} \big] \geq 1 - \operatorname{negl}(k),
  \end{equation*}
  with $\varepsilon = 2^{-k}$.
  In particular, there exists  a unitary $U \in \mathcal{U}_d(\CC)$ such that $(\enc_U, \dec_U)$ is $\varepsilon$-relaxed tamper  secure against $\advCPTP$.
\end{theorem}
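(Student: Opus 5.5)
The plan is to derive \Cref{thm:utd} directly from \Cref{prop:relaxedgeneral}, applied with $\FAdv=\Fadv$, $\varepsilon=2^{-k}$, and a moment order $m$ of order $d^{\alpha'}$ for a suitable $\alpha'$ slightly larger than $\alpha$. By \Cref{lem:dropdiag}, $\varepsilon$-relaxed tamper detection against $\Fadv$ is equivalent to the event $\sum_{t\ne s}X_{st}\le\varepsilon$ for all $s\in\M$ and $\Phi\in\Fadv$. \Cref{prop:relaxedgeneral} bounds the complement of that event by $2^{-n}=2^{-\gamma k}$, which is negligible in $k$. The existence of a good $U$ then follows immediately, since the probability is positive. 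The only work is therefore to choose $m$ so that the moment condition \cref{eq:momentcondition} holds with $2\le m<d$.

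The main obstacle is exactly this parameter check, since $\log_2|\Fadv|\le d^{\alpha}$ is exponential in $n$. The left-hand side of \cref{eq:momentcondition} must reach size roughly $d^{\alpha}$, while the per-moment gain $(1-\tfrac2\gamma)n-\log_2 m$ shrinks as $m$ grows. I would take $m=\lceil d^{\beta}\rceil$, where $\beta$ is strictly between $\alpha$ and $1-\tfrac2\gamma$. Such a $\beta$ exists because $\gamma>\tfrac{2}{1-\alpha}$ is equivalent to $1-\tfrac2\gamma>\alpha$.

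With this choice, $\log_2 m\le \beta n+1$, so the bracket in \cref{eq:momentcondition} is at least $(1-\tfrac2\gamma-\beta)n-1=c n-1$ for a constant $c>0$. The left-hand side is therefore at least $d^{\beta}(cn-1)$. The right-hand side is at most $2k+d^{\alpha}+n\le d^{\alpha}+3n$. Since $\beta>\alpha$, we have $d^{\beta}(cn-1)\ge d^{\alpha}+3n$ for all $n\ge n_0(\alpha,\gamma)$. Also $\beta<1$, so $2\le m<d$ for large $n$, and \Cref{lem:rankfree} applies with the moment order $m<d$, as used inside \Cref{prop:relaxedgeneral}.

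Finally, I would remark that no property of the channels beyond cardinality enters. The estimate of \Cref{lem:rankfree} uses only that $\Phi(\proj{\psi_s})$ is a density operator, so no rank or fidelity hypothesis is needed. I would also note that $\varepsilon=2^{-k}$ together with a failure probability of $2^{-n}$ gives the claimed $1-\negl(k)$ bound.
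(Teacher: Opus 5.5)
Your proposal is correct and is essentially the paper's argument. Both use the same reduction: threshold $\tau=2^{-2k}$, \Cref{lem:rankfree} with Markov at a moment order polynomial in $d$, and a union bound over the $2^{2k}$ ordered pairs and the $2^{d^\alpha}$ channels. You invoke \Cref{prop:relaxedgeneral} as a black box at $m=\lceil d^\beta\rceil$ with $\alpha<\beta<1-\tfrac2\gamma$, whereas the paper inlines the computation with $m=\lceil d^\alpha\rceil$, where the factor $n$ in the exponent already supplies the slack you obtain from $\beta>\alpha$. The paper thereby records the much smaller failure probability $2^{-\frac{\delta_\star}{2}nd^\alpha}$ with $\delta_\star=1-\alpha-\tfrac2\gamma$, against your $2^{-n}$, but both give the required $1-\negl(k)$.
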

\begin{proof}
  We follow the same reduction as in the proof of \Cref{prop:relaxedgeneral}. Set $\varepsilon=2^{-k}$ and $\tau=2^{-2k}$. For every fixed $s\in\M$ and $\Phi\in\FAdv$, \cref{eq:event-inclusion} gives
  \[
    \Big\{\textstyle\sum_{t\ne s}X_{st}>\varepsilon\Big\}
    \subseteq
    \bigcup_{t\ne s}\{X_{st}>\tau\}.
  \]
  Moreover, for every $2\le m<d$, \Cref{lem:rankfree} and Markov's inequality give
  \[
    \Pr_U[X_{st}>\tau]
    \le
    \Big(\frac{m}{d^{\,1-2/\gamma}}\Big)^m.
  \]

  Define the bad event
  \[
    \mathrm{BAD}
    =
    \Big\{
      \exists\,s\in\M,\ \exists\,\Phi\in\FAdv:
      \textstyle\sum_{t\ne s}X_{st}\varepsilon
    \Big\}.
  \]
  There are fewer than $2^{2k}$ ordered pairs $(s,t)$ with $t\ne s$, and $|\FAdv|\le 2^{d^\alpha}$ channels. Thus, by \cref{eq:event-inclusion} and a union bound,
  \[
    \Pr_U[\mathrm{BAD}]
    \le
    2^{2k}2^{d^\alpha}
    \Big(\frac{m}{d^{\,1-2/\gamma}}\Big)^m.
  \]

  We now choose
  \[
    m=\lceil d^\alpha\rceil.
  \]
  For all sufficiently large $n$, we have $2\le m<d$. Taking logarithms in the preceding bound gives
  \begin{align*}
    \log_2\Pr_U[\mathrm{BAD}]
    &\le
    2k+d^\alpha
    +m\Big[\log_2m-\big(1-\tfrac{2}{\gamma}\big)n\Big].
  \end{align*}
  Since $m=\lceil d^\alpha\rceil$, we have
  \[
    \log_2m\le \alpha n+1.
  \]
  Let $\delta_\star = 1-\alpha-\frac{2}{\gamma}$. By the assumption $\gamma>\frac{2}{1-\alpha}$, we have
  $\delta_\star>0$, and hence
  \[
    \log_2m-\big(1-\tfrac{2}{\gamma}\big)n
    \le
    -\delta_\star n+1.
  \]
  In particular, for all sufficiently large $n$, the right-hand side is non-positive. Since $m\ge d^\alpha$, it follows that
  \begin{align*}
    \log_2\Pr_U[\mathrm{BAD}]
    &\le
    \frac{2n}{\gamma}
    +d^\alpha
    +d^\alpha(-\delta_\star n+1)\\
    &=
    -\delta_\star n d^\alpha
    +2d^\alpha
    +\frac{2n}{\gamma}.
  \end{align*}

  It remains to show that the two positive terms can be absorbed into the leading negative term. For all sufficiently large $n$,
  \[
    2d^\alpha
    \le
    \frac{\delta_\star}{4}n d^\alpha
    \qquad\text{and}\qquad
    \frac{2n}{\gamma}
    \le
    \frac{\delta_\star}{4}n d^\alpha.
  \]
  Therefore,
  \[
    \log_2\Pr_U[\mathrm{BAD}]
    \le
    -\frac{\delta_\star}{2}n d^\alpha,
  \]
  and consequently
  \[
    \Pr_U[\mathrm{BAD}]
    \le
    2^{-\frac{\delta_\star}{2}n d^\alpha}
    =
    \negl(n).
  \]

  Thus, except with negligible probability over the Haar-random choice of $U$, simultaneously for every $s\in\M$ and every $\Phi\in\FAdv$,
  \[
    \sum_{t\ne s}X_{st}
    \le
    \varepsilon.
  \]
  By \Cref{lem:dropdiag}, this means that $(\Enc_U,\Dec_U)$ is $\varepsilon$-relaxed-tamper-detecting against $\FAdv$. Since
  \[
    \varepsilon=2^{-k}=\negl(k),
  \]
  all but a negligible fraction of Haar-random unitaries yield a relaxed tamper-detection code against $\FAdv$. In particular, at least one such unitary exists.

  Finally, for every constant $\alpha<1$, one may choose any constant $\gamma>\frac{2}{1-\alpha}$, and hence the expansion factor is $\gamma=\mathcal{O}(1)$.
\end{proof}
    \section{Tamper detection}

We next strengthen relaxed detection to full tamper detection. The off-diagonal branch is already controlled by \Cref{thm:utd}, the remaining work is the diagonal overlap, for which the master bound \Cref{thm:master} from the proof of \Cref{thm:pq-main} is now reused.

The diagonal estimate depends not only on the cardinality of the adversarial family, but also on structural properties of the individual channels.

\begin{definition}[Bounded family] \label{def:bounded}
  A family $\Fadv$ of channels on $\C^d$ is \emph{$(\alpha,\delta,\delta')$-bounded} if the size of the family is at most $|\Fadv|\le2^{d^{\alpha}}$, if every $\Phi\in\Fadv$ has Kraus rank at most $\rank(\Phi)\le d^{1-\delta}$, and if every $\Phi\in\Fadv$ has entanglement fidelity at most $\Fe(\Phi)\le d^{-2\delta'}$.
\end{definition}

We record the notion of full tamper-detection that will be used throughout this section.

\begin{definition}[Tamper detection]\label{def:tamper-detection}
    Let $\FAdv$ be a family of channels on $\C^d$ and let
    $\varepsilon>0$. A code $(\Enc,\Dec)$ is
    $\varepsilon$-\emph{tamper-detecting} against $\FAdv$ if, for
    every message $m\in\mathcal M$ and every $\Phi\in\FAdv$,
    \[
        \Pr\!\left[
            \Dec\!\left(\Phi(\Enc(m))\right)=\perp
        \right]
        \ge 1-\varepsilon.
    \]
\end{definition}

We also recall the prior best known encoding and decoding for tamper detection. Detection against arbitrary CPTP families was first established under three simultaneous restrictions on the family cardinality, Kraus rank, and entanglement fidelity (\Cref{def:bounded}).

\begin{theorem}[{\cite[Thm.~4]{BKR26}}]\label{thm:BKR264}
  If $\Fadv$ is $(\alpha,\delta,\delta')$-bounded with $0\le2\alpha<\delta\le1$ and $\delta'>\tfrac{1-\delta}{2}$, then for expansion factor $\gamma>\max\{\tfrac{2}{\delta+2\delta'-1},\tfrac{2}{1-2\delta'-2\alpha}\}$ the Haar encoding is $\negl(k)$-tamper-detecting against $\Fadv$ with overwhelming probability.
\end{theorem}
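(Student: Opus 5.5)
The plan is to rederive this statement from the rank-free estimates of the present paper instead of the Weingarten route of \cite{BKR26}. We will show in passing that the Kraus-rank hypothesis is never used. By \Cref{def:overlap}, the non-abort probability on input $s$ is $\sum_{t}X_{st}=X_{ss}+\sum_{t\ne s}X_{st}$. It therefore suffices to show that, with overwhelming probability over Haar-random $U$, both terms are negligible simultaneously for all $s\in\M$ and all $\Phi\in\Fadv$.

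\emph{Off-diagonal part.} This is exactly \Cref{thm:utd}, which needs only $|\Fadv|\le 2^{d^\alpha}$ and $\gamma>\tfrac{2}{1-\alpha}$. The latter follows from the hypothesis $\gamma>\tfrac{2}{1-2\delta'-2\alpha}$, because $1-2\delta'-2\alpha\le 1-\alpha$ whenever $\alpha+2\delta'\ge0$. So outside an event of probability $\negl(k)$ we have $\sum_{t\ne s}X_{st}\le 2^{-k}$ for every $s$ and every $\Phi$.

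\emph{Diagonal part.} Fix a Kraus representation $\{K_i\}$ of $\Phi$ and split $K_i=c_i\id+L_i$ with $c_i=\Tr K_i/d$, as in the technical overview. Then $\bra{\psi_s}K_i\ket{\psi_s}=c_i+\bra{\psi_s}L_i\ket{\psi_s}$. Instead of handling the mean-zero cross term $W_s$, we use $|a+b|^2\le 2|a|^2+2|b|^2$ to get
\[
X_{ss}\;\le\;2\Fe(\Phi)+2\,Y_{\cL}(\psi_s),\qquad \cL=\{L_i\}_i .
\]
The first term is at most $2d^{-2\delta'}$ by \Cref{def:bounded}. For the second, $\cL$ is a traceless family. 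The computation of \Cref{lem:identify}(iii), with $\sum_i\adj{K_i}K_i=\id$ in place of \cref{eq:slice-tp}, gives $\Tr\Sigma_{\cL}=d(1-\Fe(\Phi))\le d$. Hence \Cref{thm:master} yields $\E_U[Y_{\cL}(\psi_s)^\ell]\le(4\ell^2/d)^\ell$ for every $\ell$ with $2\ell<d$. Markov at order $\ell$ with threshold $\tau=d^{-(1-2\alpha)/2}$ and moment order $\ell=\lceil d^{\alpha}\rceil$ gives
\[
\Pr_U[Y_{\cL}(\psi_s)>\tau]\le\big(4\ell^2 d^{-(1+2\alpha)/2}\big)^\ell\le\big(8\,d^{-(1-2\alpha)/2}\big)^{\ell}.
\]
We then take a union bound over the $2^k$ messages and the at most $2^{d^\alpha}$ channels. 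The logarithm of the failure probability is at most $k+d^\alpha-\Omega\big((1-2\alpha)n\,d^\alpha\big)$, which is $-\Omega(n d^\alpha)$ for large $n$, since $2\alpha<\delta\le 1$ forces $\alpha<\tfrac12$.

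\emph{Conclusion and main obstacle.} On the intersection of the two good events, the non-abort probability is at most
\[
2d^{-2\delta'}+2d^{-(1-2\alpha)/2}+2^{-k},
\]
and each term is $2^{-\Omega(k)}$ because $d=2^{\gamma k}$ and $\delta'>0$, $\alpha<\tfrac12$. So the Haar encoding is $\negl(k)$-tamper-detecting with overwhelming probability. This covers the stated parameter range; the rank bound and the coupling $2\alpha<\delta$ enter only through the consequence $\alpha<\tfrac12$. The step I expect to need the most care is the diagonal Markov step: the choice of $\ell$ must satisfy $4\ell^2\ll d\tau$ while still beating the $2^{d^\alpha}$ union bound. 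This is exactly where the $\ell=O(\sqrt d)$ ceiling of \Cref{thm:master} limits us to $\alpha<\tfrac12$, and I would verify the exponent bookkeeping there first. A secondary point is that the crude inequality $|a+b|^2\le2|a|^2+2|b|^2$ loses only a constant and avoids any analysis of $W_s$, so no separate concentration argument for the cross term should be needed.
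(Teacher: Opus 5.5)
\textbf{Relation to the paper.} The paper does not prove this statement. It is quoted from \cite{BKR26}, where the argument goes through Weingarten calculus and pays a factor $r^{\ell}$ in the Kraus rank $r$. Your proposal instead reruns the paper's rank-free argument for \Cref{thm:fullTDfinal}, which is the right way to see it. In fact, under the reading discussed below, \Cref{thm:fullTDfinal} applies verbatim with its $\delta$ set to $2\delta'$:
\begin{itemize}
\item $\gamma>2/(1-2\delta'-2\alpha)$ dominates both $2/(1-\alpha)$ and $1/(1-2\alpha)$;
\item $2/(\delta+2\delta'-1)\ge 1/(2\delta')$, because $2\delta'\ge\delta-1$.
\end{itemize}
Your two deviations from that proof are harmless. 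The split $X_{ss}\le 2\Fe(\Phi)+2Y_s$ replaces the Cauchy--Schwarz treatment of $W_s$ in \Cref{lem:diagmoment} at the cost of a constant. The $k$-independent threshold $d^{-(1-2\alpha)/2}$ gives a $\negl(k)$ error rather than $\Theta(2^{-k})$. That is all the statement asks, and it frees your diagonal branch from any condition on $\gamma$.

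\textbf{The gap: the off-diagonal reduction.} This is the one step that does not follow from the hypotheses as written. You deduce $\gamma>2/(1-\alpha)$ from $\gamma>2/(1-2\delta'-2\alpha)$ via $1-2\delta'-2\alpha\le 1-\alpha$. That inference is valid only if $1-2\delta'-2\alpha>0$, and nothing in the hypotheses bounds $\delta'$ from above.

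For $\delta'\ge\tfrac12-\alpha$:
\begin{itemize}
\item the second term of the maximum is negative or undefined;
\item the condition collapses to $\gamma>2/(\delta+2\delta'-1)$, which can lie far below $2/(1-\alpha)$;
\item \Cref{thm:utd} can no longer be invoked, since its per-pair threshold $2^{-2k}$ is exactly what costs $\gamma>2/(1-\alpha)$.
\end{itemize}

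You should state that you read the hypothesis as requiring both denominators to be positive. This reading is forced. With $\delta=1$, $\delta'=2$, $\alpha=0.1$, the literal condition admits $\gamma=1$. There $\Pi_\perp=0$, so a traceless unitary such as $X\otimes\id$ (which has $\Fe=0$ and Kraus rank $1$) is never detected. Note also that $2\alpha<\delta$ is precisely the condition making the window $\tfrac{1-\delta}{2}<\delta'<\tfrac12-\alpha$ nonempty.

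\textbf{Optional extension to the excluded regime.} If you want $\delta'\ge\tfrac12-\alpha$ covered for every $\gamma>1$, bound the off-diagonal mass in aggregate rather than term by term.
\begin{itemize}
\item Conditioned on $\psi_s$, write $\sum_{t\ne s}X_{st}=\Tr[Q\,\Phi(\proj{\psi_s})]$, with $Q$ a Haar-random rank-$(2^k-1)$ projector on $\psi_s^{\perp}$.
\item The $m$-th moment is convex in the compressed output state, so it is largest for a pure input.
\item \Cref{lem:sym-moment}(i) then gives $\E\big[(\sum_{t\ne s}X_{st})^m\big]\le\binom{2^k+m-2}{m}\big/\binom{d+m-2}{m}\le\big(\tfrac{2^k+m}{d-1}\big)^m$.
\item Apply Markov at threshold $d^{-c}$, with $m=\lceil d^{\alpha}\rceil$ and $0<c<1-\max\{1/\gamma,\alpha\}$. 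This beats the $2^{k+d^{\alpha}}$ union bound.
\end{itemize}
Combined with your diagonal branch, this yields $\negl(k)$ detection for every constant $\gamma>1$.
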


Our \Cref{thm:fullTDfinal} below removes the Kraus-rank restriction entirely: we show that full tamper detection can be obtained under assumptions only on the cardinality of the adversarial family and on the entanglement fidelity of its channels.

The Following first order moment estimate Lemma is in \cite{BKR26}, we include it below for completeness, and refer the reader to \cite{BKR26} for the proof.

\begin{lemma}[Diagonal first moment {\cite[Lem.~5]{BKR26}}]\label{lem:diag1}
  For every message $s$ and every quantum channel $\Phi:\mathsf{L}(\C^d)\to\mathsf{L}(\C^d)$, the diagonal overlap $X_{ss}$ satisfies
  \[
    \E_U[X_{ss}]
    =
    \Favg(\Phi)
    =
    \frac{d\,\Fe(\Phi)+1}{d+1}.
  \]
\end{lemma}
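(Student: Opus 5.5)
The approach is a direct first-moment computation over the Haar measure. The diagonal overlap is a degree-$(2,2)$ polynomial in the entries of $U$ and $\bar U$, so its expectation is fixed by the second symmetric moment. Write $\ket{\psi_s}=U\ket{s}$; by left invariance of the Haar measure this is a Haar-random unit vector $v\leftarrow\mathbb S(\C^d)$. Fix any Kraus representation $\{K_i\}$ of $\Phi$, so that
\[
  X_{ss}=\sum_i\big|\bra v K_i\ket v\big|^2=\sum_i\Tr\big[(K_i\otimes\adj{K_i})\,\proj v^{\otimes 2}\big].
\]

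The first step is to apply \Cref{lem:sym-moment}(i) with $D=d$ and $\ell=2$. This gives $\E_v[\proj v^{\otimes2}]=\Pisym^{(2)}/\binom{d+1}{2}=(\id+P_{(1\,2)})/(d(d+1))$, using \Cref{lem:symproj}.

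The second step is to evaluate the two resulting permutation terms with \Cref{lem:cycletrace}, exactly as in \Cref{ex:ell1} but without tracelessness. The identity permutation contributes $\sum_i\Tr K_i\,\overline{\Tr K_i}=\sum_i|\Tr K_i|^2=d^2\Fe(\Phi)$ by \cref{eq:Fe}. The swap contributes $\sum_i\Tr[K_i\adj{K_i}]=\Tr\big[\sum_i\adj{K_i}K_i\big]=d$ by trace preservation. Combining the two,
\[
  \E_U[X_{ss}]=\frac{d^2\Fe(\Phi)+d}{d(d+1)}=\frac{d\,\Fe(\Phi)+1}{d+1}.
\]

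The final step is to identify this with $\Favg(\Phi)$. This is the standard Horodecki--Nielsen relation between average and entanglement fidelity, which is in fact how $\Favg$ is defined as the Haar average of $\bra v\Phi(\proj v)\ket v$. Since $X_{ss}=\bra{\psi_s}\Phi(\proj{\psi_s})\ket{\psi_s}$ with $\psi_s$ Haar-distributed, the first equality is literally this definition.

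There is no real obstacle here. The only point needing care is bookkeeping: the adjoint in the second tensor slot must pair with the swap to give $\Tr[K_i\adj{K_i}]$, and the result must be independent of the Kraus representation. That independence holds because both $\sum_i|\Tr K_i|^2$ and $\sum_i\adj{K_i}K_i$ are invariant under isometric mixing of the $K_i$.
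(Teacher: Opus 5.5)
Your proof is correct. The expansion $X_{ss}=\sum_i\Tr[(K_i\otimes\adj{K_i})\proj{v}^{\otimes2}]$ is right, as is the second moment $(\id+P_{(1\,2)})/(d(d+1))$. The identity term is $\sum_i|\Tr K_i|^2=d^2\Fe(\Phi)$, and the swap term is $\sum_i\Tr[K_i\adj{K_i}]=d$.

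One point of wording in your last step: the first equality is simply the definition of $\Favg$ as the Haar average of $\bra{v}\Phi(\proj{v})\ket{v}$. Your calculation then \emph{proves} the Horodecki--Nielsen relation $\Favg=(d\Fe+1)/(d+1)$ rather than invoking it.

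The paper gives no proof of this lemma and simply defers to \cite{BKR26}. Its own machinery offers a second route. Use the split \cref{eq:Xss-split} together with $\E_U[W_s]=0$. Then apply \Cref{ex:ell1} with $\Tr\Sigma_{\mathcal L}=d(1-\Fe)$, which is computed in the proof of \Cref{cor:fullTD}. This gives $\E_U[X_{ss}]=\Fe+\frac{1-\Fe}{d+1}=\frac{d\Fe+1}{d+1}$.

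Your version skips the scalar/traceless decomposition and evaluates both elements of $S_2$ directly on the full Kraus operators. That is shorter and needs no tracelessness. The split version shows more structure: the identity permutation contributes exactly the undetectable $\Fe$ piece, and only the pairing survives on the traceless remainder. That is the mechanism the paper exploits at higher moments, where fixed-point permutations vanish.
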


We now refine the first-moment estimate of \Cref{lem:diag1} into a higher-moment bound that decays with the moment order and can therefore be combined with a union bound over the adversarial family.

Note that since $X_{ss}\in[0,1]$, this already gives a bound $\E[(X_{ss})^\ell]\le\Favg(\Phi)$.
However, such an $\ell$ independent bound is too weak for a union bound.
We need decay in the moment $\ell$. Recall that $\lbrace K_i \rbrace_i$ are the Kraus operators for the map $\Phi$. Write the decomposition $K_i=c_i I +L_i$ where $I$ is the identity operator, with $c_i=\frac{\Tr K_i}{d}$ and $L_i=K_i-c_i I $ the traceless part of the Kraus operator. Then
\begin{align*}
  X_{ss} & = \sum_{i} \big\vert \bra{\psi_s} K_i \ket{\psi_s} \big\vert ^2 \\
  & =  \sum_{i} \big\vert \bra{\psi_s} (c_i  I  + L_i)  \ket{\psi_s} \big\vert^2 \\
  & =  \sum_i \bra{\psi_s} (c_i I  + L_i)  \ket{\psi_s} \bra{\psi_s} \  (\overline{c_i} I  + \adj{L_i})  \ket{\psi_s} \\
  & = \sum_{i} \Big[ c_i \braket{\psi_s}{\psi_s} + \bra{\psi_s} L_i \ket{\psi_s} \Big] \Big[ \overline{c_i} \braket{\psi_s}{\psi_s} + \bra{\psi_s} \adj{L_i} \ket{\psi_s} \Big] \\
  & = \sum_{i} \vert c_i \vert^2 + c_i \bra{\psi_s} \adj{L_i} \ket{\psi_s} + \overline{c_i} \bra{\psi_s} {L_i} \ket{\psi_s} + \big\vert  \bra{\psi_s} L_i \ket{\psi_s}\vert^2 \\
  & = \sum_{i} \vert c_i \vert^2 + 2 \cdot \mathrm{Re}\left( \overline{c_i} \bra{\psi_s} {L_i} \ket{\psi_s}\right)  + \big\vert \bra{\psi_s} L_i \ket{\psi_s} \big\vert^2
\end{align*}
For ease of notation, define:
\begin{equation} \label{eq:def_W_and_Y}
  W_s=2\sum_i\mathrm{Re}\big(\overline{c_i}\bra{\psi_s} L_i \ket{\psi_s}\big) \quad \text{and}
  \qquad
  Y_s=\sum_i\big|\bra{\psi_s} L_i \ket{\psi_s}\big|^2.
\end{equation}

Thus, since the entanglement fidelity is $\sum_i|c_i|^2=\Fe(\Phi)$, we have the decomposition of $X_{ss}$ into the entanglement fidelity and two additional terms:
\begin{equation}\label{eq:Xss-split}
  X_{ss}=\Fe+W_s+Y_s.
\end{equation}
Two remarks are in order for \cref{eq:Xss-split}.
The second term $W_s$ has mean zero, $\E_U[W_s]=0$: for each traceless $L_i$,
\[
  \E_U\big[\bra{\psi_s}L_i\ket{\psi_s}\big]
  =\Tr\!\big[L_i\,\E_U[\proj{\psi_s}]\big]
  =\frac{\Tr L_i}{d}=0,
\]
using $\E_U[\proj{\psi_s}]= I /d$ (the $\ell=1$ case of \Cref{lem:sym-moment}).
The third term $Y_s$ collects the diagonal parts.
Because $\psi_s=U\ket s$ is Haar-distributed for every fixed $s$, the laws of $W_s$ and $Y_s$ under $U$ do not depend on $s$; we drop the subscript henceforth, writing $W$ and $Y$ for the same functionals of a generic Haar vector $v$.

\begin{lemma}[Rank-free diagonal bound]\label{cor:fullTD}
  Let $\Phi:\mathsf{L}(\C^d)\to\mathsf{L}(\C^d)$ be any quantum channel with Kraus operators $\{K_i\}_{i=1}^{r}$, let $c_i=\frac{\Tr K_i}{d}$ and $L_i=K_i-c_i I$ be their traceless parts, and let $Y_s$ be as in \cref{eq:def_W_and_Y}.
  Then for every $s\in\mathcal M$ and every $\ell\ge1$,
  \[
  \E_U\big[Y_s^{\ell}\big]\;\le\;\Big(\frac{4\ell^{2}}{d}\Big)^{\ell}.\]
\end{lemma}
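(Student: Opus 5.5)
The plan is to reduce the statement directly to \Cref{thm:master}, applied to the traceless family $\cL=\{L_i\}_{i=1}^{r}$ built from the Kraus operators of $\Phi$.

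First, note that $Y_s=Y_{\cL}(\psi_s)$ in the notation of \Cref{def:traceless-family}. Since $\psi_s=U\ket s$ with $U$ Haar random, the vector $\psi_s$ is Haar-distributed on $\mathbb S(\C^d)$ for every fixed $s$. Hence
\[
\E_U[Y_s^\ell]=\E_{v\leftarrow\mathbb S(\C^d)}[Y_{\cL}(v)^\ell].
\]
Each $L_i$ is traceless by construction, so \Cref{thm:master} applies and gives
\[
\E_U[Y_s^\ell]\le(\Tr\Sigma_{\cL})^\ell\Big(\frac{2\ell}{d}\Big)^{2\ell}.
\]

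It then remains to bound the total weight $\Tr\Sigma_{\cL}$, exactly as in \Cref{lem:identify}(iii), but now for Kraus operators instead of isometry slices. Expand
\[
\Tr[\adj{L_i}L_i]=\Tr[\adj{K_i}K_i]-\bar c_i\Tr K_i-c_i\overline{\Tr K_i}+d|c_i|^2=\Tr[\adj{K_i}K_i]-d|c_i|^2 .
\]
Summing over $i$ and using trace preservation $\sum_i\adj{K_i}K_i=I$ gives
\[
\Tr\Sigma_{\cL}=d-d\sum_i|c_i|^2=d(1-\Fe(\Phi))\le d .
\]
Substituting this into the master bound yields
\[
\E_U[Y_s^\ell]\le d^\ell\Big(\frac{2\ell}{d}\Big)^{2\ell}=\Big(\frac{4\ell^2}{d}\Big)^\ell,
\]
which is the claim.

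The only points needing care are the following.
\begin{itemize}
\item The binomial step inside \Cref{thm:master} uses \Cref{fact:binom} with $b=2\ell\le a=d+2\ell-1$, which holds for every $\ell\ge1$. The bound is trivially true whenever $4\ell^2\ge d$ and $Y_s\le 1$, so no extra hypothesis is required in that regime. In fact $Y_s\le 1$ may fail in general, but the master bound is unconditional anyway, so no case split is needed.
\item The result must not depend on the choice of Kraus representation. This is automatic here: the argument works for any fixed representation, the bound is uniform in it, and $\Fe$ itself is representation-independent.
\end{itemize}

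I therefore expect no real obstacle. All the heavy lifting, namely the vanishing of permutations with fixed points and the Hölder bound on derangements, is already done in \Cref{lem:C}.
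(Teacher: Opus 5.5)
Your proposal is correct and follows the paper's proof exactly. Like the paper, you identify $\E_U[Y_s^\ell]$ with the Haar moment of $Y_{\mathcal L}$ for the traceless Kraus parts, apply \Cref{thm:master}, and bound $\Tr\Sigma_{\mathcal L}=d(1-\Fe(\Phi))\le d$ via trace preservation. Your aside about $Y_s\le 1$ is unneeded (and $Y_s$ can in fact exceed $1$), but you rightly set it aside, since the master bound is unconditional.
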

\begin{proof}
  Recall the notations $Y_{\mathcal L}(v)=\sum_i\big|\bra v L_i\ket v\big|^2$ and $\Sigma_{\mathcal L}=\sum_i \adj{L_i}L_i$ from \Cref{def:traceless-family}.
  Each $L_i$ is traceless by construction, and $\psi_s=U\ket s$ is Haar-distributed on
  $\mathbb{S}(\C^{d})$, so $\E_U[Y_s^{\ell}]=\E_{v\leftarrow\mathbb{S}(\C^{d})}[Y^{\ell}_{\mathcal L}(v)]$
  and \Cref{thm:master} applies with $\mathcal L=\{L_i\}_{i=1}^{r}$. We first compute the total
  weight $\Tr\Sigma_{\mathcal L}$:
  \begin{align*}
    \Tr\big(\Sigma_{\mathcal L}\big)
    &=\sum_i\Tr\big[\adj{L_i}L_i\big]\\
    &=\sum_i\Big(\Tr[\adj{K_i}K_i]-c_i\overline{\Tr K_i}-\overline{c_i}\Tr K_i+d|c_i|^2\Big)
    &&\mbox{expanding $L_i=K_i-c_i I$}\\
    &=\sum_i\Tr[\adj{K_i}K_i]-d\sum_i|c_i|^{2}
    &&\mbox{$\Tr K_i=d\,c_i$}\\
    &=d-d\sum_i|c_i|^{2}
    &&\mbox{$\sum_i \adj{K_i}K_i= I$}\\
    &=d-\frac{1}{d}\sum_i\big|\Tr K_i\big|^{2}
    &&\mbox{$c_i=\Tr K_i/d$}\\
    &=d\,(1-\Fe)
    &&\mbox{by \cref{eq:Fe}}\\
    &\le d
    &&\mbox{$\Fe\ge0$.}
  \end{align*}
  Hence, by \Cref{thm:master},
  \[
    \E_U\big[Y_s^{\ell}\big]\; \le\big(\Tr\Sigma_{\mathcal L}\big)^{\ell}\Big(\frac{2\ell}{d}\Big)^{2\ell} \;\le\;d^{\ell}\Big(\frac{2\ell}{d}\Big)^{2\ell}
  =\Big(\frac{(2\ell)^{2}}{d}\Big)^{\ell}=\Big(\frac{4\ell^{2}}{d}\Big)^{\ell}.\]
\end{proof}

\begin{lemma}[Diagonal moment bound]\label{lem:diagmoment}
  Let $\Phi:\mathsf{L}(\C^d)\to\mathsf{L}(\C^d)$ be any quantum channel. For every $s\in\mathcal M$ and every $\ell\ge1$,
  \[
    \E_U\big[X_{ss}^{\ell}\big]\ \le\ 3^{\ell-1}\Bigg(\sqrt{\Fe(\Phi)}+\frac{2\ell}{\sqrt d}\Bigg)^{2\ell}.
  \]
\end{lemma}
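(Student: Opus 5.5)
We start from the decomposition \cref{eq:Xss-split}, $X_{ss}=\Fe+W_s+Y_s$, and bound each of the three pieces in terms of a single quantity. The obvious route is a power-mean inequality: for nonnegative $a,b,c$ one has $(a+b+c)^\ell\le 3^{\ell-1}(a^\ell+b^\ell+c^\ell)$. But $W_s$ is not nonnegative, so we first replace it by $|W_s|$. We will not sum the three $\ell$-th moments separately. Instead, the cleaner plan is to show pointwise that
\[
X_{ss}\le \Big(\sqrt{\Fe}+\sqrt{Y_s}\Big)^2,
\]
and then bound the moments of the right-hand side.

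The pointwise bound comes from Cauchy--Schwarz. We have $|W_s|\le 2\sum_i|c_i|\,|\bra{\psi_s}L_i\ket{\psi_s}|\le 2\sqrt{\sum_i|c_i|^2}\sqrt{\sum_i|\bra{\psi_s}L_i\ket{\psi_s}|^2}=2\sqrt{\Fe}\sqrt{Y_s}$. Hence $X_{ss}\le \Fe+2\sqrt{\Fe Y_s}+Y_s=(\sqrt{\Fe}+\sqrt{Y_s})^2$. Equivalently, $X_{ss}=\|c+y\|^2$ for the vector with components $c_i+\bra{\psi_s}L_i\ket{\psi_s}$, and this is just the triangle inequality.

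Next, raise to the $\ell$-th power. Using $(x+y)^{2\ell}$ with $x=\sqrt{\Fe}$ deterministic and $y=\sqrt{Y_s}$, Minkowski's inequality in $L^{2\ell}(U)$ gives
\[
\big(\E_U[X_{ss}^\ell]\big)^{1/(2\ell)}\le \sqrt{\Fe}+\big(\E_U[Y_s^\ell]\big)^{1/(2\ell)}.
\]
By \Cref{cor:fullTD}, $\E_U[Y_s^\ell]\le (4\ell^2/d)^\ell$, so $(\E_U[Y_s^\ell])^{1/(2\ell)}\le 2\ell/\sqrt d$. This yields $\E_U[X_{ss}^\ell]\le(\sqrt{\Fe}+2\ell/\sqrt d)^{2\ell}$, which is stronger than the claim since $3^{\ell-1}\ge1$.

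If one prefers to match the stated form exactly, the alternative is to apply $(a+b+c)^\ell\le3^{\ell-1}(a^\ell+b^\ell+c^\ell)$ to $\Fe$, $|W_s|$ and $Y_s$. One then bounds $\E|W_s|^\ell\le 2^\ell\Fe^{\ell/2}\E[Y_s^{\ell/2}]\le 2^\ell\Fe^{\ell/2}(2\ell/\sqrt d)^{\ell}$ via Jensen and \Cref{cor:fullTD}, and recognizes the three terms as bounded by the binomial expansion of $(\sqrt{\Fe}+2\ell/\sqrt d)^{2\ell}$.

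The only potentially delicate point is the sign of $W_s$, which prevents a naive moment bound. The Cauchy--Schwarz step handles it, and everything else is the rank-free diagonal bound already proved. One caveat: \Cref{cor:fullTD} is stated for integer $\ell$, while the alternative route uses the half-integer moment $\ell/2$. The Minkowski route avoids this, since it needs only the integer moment $\ell$.
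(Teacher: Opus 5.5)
Your proof is correct. Your fallback route is exactly the paper's proof. The paper applies $(a+b+c)^\ell\le 3^{\ell-1}(a^\ell+b^\ell+c^\ell)$ to $\Fe$, $|W_s|$ and $Y_s$, bounds $|W_s|\le 2\sqrt{\Fe}\sqrt{Y_s}$ by Cauchy--Schwarz, and uses $\E_U[Y_s^{\ell/2}]\le(\E_U[Y_s^{\ell}])^{1/2}$. It then recombines the three terms into the perfect square via $a^\ell+b^\ell+c^\ell\le(a+b+c)^\ell$.

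Your caveat about half-integer moments is therefore moot. Jensen reduces $\E_U[Y_s^{\ell/2}]$ to the integer moment of order $\ell$, so \Cref{cor:fullTD} is only ever invoked at order $\ell$.

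Your main route differs only in how the three pieces are combined. You observe that $X_{ss}=\sum_i|c_i+\bra{\psi_s}L_i\ket{\psi_s}|^2$, so the perfect square appears pointwise, $\sqrt{X_{ss}}\le\sqrt{\Fe}+\sqrt{Y_s}$, before any expectation is taken. Minkowski in $L^{2\ell}(U)$, with the deterministic summand $\sqrt{\Fe}$, then gives $\E_U[X_{ss}^\ell]\le(\sqrt{\Fe}+2\ell/\sqrt d)^{2\ell}$.

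The paper's power-mean splitting pays a factor $3^{\ell-1}$ that the later recombination never recovers. Your route never splits, so it never pays that factor. Both arguments use the same two ingredients: Cauchy--Schwarz on the cross term and the rank-free bound on $\E_U[Y_s^\ell]$. Yours is shorter and sharper.

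The gain is cosmetic downstream. In \Cref{thm:fullTDfinal} it only replaces the constant $75$ in the bound on $R$ by $25$, which leaves the parameter ranges unchanged.
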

\begin{proof}
  First let us prove the following inequality:
  for all integers $\ell \geq 1$ and for all $a,b,c \geq 0$, we have,
  \begin{equation}\label{fact:power_convex}
    (a+b+c)^\ell
    \leq 3^{\ell-1}(a^\ell+b^\ell+c^\ell)
  \end{equation}
  Indeed, since $x \mapsto x^\ell$ is convex on $[0,\infty)$ for every integer $\ell \geq 1$, Jensen's inequality gives
  \[
    \left(\frac{a+b+c}{3}\right)^\ell
    \leq \frac{a^\ell+b^\ell+c^\ell}{3}.
  \]
  Multiplying both sides by $3^\ell$ yields \cref{fact:power_convex}.

  Recall that by \cref{eq:Xss-split},  $X_{ss}=\Fe+W_s+Y_s$.
  We want to use \cref{fact:power_convex} on this, but $W_s$ can be negative, hence we write a weaker inequality $X_{ss} \leq \Fe+ \vert W_s \vert +Y_s$.
  Note that $\Fe$ and $Y_s$ are non-negative by definition.
  Using \cref{fact:power_convex} followed by taking expectation over $U$, we get,
  \begin{equation}\label{eq:threeterm}
    \E_U\big[(X_{ss})^{\ell}\big]\ \le\ 3^{\ell-1}\Big(\Fe^{\ell}+\E_U\big[|W_s|^{\ell}\big]+\E_U\big[Y_s^{\ell}\big]\Big).
  \end{equation}
  \Cref{cor:fullTD} already bounds the third term of \cref{eq:threeterm}.
  For the middle term, recall from \cref{eq:def_W_and_Y} that $W_s=2\sum_i\mathrm{Re}\big(\overline{c_i}\bra{\psi_s}L_i\ket{\psi_s}\big)$.
  So Cauchy--Schwarz gives,
  \[
    |W_s|
    \ \le\ 2\Big(\sum_i|c_i|^{2}\Big)^{1/2}\Big(\sum_i\big|\bra{\psi_s}L_i\ket{\psi_s}\big|^{2}\Big)^{1/2}
    \ =\ 2 \Fe^{\frac{1}{2}}\,Y_s^{\frac{1}{2}},
  \]
  using $\sum_i|c_i|^{2}=\Fe$ and the definition of $Y_s$.
  Hence,
  \begin{align*}
    \E_U\big[|W_s|^{\ell}\big]
    &\ \le\ 2^{\ell}\Fe^{\ell/2}\,\E_U\big[Y_s^{\ell/2}\big]
    &&\mbox{raising to the $\ell$-th power}\\
    &\ \le\ 2^{\ell}\Fe^{\ell/2}\,\left( \E_U\big[Y_s^{\ell}\big]\right)^{1/2}
    &&\mbox{Jensen for $x\mapsto x^{2}$}.
  \end{align*}
  Thus, putting this back in \cref{eq:threeterm},
  \begin{align*}
    \E_U\big[(X_{ss})^{\ell}\big]\ & \le\ 3^{\ell-1}\Bigg(\Fe^{\ell}+\ 2^{\ell}\Fe^{\ell/2}\,\left( \E_U\big[Y_s^{\ell}\big]\right)^{1/2}+\E_U\big[Y_s^{\ell}\big]\Bigg) \\
    & \le  3^{\ell-1}\Bigg(\Fe^{\ell}+\ 2^{\ell}\Fe^{\ell/2}\, \left( \frac{4 \ell^2}{d}\right)^{\ell/2} + \left( \frac{4 \ell^2}{d}\right)^\ell \Bigg) & \mbox{by \Cref{cor:fullTD}} \\
    & \leq 3^{\ell-1}\Bigg(\Fe^{\ell}+\ \left( 2 \sqrt{ \Fe {\frac{4 \ell^2}{d}}} \right)^\ell  + \left( \frac{4 \ell^2}{d}\right)^\ell \Bigg)  \\
    & \leq  3^{\ell-1}\Bigg(\Fe+\ \left( 2 \sqrt{ \Fe {\frac{4 \ell^2}{d}}} \right)  + \left( \frac{4 \ell^2}{d}\right) \Bigg)^\ell \\
    & = \ 3^{\ell-1} \left( \sqrt{\Fe} + \sqrt{{\frac{4 \ell^2}{d}}} \right)^{2 \ell} \\
    &  = \  3^{\ell-1}\Bigg(\sqrt{\Fe}+\frac{2\ell}{\sqrt d}\Bigg)^{2\ell}.
  \end{align*}
\end{proof}

\begin{theorem}[Tamper detection]\label{thm:fullTDfinal}
  Let $\alpha < \frac{1}{2}$ and $\delta>0$ be parameters such that the adversarial family of \CPTP maps on $\mathbb{C}^d$, denoted as  $\Fadv$, satisfies the following:
  \begin{itemize}
    \item $ |\Fadv|\le2^{d^{\alpha}}$,
    \item $\Fe(\Phi)\le d^{-\delta}\ \text{ for every }\Phi\in\Fadv $.
  \end{itemize}
  Let  $(\Enc_U,\Dec_U)$ be the Haar random encoding-decoding strategy on $n$ qubits
  (\Cref{def:enc}) with the expansion factor $\gamma$ be as follows:
  \[
    \ \gamma>\max\Big\{\tfrac{2}{1-\alpha},\ \tfrac{1}{\delta},\ \tfrac{1}{1-2\alpha}\Big\}.
  \]
  Then there is $n_0=n_0(\alpha,\delta,\gamma)$ such that for all
  $n\ge n_0$,
  \[
    \Pr_U\big[(\Enc_U,\Dec_U)\text{ is $\varepsilon$-tamper-detecting against }\Fadv\big]\ \ge\ 1-\negl(k),
  \]
  where $\varepsilon = \Theta\left( 2^{-k}\right)$.
\end{theorem}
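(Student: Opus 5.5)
The probability that the decoder does not abort is $\sum_{t}X_{st}=X_{ss}+\sum_{t\ne s}X_{st}$, so I will bound the two pieces separately and add them. The off-diagonal sum needs no new work. Since $\gamma>\tfrac{2}{1-\alpha}$ and $|\Fadv|\le 2^{d^\alpha}$, \Cref{thm:utd} gives the following: except with probability $2^{-\Omega(nd^\alpha)}$ over Haar $U$, simultaneously for all $s\in\M$ and $\Phi\in\Fadv$,
\[
\sum_{t\ne s}X_{st}\le 2^{-k}.
\]
It therefore remains to show that, with overwhelming probability, $X_{ss}\le \tau$ holds for every $s$ and $\Phi$ with threshold $\tau=\Theta(2^{-k})$. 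That gives $\varepsilon\le \tau+2^{-k}=\Theta(2^{-k})$.

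For the diagonal I will use \Cref{lem:diagmoment} with Markov at order $\ell$ and a union bound over the $2^k$ messages and the $|\Fadv|$ channels:
\[
\Pr_U\big[\exists s,\Phi:\ X_{ss}>\tau\big]\ \le\ 2^{k}\,2^{d^\alpha}\,3^{\ell-1}\Big(\frac{(\sqrt{\Fe}+2\ell/\sqrt d)^2}{\tau}\Big)^{\ell}.
\]
The moment order is $\ell=\lceil d^{\beta}\rceil$ with $\alpha<\beta<\tfrac12-\tfrac{1}{2\gamma}$. This window is nonempty precisely because $\gamma>\tfrac{1}{1-2\alpha}$. I then check the two contributions inside the square against $\tau$:
\begin{itemize}
\item By hypothesis $\Fe\le d^{-\delta}=2^{-\delta\gamma k}$, and $\gamma>1/\delta$ gives $\delta\gamma>1$, so $\Fe\le 2^{-(1+c)k}$ for some $c>0$.
\item Similarly $4\ell^2/d\approx 4d^{2\beta-1}=2^{-(1-2\beta)\gamma k}$, and the choice of $\beta$ gives $(1-2\beta)\gamma>1$.
\end{itemize}
Hence
\[
(\sqrt{\Fe}+2\ell/\sqrt d)^2\le 2\Fe+8\ell^2/d\le 2^{-(1+c')k}
\]
for some constant $c'>0$ and all large $n$.

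Take $\tau=2^{-k}$. The ratio inside the $\ell$-th power is then at most $2^{-c'k}$, and with the factor $3$ the base is at most $3\cdot 2^{-c'k}\le 2^{-c'k/2}$ for large $k$. The union bound becomes
\[
\Pr_U\big[\exists s,\Phi:\ X_{ss}>\tau\big]\le 2^{\,k+d^\alpha-(c'k/2)d^{\beta}},
\]
which is negligible because $\beta>\alpha$ and $k=n/\gamma\to\infty$. This is the step I expect to need the most care. The polynomial factor $3^{\ell-1}$ and the union-bound cost $2^{d^\alpha}$ are absorbed only because the exponent beats $d^\alpha$ by the factor $k\,d^{\beta-\alpha}$. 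The window for $\beta$ must also be checked to stay compatible with the hypotheses on $\alpha$, $\delta$ and $\gamma$. If the constants are tight, I will fall back on a threshold $\tau=C\cdot 2^{-k}$ for a constant $C$, which the statement's $\Theta(2^{-k})$ allows.

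Finally, a union bound over the two bad events (off-diagonal and diagonal) shows that with probability $1-\negl(k)$ over $U$, every $s\in\M$ and every $\Phi\in\Fadv$ satisfy
\[
\Pr[\Dec_U(\Phi(\Enc_U(s)))\ne\perp]=\sum_t X_{st}\le 2\cdot 2^{-k}=\Theta(2^{-k}).
\]
This is exactly $\varepsilon$-tamper detection in the sense of \Cref{def:tamper-detection}, with $n_0$ chosen so that all the "sufficiently large $n$" absorptions above hold.
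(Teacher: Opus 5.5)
Your proposal is correct and follows essentially the paper's route: it bounds the off-diagonal sum via \Cref{thm:utd}, and the diagonal via \Cref{lem:diagmoment} with Markov at order $\ell$ and a union bound over the $2^k\cdot 2^{d^\alpha}$ pairs $(s,\Phi)$, ending with $\varepsilon=2\cdot 2^{-k}$. The only difference is cosmetic: you take $\ell=\lceil d^{\beta}\rceil$ with $\beta>\alpha$, whereas the paper takes $\beta=\alpha$, which already suffices because the extra factor of order $k$ in the exponent (the paper's $\zeta n$) absorbs the $2^{d^\alpha}$ union-bound cost, so the strict window $\beta>\alpha$ is not actually needed.
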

\begin{proof}
  Throughout, let
  \[
    \varepsilon_0=2^{-k},
    \qquad
    \xi=\min\Big\{\frac{\delta}{2},\ \frac12-\alpha\Big\}>0,
    \qquad
    \zeta=2\xi-\frac1\gamma .
  \]
  By the assumptions on $\gamma$,
  \[
    \frac{1}{\gamma}
    <
    \min\{\delta,1-2\alpha\}
    =
    2\xi,
  \]
  and therefore $\zeta>0$.

  \medskip \noindent
  By \Cref{def:overlap}, $\Pr\big[\Dec_U(\Phi(\Enc_U(s)))\ne\perp\big]=\sum_{t}X_{st}=X_{ss}+\sum_{t\ne s}X_{st}$. We bound each summand by $\varepsilon_0$, except with negligible probability over $U$; this gives $\sum_t X_{st}\le2\varepsilon_0$ for all $s$ and $\Phi$ simultaneously, i.e.\ the code is $2\varepsilon_0$-tamper-detecting with $2\varepsilon_0=\Theta(2^{-k})$.

  \medskip\noindent
  \emph{Off-diagonal branch.} Since $\alpha<\tfrac12<1$ and $\gamma>\tfrac{2}{1-\alpha}$, the hypotheses of \Cref{thm:utd} hold. Hence there is $n_0^{\mathrm{off}}=n_0(\alpha,\gamma)$ such that for all $n\ge n_0^{\mathrm{off}}$, except with probability $\negl(k)$ over $U$, the code is $\varepsilon_0$-relaxed-tamper-detecting against $\Fadv$; by \Cref{lem:dropdiag} this says exactly that
  \[
    \sum_{t\ne s}X_{st}\ \le\ \varepsilon_0
    \qquad\text{for every }s\in\mathcal M\text{ and every }\Phi\in\Fadv .
  \]

  \medskip\noindent
  \emph{Diagonal branch.} Fix the moment order $\ell=\lceil d^{\alpha}\rceil$, so that $2\le\ell\le2d^{\alpha}$ since $\alpha>0$. Applying Markov's inequality at order $\ell$ together with \Cref{lem:diagmoment},
  \[
    \Pr_U\big[X_{ss}>\varepsilon_0\big]
    \ \le\ \frac{\E_U\big[(X_{ss})^{\ell}\big]}{\varepsilon_0^{\ell}}
    \ \le\ R^{\ell},
    \qquad\text{where}\qquad
    R=\frac{3\big(\sqrt{\Fe}+2\ell/\sqrt d\big)^{2}}{\varepsilon_0}.
  \]
  We first bound $R$. Since $\Fe\le d^{-\delta}$ we have $\sqrt{\Fe}\le2^{-\delta n/2}\le2^{-\xi n}$, and since $\ell\le2d^{\alpha}$ we have $2\ell/\sqrt d\le4\cdot2^{(\alpha-\frac12)n}\le4\cdot2^{-\xi n}$; hence $\sqrt{\Fe}+2\ell/\sqrt d\le5\cdot2^{-\xi n}$. With $\varepsilon_0=2^{-n/\gamma}$,
  \[
    R\ \le\ 3\cdot25\cdot2^{-2\xi n}\cdot2^{\,n/\gamma}\ =\ 75\cdot2^{-\zeta n},
    \qquad\text{so}\qquad
    \log_2 R\ \le\ 7-\zeta n .
  \]
  Taking a union bound over the $2^{k}$ messages and the at most $2^{d^{\alpha}}$ channels, and using $\ell\ge d^{\alpha}$ together with $7-\zeta n\le0$ for $n\ge7/\zeta$,
  \begin{align*}
    \log_2\Pr_U\Big[\exists\,s,\Phi\ :\ X_{ss}>\varepsilon_0\Big]
    &\ \le\ k+d^{\alpha}+\ell\,\log_2 R
    &&\mbox{the union bound}\\
    &\ \le\ \frac{n}{\gamma}+d^{\alpha}+d^{\alpha}\big(7-\zeta n\big)
    &&\mbox{$k=\frac{n}{\gamma}$, $\ell\ge d^{\alpha}$}\\
    &\ =\ -\,\zeta\,n\,d^{\alpha}+8d^{\alpha}+\frac{n}{\gamma}.
  \end{align*}
  It remains to absorb the two positive terms.
  Define
  \[
    n_0^{\mathrm{diag}}=\max\Big\{\frac{32}{\zeta},\ \frac1\alpha\log_2\frac{4}{\gamma\,\zeta}\Big\},
  \]
  and let $n\ge n_0^{\mathrm{diag}}$.
  From $n\ge32/\zeta$ we get $8d^{\alpha}\le\tfrac14\zeta n\,d^{\alpha}$, and from $n\ge\tfrac1\alpha\log_2\tfrac4{\gamma\zeta}$ we get $d^{\alpha}=2^{\alpha n}\ge4/(\gamma\zeta)$, which on multiplying both sides by $\tfrac14\zeta n$ gives $\tfrac n\gamma\le\tfrac14\zeta n\,d^{\alpha}$.
  Substituting these two bounds,
  \begin{align*}
    \log_2\Pr_U\Big[\exists\,s,\Phi\ :\ X_{ss}>\varepsilon_0\Big]
    &\ \le\ -\zeta n\,d^{\alpha}+\tfrac14\zeta n\,d^{\alpha}+\tfrac14\zeta n\,d^{\alpha}\\
    &\ =\ -\tfrac12\,\zeta\,n\,d^{\alpha}.
  \end{align*}

  \medskip\noindent
  Setting $n_0=\max\{n_0^{\mathrm{off}},\,n_0^{\mathrm{diag}}\}$, both branches fail with probability at most $\negl(k)$ for $n\ge n_0$. Hence, except with probability $\negl(k)$ over $U$, we have $\sum_t X_{st}\le2\varepsilon_0$ for every $s\in\mathcal M$ and every $\Phi\in\Fadv$, so $(\Enc_U,\Dec_U)$ is $\varepsilon$-tamper-detecting with $\varepsilon=2\varepsilon_0=\Theta(2^{-k})$.
  Since $n_0$ depends only on $\alpha$, $\delta$ and $\gamma$, the expansion factor satisfies $\gamma=\mathcal{O}(1)$ whenever $\alpha$ and $\delta$ are constants.
\end{proof}
    \section{Revocable keyless encryption}\label{sec:rev}

A revocable encryption scheme lets the sender demand the ciphertext back: if the returned state passes verification, the recipient is left with no information about the message, even if he becomes computationally unbounded afterwards~\cite{Unr15}.
We show that the scheme of \Cref{thm:pq-main}, without any modification, is such a scheme, and that it achieves guarantees analogous to those of the revocable timed-release encryption of~\cite{Unr15}.

\subsection{Depth-bounded adversaries}\label{sec:depth}

We measure elapsed time in circuit layers: a recipient executing $R$ layers per unit time has applied a circuit of depth at most $p$ by time $p/R$.
Depth is the right measure here, because a timing assumption must be invariant under parallelization: a gate count is not, since a recipient with many processors executes many gates per unit time.
However, our results are valid also for gate count as a measure of time.

\begin{definition}[Depth-bounded phase-1 circuits]\label{def:depth}
  For $p,w\in\N$, let
  \[
    \cD_{p,w}
    =
    \Big\{\,V:\C^d_C\otimes\ket{0}_{E_0}\to\C^d_C\otimes\Hs_E
    \;:\;V\text{ is implemented by a $\G$-circuit of depth at most }p\,\Big\},
  \]
  where the circuit acts on the $n$ qubits of $C$ together with at most $w$ ancillas.
\end{definition}

Throughout this section, $\cD_{p,w}$ replaces $\cG_p$ in \Cref{def:adversary} and in \cref{def:momentorder}.
This is the only point at which the adversary class enters the analysis: every later step of the proof of \Cref{thm:pq-main}.

\begin{lemma}[Depth counting]\label{lem:count}
  The number of phase-1 circuits of depth at most $p$ on $n+w$ qubits satisfies
  \[
    \log_2|\cD_{p,w}|=O\big(p\,(n+w)\log(n+w)\big).
  \]
  Hence, \cref{def:momentorder} satisfies with $\ell=O\big(n+p(n+w)\log(n+w)\big)$, which is $\poly(n,p,w)$.
\end{lemma}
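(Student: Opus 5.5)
The plan is a layer-by-layer counting argument, mirroring the proof of \Cref{fact:gatecount}. Set $N=n+w$ for the number of wires. A circuit of depth at most $p$ is a sequence of at most $p$ layers, where a layer is a set of gates from $\G$ acting on pairwise disjoint sets of at most $c$ qubits each. Since each isometry in $\cD_{p,w}$ is realized by at least one such description, it suffices to count the descriptions; as before, several descriptions may give the same isometry, which only helps.

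\emph{Counting one layer.} In a single layer, each wire is either idle or belongs to exactly one gate. I will encode a layer by assigning to every wire $j\in[N]$ a label recording the gate type, the position of $j$ among that gate's inputs, and the smallest wire index of its gate. This label lies in a set of size at most $1+|\G|\cdot c\cdot N$. Given all $N$ labels, the layer is recovered uniquely, so the number of layers is at most $(1+c|\G|N)^{N}$, that is, $2^{O(N\log N)}$ with constants depending on $c$ and $|\G|$. An alternative is to list the gates directly: there are at most $N$ gates per layer, each specified by at most $|\G|N^c$ choices, giving at most $(1+|\G|N^c)^N$ layers. This yields the same bound $2^{O(cN\log N)}$, since $c$ is a constant.

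\emph{Counting circuits.} A circuit has between $0$ and $p$ layers, so
\[
  |\cD_{p,w}|\le\sum_{j=0}^{p}\big(2^{O(N\log N)}\big)^j\le(p+1)\,2^{O(pN\log N)}.
\]
Taking logarithms gives $\log_2|\cD_{p,w}|\le\log_2(p+1)+O(p(n+w)\log(n+w))=O(p(n+w)\log(n+w))$. The choice of which wires carry the $n$ code qubits and which carry the ancillas is fixed by convention (or costs at most $\binom{N}{n}\le 2^N$), so it is absorbed into the same bound. Substituting into \cref{def:momentorder} with $\cD_{p,w}$ in place of $\cG_p$ gives $\ell=\lceil k+\log_2|\cD_{p,w}|+n\rceil=O(n+p(n+w)\log(n+w))$, using $k\le n$. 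This is $\poly(n,p,w)$.

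The only delicate point is the per-layer count. A naive bound of the form ``choose up to $N$ gates, each anywhere'' must still give an exponent of $O(N\log N)$ and not worse. Both encodings above achieve this, so I expect no real obstacle. The remainder is bookkeeping identical to \Cref{fact:gatecount}.
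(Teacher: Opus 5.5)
Your proposal is correct and takes essentially the same route as the paper: bound the number of distinct layers on $N=n+w$ wires by $2^{O(N\log N)}$, raise this to the $p$-th power, and substitute the result into the moment order. Your per-wire labelling records each wire's position among its gate's inputs, and you keep the explicit $(p+1)$ prefactor, so you are slightly more careful than the paper's partition-into-blocks count, but these refinements only change constants absorbed in the $O(\cdot)$.
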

\begin{proof}
  Put $N=n+w$.
  A layer is specified by a partition of the $N$ wires into blocks of size at most $c$, together with a choice of gate from $\G$ for each block.
  Each wire selects one of at most $N$ blocks, and there are at most $N$ blocks, so the number of distinct layers is at most $N^{N}|\G|^{N}=2^{O(N\log N)}$.
  A circuit of depth at most $p$ is a sequence of at most $p$ layers, and different circuit descriptions may implement the same isometry, so
  \[
    \log_2|\cD_{p,w}|\le p\,N\log_2\big(|\G|N\big)=O\big(p\,(n+w)\log(n+w)\big).
  \]
  The bound on $\ell$ then follows from \cref{def:momentorder}.
\end{proof}

A phase-1 circuit with $w$ ancillas induces a channel $\Phi_V$ of Kraus rank up to $2^{w}$, and $w$ is a free parameter here, not tied to the depth $p$.
The terms in the proof of \Cref{thm:pq-main} are independent of that rank (unlike proof of \cite{BKR26}), the contributions of the slices being combined through \cref{eq:slice-tp} rather than counted term by term, as observed in \Cref{subsec:moment}.
A bound carrying a factor $r^{\ell}$ in the Kraus rank, of the kind discussed after \Cref{lem:rankfree}, would be vacuous for all but logarithmic $w$, so wide holders are admissible only because no such factor appears.

\subsection{The revocation experiment}\label{sec:expt}

A \emph{holder} is a two-phase adversary $\A=(\A^{\mathrm{msg}},V,\Lambda)$ in the sense of \Cref{def:adversary}, with $V=V(\pp)\in\cD_{p,w}$.
The isometry $V$ is everything the holder does while he holds the codeword, and the unbounded channel $\Lambda$ is what he does after returning it.
A \emph{verification procedure} $\Ver(\pp,\cdot)$ is a two-outcome measurement on $C$, with outcomes $\top$ (accept) and $\perp$ (reject).

Fix a scheme $\Pi$ with verifier $\Ver$, a holder $\A$, and $b\in\{0,1\}$.
Let $G$ be a classical register recording the verifier's outcome, and let $P$ be a classical register holding the public string $\pp$.

\begin{center}
  \fbox{
    \begin{minipage}{0.92\textwidth}
      $\Expt^{\REV}_{\Pi,\A}(\lambda,k,b)$:
      \begin{enumerate}\itemsep2pt
        \item $\pp\leftarrow\Setup(1^\lambda,1^k)$.
        \item $(m_0,m_1,\rho_S)\leftarrow\A^{\mathrm{msg}}(\pp)$.
        \item $\rho_C\leftarrow\Enc(\pp,m_b)$.
        \item $\rho_{CE}\leftarrow V\rho_C\adj V$.
        \item The holder returns $C$, and $g\leftarrow\Ver(\pp,\rho_C)$.
        \item Output the state $\big(\pp,\ g,\ \Lambda(\rho_E),\ \rho_S\big)$ on $P\otimes G\otimes E'\otimes S$.
      \end{enumerate}
    \end{minipage}}
\end{center}
Before the phase-2 channel $\Lambda$ is applied, the joint state on $G\otimes E$ is a cq state of the form
\begin{equation}\label{eq:revstate}
  \sigma^{\REV}_{\pp,V}(m)\;=\;\sum_{g\in\{\top,\perp\}}\proj{g}_G\otimes\sigma^{\REV}_g(m),
\end{equation}
where each $\sigma^{\REV}_g(m)$ is a subnormalized state on $E$ and $\Tr\sigma^{\REV}_\top(m)$ is the probability that the returned register is accepted when $m$ is encoded.
We write
\begin{equation}\label{eq:revacc}
  \tilde\sigma^{\REV}_{\pp,V}(m)\;=\;\proj{\top}_G\otimes\sigma^{\REV}_\top(m)
\end{equation}
for the restriction of \cref{eq:revstate} when the verifier accepts.

\begin{definition}[Revocable keyless encryption]\label{def:rev}
  A coding scheme $\Pi$ with a verification procedure $\Ver$ is a \emph{$(p,q)$-revocable keyless encryption scheme against width $w$} if the following three conditions hold.

  \medskip\noindent\textbf{(R) Release.} $\Dec(\pp,\cdot)$ is implemented by a circuit of depth at most $q(\lambda,k)$, and
  \[
    \Pr\big[\Dec(\pp,\Enc(\pp,m))=m\big]=1
  \]
  for every $\lambda,k$, every $\pp$ in the image of $\Setup(1^\lambda,1^k)$ and every $m\in\M$.

  \medskip\noindent\textbf{(RC) Revocation completeness.} An honestly returned codeword is accepted:
  \[
    \Pr\big[\Ver(\pp,\Enc(\pp,m))=\top\big]=1
  \]
  for every $\lambda,k$, every $\pp$ in the image of $\Setup(1^\lambda,1^k)$ and every $m\in\M$.

  \medskip\noindent\textbf{(RH) Revocable hiding.} For every holder $\A$ with $V\in\cD_{p,w}$,
  \[
    \E_{\pp}\Big\|\tilde\sigma^{\REV}_{\pp,V}(m_0)-\tilde\sigma^{\REV}_{\pp,V}(m_1)\Big\|_1\;\le\;\negl(\lambda),
  \]
  where $V=V(\pp)$ and $(m_0,m_1)$ are those chosen by $\A$ on input $\pp$.
\end{definition}

Only the accepting branch is constrained.
On the rejecting branch the holder may keep the codeword, so no guarantee is possible there; what revocation provides is that the sender learns the attempt failed.
The definition of~\cite{Unr15} is restricted in the same way.

The verifier outputs a single bit and no decoded message, so no analogue of the coarsening $\same$ of \Cref{sec:our_primitive} is needed.
Since $g$ is released to the distinguisher, the acceptance probability must itself be message-independent up to $\negl(\lambda)$, and the trace distance in (RH) imposes this implicitly, exactly as clause \textup{(ES)} of \Cref{def:pq} does for the non-abort probability.

\subsection{The scheme of \texorpdfstring{\Cref{thm:pq-main}}{Theorem 1} is revocable}\label{sec:revthm}

\begin{fact}[\cite{Wat18}]\label{fact:dpi}
  $\|\Phi(X)\|_1\le\|X\|_1$ for every quantum channel $\Phi$ and every Hermitian operator $X$.
\end{fact}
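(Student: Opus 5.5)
The plan is the standard Jordan-decomposition argument. It uses only two properties of $\Phi$: that it maps positive operators to positive operators, and that it preserves the trace. Complete positivity is not needed, because $X$ is assumed Hermitian.

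First, I will diagonalize $X=\sum_j x_j\proj{e_j}$ and split it into its positive and negative parts,
\[
  P=\sum_{j:\,x_j>0}x_j\proj{e_j},\qquad N=\sum_{j:\,x_j<0}|x_j|\proj{e_j},
\]
so that $X=P-N$ with $P,N\succeq0$ and $PN=0$. Since $|X|=P+N$, this gives $\|X\|_1=\Tr|X|=\Tr P+\Tr N$.

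Second, by linearity $\Phi(X)=\Phi(P)-\Phi(N)$, and the triangle inequality for the trace norm gives $\|\Phi(X)\|_1\le\|\Phi(P)\|_1+\|\Phi(N)\|_1$. Since $\Phi$ is positive, $\Phi(P)\succeq0$ and $\Phi(N)\succeq0$. For a positive semidefinite operator the trace norm equals the trace, so $\|\Phi(P)\|_1=\Tr\Phi(P)$, and trace preservation turns this into $\Tr P$; the same holds for $N$. Combining the two steps,
\[
  \|\Phi(X)\|_1\le\Tr P+\Tr N=\|X\|_1 .
\]

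I do not expect a genuine obstacle. The only point needing care is that the decomposition $X=P-N$ into orthogonally supported positive parts is where the Hermiticity hypothesis enters. This is exactly how the fact will be used in \Cref{sec:rev}, since there the operators $X$ are differences of subnormalized states, which are Hermitian.
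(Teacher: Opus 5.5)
Your proof is correct: the Jordan decomposition $X=P-N$, the triangle inequality, and positivity together with trace preservation of $\Phi$ give exactly $\|\Phi(X)\|_1\le\Tr P+\Tr N=\|X\|_1$. The paper gives no argument of its own and simply cites \cite{Wat18}, and yours is the standard proof of this fact. As you note, it uses only positivity and trace preservation, which is all the paper's single application needs: in \Cref{lem:coarse}, the classical relabelling $\chi\otimes\id_E$ is applied to a difference of subnormalized states, which is Hermitian.
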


\begin{lemma}\label{lem:coarse}
  Fix $\pp$ and a phase-1 isometry $V$, and let $\Ver(\pp,\cdot)$ be the measurement $\Dec(\pp,\cdot)$ followed by the map sending every $\hat m\in\M$ to $\top$ and $\perp$ to $\perp$.
  Let $\chi:\{\same\}\cup\M\cup\{\perp\}\to\{\top,\perp\}$ be given by $\chi(\perp)=\perp$ and $\chi(f)=\top$ otherwise, acting on the classical register $F$.
  Then, for every $m\in\M$,
  \[
    \sigma^{\REV}_{\pp,V}(m)=(\chi\otimes\id_E)\big(\sigma_{\pp,V}(m)\big),
    \qquad
    \tilde\sigma^{\REV}_{\pp,V}(m)=(\chi\otimes\id_E)\big(\tilde\sigma_{\pp,V}(m)\big),
  \]
  with $\sigma_{\pp,V}$ and $\tilde\sigma_{\pp,V}$ as in \cref{eq:branch-decomp,eq:nonabort}.
  Thus, for all $m_0,m_1\in\M$,
  \[
    \Big\|\tilde\sigma^{\REV}_{\pp,V}(m_0)-\tilde\sigma^{\REV}_{\pp,V}(m_1)\Big\|_1
    \;\le\;
    \Big\|\tilde\sigma_{\pp,V}(m_0)-\tilde\sigma_{\pp,V}(m_1)\Big\|_1 .
  \]
\end{lemma}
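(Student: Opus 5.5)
The plan is to prove the two identities by computing both sides branch by branch. The inequality then follows from data processing, \Cref{fact:dpi}. Fix $\pp=U$ and $V$. Before any measurement, the joint state after phase 1 is the pure state $V\ket{\psi_m}$ on $C\otimes E$. Both $\Dec$ and $\Ver$ act only on $C$, so the post-measurement state on the outcome register and $E$ is determined by the POVM elements applied to $C$.

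For $\Dec$ we already have the decomposition \cref{eq:branch-decomp}: the cq state $\sum_f\proj{f}_F\otimes\sigma_f(m)$. Here $\sigma_f(m)=\Tr_C[(\Pi_f\otimes\id_E)V\proj{\psi_m}\adj V]$, with $\Pi_{\same}=\Pi_m$, $\Pi_t$ for the other messages $t\ne m$, and $\Pi_\perp$ for abort.

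For $\Ver$, the definition says it is $\Dec$ followed by the classical relabelling $\hat m\mapsto\top$, $\perp\mapsto\perp$. Its POVM is therefore $\{\sum_{t\in\M}\Pi_t,\ \Pi_\perp\}$. By linearity of the partial trace,
\[
  \sigma^{\REV}_\top(m)=\sum_{t\in\M}\sigma_t(m)=\sigma_{\same}(m)+\sum_{t\ne m}\sigma_t(m),
  \qquad
  \sigma^{\REV}_\perp(m)=\sigma_\perp(m).
\]
The classical channel $\chi$ on $F$ maps $\proj{f}\mapsto\proj{\chi(f)}$. Applying $\chi\otimes\id_E$ to $\sigma_{\pp,V}(m)$ therefore collects exactly the flags $\same$ and $t\ne m$ into $\top$, and sends $\perp$ to $\perp$. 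This reproduces \cref{eq:revstate}, which is the first identity.

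For the second identity, I will note that $\tilde\sigma_{\pp,V}(m)$ contains only the branches $f\ne\perp$, all of which $\chi$ sends to $\top$. Hence $(\chi\otimes\id_E)(\tilde\sigma_{\pp,V}(m))=\proj{\top}\otimes\sigma^{\REV}_\top(m)=\tilde\sigma^{\REV}_{\pp,V}(m)$, matching \cref{eq:revacc}.

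Finally, $\chi\otimes\id_E$ is a quantum channel: a classical relabelling is CPTP, and tensoring with the identity keeps it CPTP. Linearity gives
\[
  \tilde\sigma^{\REV}_{\pp,V}(m_0)-\tilde\sigma^{\REV}_{\pp,V}(m_1)=(\chi\otimes\id_E)\big(\tilde\sigma_{\pp,V}(m_0)-\tilde\sigma_{\pp,V}(m_1)\big),
\]
and the argument is Hermitian, so \Cref{fact:dpi} gives the stated inequality.

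The only point needing care is bookkeeping. The register $F$ is defined relative to the encoded message, since its label $\same$ means ``outcome equals $m$''. I must check that the relabelling $\chi$ does not depend on $m$, which holds because it sends both $\same$ and every $t\ne m$ to $\top$. The partial-trace identity for the coarse-grained POVM is immediate, so I expect no real obstacle.
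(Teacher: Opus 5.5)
Your proposal is correct and follows essentially the same route as the paper. Both arguments observe that the two experiments apply the same measurement to $C$ and differ only in the classical relabelling $g=\chi(f)$. Hence the verifier's branches are sums of the decoder's branches, and the inequality follows from \Cref{fact:dpi} applied to the message-independent channel $\chi\otimes\id_E$. Your explicit partial-trace formula for $\sigma_f(m)$ and your check that $\chi$ does not depend on $m$ spell out details the paper leaves implicit.
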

\begin{proof}
  Steps 1--4 of $\Expt^{\REV}_{\Pi,\A}$ and $\Expt^{\NTDshort}_{\Pi,\A}$ coincide, and in both experiments the register $C$ is then measured with the decoder's measurement $\{\Pi_s\}_{s\in\M}\cup\{\Pi_\perp\}$.
  They differ only in how the outcome is recorded: $\Expt^{\NTDshort}_{\Pi,\A}$ records $f$, and $\Expt^{\REV}_{\Pi,\A}$ records $g=\top$ if $\hat m\in\M$ and $g=\perp$ otherwise.
  Since $f=\perp$ exactly when $\hat m=\perp$, we have $g=\chi(f)$, and therefore
  \[
    \sigma^{\REV}_{\pp,V}(m)=\sum_f\proj{\chi(f)}_G\otimes\sigma_f(m)=(\chi\otimes\id_E)\big(\sigma_{\pp,V}(m)\big),
  \]
  which gives the first identity.
  The branch $g=\top$ is the union of the branches $f\ne\perp$, which gives the second identity.
  The map $\chi\otimes\id_E$ is a quantum channel, classical on $F$, and is the same channel for both messages, so the inequality follows from \Cref{fact:dpi} applied to the Hermitian operator $\tilde\sigma_{\pp,V}(m_0)-\tilde\sigma_{\pp,V}(m_1)$.
\end{proof}

\begin{theorem}[Revocable keyless encryption]\label{thm:rev}
  Let $\Pi$ be the scheme constructed in \Cref{thm:pq-main} (with $\cD_{p,w}$ in place of $\cG_p$), and let $\Ver:=\Dec_U$, accepting if and only if the outcome lies in $\M$.
  Then $(\Pi,\Ver)$ is a $(p,q)$-revocable keyless encryption scheme against width $w$, where $q$ is the depth of $\Dec_U$.
  The error in (RH) is $2^{-\Omega(n)}$.
\end{theorem}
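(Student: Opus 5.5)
The plan is to check the three clauses of \Cref{def:rev} one at a time. Two of them follow from the construction. The third reduces to clause \textup{(ES)} of \Cref{thm:pq-main} through \Cref{lem:coarse}.

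\emph{Release and revocation completeness.} (R) is the perfect completeness already shown for the construction of \Cref{def:enc}: $\Dec_U$ applies $\adj U$ and measures. The depth requirement is met by taking $q$ to be the depth of the circuit implementing $\Dec_U$ from the $2\ell$-design of \Cref{sec:efficient-pq}, which is $\poly(n,p,w)$ by \Cref{lem:count}. (RC) follows immediately. An honest codeword $\ket{\psi_m}$ is decoded to $m\in\M$ with probability $1$, and $\Ver$ accepts on every outcome in $\M$, so it outputs $\top$ with probability $1$.

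\emph{Revocable hiding.} Fix a holder with $V=V(\pp)\in\cD_{p,w}$. \Cref{lem:coarse} bounds the accepting-branch distance pointwise in $\pp$ by $\|\tilde\sigma_{\pp,V}(m_0)-\tilde\sigma_{\pp,V}(m_1)\|_1$. I therefore only need the \textup{(ES)} estimate of \Cref{sec:secrecy}, rerun with $\cD_{p,w}$ in place of $\cG_p$.

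The key observation is that the proof in \Cref{sec:secrecy} uses the adversary class in exactly two places: the union bounds in \Cref{prop:relaxedgeneral} and in \Cref{lem:uniform}. Both require only that $\log_2$ of the class size be at most $\ell-k-n$. That holds once $\ell$ is defined by \cref{def:momentorder} with $|\cD_{p,w}|$, and \Cref{lem:count} shows this $\ell$ is still polynomial, so $2\ell<d$ for large $n$. All other ingredients are stated for an arbitrary isometry or channel. These are the identification \Cref{lem:identify}, which gives $\Tr\Sigma_{\cL}\le d$ regardless of the ancilla width because it uses only \cref{eq:slice-tp}, together with \Cref{thm:master}, \Cref{lem:rankfree} and their design versions \Cref{thm:masterapprox,lem:rankfreeapprox}. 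Consequently the Kraus rank $2^w$ never appears.

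On the good event (probability $\ge 1-2^{-n+1}$), this yields $\|\tilde\sigma_{\pp,V}(m_0)-\tilde\sigma_{\pp,V}(m_1)\|_1\le 4\eta+2\cdot 2^{-k}$. Averaging over $\pp$, with trace distance at most $2$ on the bad event, gives $2^{-\Omega(n)}$.

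The main obstacle is justifying that substituting $\cD_{p,w}$ is harmless. Concretely, I must verify that no step silently used the gate count $p$ bounding the ancilla dimension or the Kraus rank, and that the design order $2\ell$ and error $\varepsilon$ from \Cref{sec:efficient-pq} remain polynomial with the new $\ell$. Both checks reduce to \Cref{lem:count} and to the rank-free form of the moment bounds. A secondary check is that $V$ may depend on $\pp$. This is handled because both good events hold uniformly over all of $\cD_{p,w}$.
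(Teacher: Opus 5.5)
Your proposal is correct and follows the paper's proof. (R) and (RC) come from perfect completeness of the construction. (RH) comes from applying \Cref{lem:coarse} pointwise in $\pp$ and rerunning the \textup{(ES)} argument of \Cref{sec:secrecy} with $\cD_{p,w}$ in place of $\cG_p$, using \Cref{lem:count} to keep $\ell$ polynomial and the uniformity of the good events to handle $\pp$-dependent $V$. Your extra bookkeeping (the class entering only through the two union bounds, $\Tr\Sigma_{\cL}\le d$ being width-independent, and $q=\poly(n,p,w)$) spells out what the paper asserts in one line.
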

\begin{proof}
  (R) Perfect completeness holds for every $U$ in the support of the design, by the completeness of \Cref{thm:pq-main}.
  The depth of $\Dec_U$ is at most its size, which is $\poly(n,p)$ by \cref{eq:designsize}, so $q=\poly(n,p)$; since the designs of~\cite{MPSY24} are of linear depth, a sharper bound can be read off from that construction.

  \medskip\noindent
  (RC) By (R), $\Dec_U(\Enc_U(m))=m\in\M$ with probability one, so $\Ver$ accepts with probability one.

  \medskip\noindent
  (RH) Let $\A$ be a holder with phase-1 isometry $V=V(\pp)\in\cD_{p,w}$.
  For every $\pp$, \Cref{lem:coarse} bounds the quantity in (RH) by the quantity in clause \textup{(ES)} of \Cref{def:pq}, for the same $V$ and the same pair of messages.
  By \Cref{lem:count}, the moment order \cref{def:momentorder} remains $\poly(n,p,w)$ when $\cG_p$ is replaced by $\cD_{p,w}$, so the bound established in \Cref{sec:secrecy} is available for the class $\cD_{p,w}$.
  Taking the expectation over $\pp$ gives
  \[
    \E_{\pp}\Big\|\tilde\sigma^{\REV}_{\pp,V}(m_0)-\tilde\sigma^{\REV}_{\pp,V}(m_1)\Big\|_1
    \;\le\;
    \E_{\pp}\Big\|\tilde\sigma_{\pp,V}(m_0)-\tilde\sigma_{\pp,V}(m_1)\Big\|_1
    \;\le\;2^{-\Omega(n)} .
  \]
  The good events of \Cref{sec:secrecy} hold uniformly over the whole class, so the bound remains valid when $V$ is chosen as a function of $\pp$.
\end{proof}

A verifier who remembers $m_b$ may instead measure $\{\Pi_{m_b},\id-\Pi_{m_b}\}$.
The accepting branch then shrinks to the branch $f=\same$, and the quantity in (RH) becomes the term $(\mathrm I)$ of \cref{eq:es-split}, which is bounded in \Cref{sec:secrecy}.

    \paragraph{AI Disclosure.} No AI tools were used in the preparation of this manuscript. The only exception is the use of AI-assisted copy-editing for minor grammar and clarity improvements to text originally written by the authors.

    \paragraph{Acknowledgments.} 
    A.\,B.~and U.\,K~acknowledge the support of the Natural Sciences and Engineering Research Council of Canada (NSERC) (ALLRP-578455-2022, RGPIN-2022-05167), and of the Canada Research Chairs Program (CRC-2023-00173). D.\,R.~acknowledges funding by the ANR for the JCJC grants LINKS (No. ANR-23-CE47-0003), the T-ERC QNET (No. ANR24-ERCS-0008), the project QUANTINT, as well as the European Union's Horizon 2020 Research and Innovation Programme under QuantERA Grant Agreements No.~731473 and No.~101017733. 
    
    \bibliographystyle{bibtex/bst/alphaarxiv.bst}
    \bibliography{bibtex/bib/quasar-full.bib,
        bibtex/bib/quasar.bib,
        bibtex/bib/quasar-more-tamper-new.bib,
        bibtex/bib/quasar-more-merged.bib,
        bibtex/bib/quasar-more.bib}
\fi

\end{document}